 \newtheorem{thm}{Theorem}[section]
 \newtheorem{cor}[thm]{Corollary}
 \newtheorem{lem}[thm]{Lemma}
 \newtheorem{prop}[thm]{Proposition}
 \theoremstyle{definition}
 \theoremstyle{remark}
 \numberwithin{equation}{section}
\renewcommand{\le}{\leqslant} 
\renewcommand{\ge}{\geqslant}
\newcommand{\ple}{\prec}
\renewcommand{\imath}{i}
\newcommand{\dom}{\mathrm{dom}}
\newcommand{\one}{\mathbf{1}}
\renewcommand{\L}{\mathcal{L}}
\renewcommand{\d}{\Delta}
\newcommand{\C}{\mathcal{C}}
\newcommand{\R}{\mathcal{R}}
\newcommand{\s}{S}
\newcommand{\bbbone}{\one}
\renewcommand{\r}{R}
\newcommand{\F}{{\mathfrak F}}
\newcommand{\RR}{{\mathbb R}}
\newcommand{\ch}{{\mathcal H}}
\renewcommand{\S}{{\rm S}}
\renewcommand{\R}{{\rm R}}
\renewcommand{\i}{{\rm i}}
\newcommand{\e}{{\rm e}}
\newcommand{\G}{{\mathcal G}}
\newcommand{\scalprod}[2]{\langle{#1},{#2}\rangle}
\begin{document}

\title{On the irreversible dynamics emerging \\
from quantum resonances}

\address[M.~K\"onenberg]{Department of Mathematics and Statistics\\ Memorial University, St. John's, NL, Canada}
\author{M. K\"onenberg}
\curraddr{Fachbereich Mathematik\\ Universit\"at Stuttgart,  Stuttgart, Germany} 
\email{martin.koenenberg@mathematik.uni-stuttgart.de}
\address[M.~Merkli]{Department of Mathematics and Statistics\\ Memorial University, St. John's, NL, Canada}
\author{M. Merkli}
\email{merkli@mun.ca}
%


\date{\today}

\begin{abstract}
We consider the dynamics of quantum systems which possess stationary states as well as slowly decaying, metastable states arising from the perturbation of bound states. We give a decomposition of the propagator into a sum of a stationary part, one exponentially decaying in time and a polynomially decaying remainder. The exponential decay rates and the directions of decay in Hilbert space are determined, respectively, by complex resonance energies and by projections onto resonance states. Our approach is based on an elementary application of the Feshbach map. It is applicable to open quantum systems and to situations where spectral deformation theory fails. We derive a detailed description of the dynamics of the spin-boson model at  arbitrary coupling strength. 
\end{abstract}

\maketitle

\section{Introduction and main result}

\subsection{General setup}

Let $L_0$ be a self-adjoint operator on a Hilbert space $\mathcal H$ and consider 
\begin{equation}
\label{2}
L=L_0+\d I,
\end{equation}
where $\d\in\mathbb R$ is a perturbation parameter and $I$ is a self-adjoint operator. It is assumed that  $L$ is self-adjoint. We suppose that the spectrum of $L_0$ is absolutely continuous (possibly, but not necessarily semi-bounded) and that $L_0$ has finitely many eigenvalues $e$ with finite multiplicities $m_e$. All the eigenvalues of $L_0$ are embedded in the continuous spectrum. The general problem we consider is how the stability or partial stability or instability of these embedded eigenvalues under the perturbation affect the dynamics generated by $L$. One may readily incorporate into our results and proofs the case where $L_0$ has also isolated eigenvalues by using ordinary analytic perturbation theory on the corresponding subspaces.

In the setting of usual analytic perturbation theory \cite{Kato} an isolated eigenvalue $e$ of $L_0$ with  multiplicity $m_e$ splits, under perturbation, into a group of eigenvalues $E_{e,1},\ldots,E_{e,\ell_e}$ of $L$ ($1\le \ell_e\le m_e$), in the sense that $E_{e,j}=E_{e,j}(\d)\rightarrow e$ as $\d\rightarrow 0$, for $j=1,\ldots,\ell_e$. For fixed $e$, the sum of the multiplicities of the eigenvalues $E_{e,j}$ equals $m_e$. On the other hand, it is well known that {\em embedded} eigenvalues can be unstable, or partially stable, under perturbation. Instability means that $L$ does not have any eigenvalues in a neighbourhood of $e$ for small $\d$. Partial stability means that the embedded eigenvalue $e$ of $L_0$ splits into a group of eigenvalues of $L$ whose sum of multiplicities are strictly smaller than that of $e$. 

In this paper, we consider the situation where $L_0$ has unstable and partially stable eigenvalues, and where  the partially stable ones undergo a reduction to dimension one under perturbation. Namely, close to any eigenvalue $e$ of $L_0$, the operator $L$ either does not have any eigenvalue ($e$ unstable) or $L$ has exactly one simple eigenvalue $E_e$ close to $e$, meaning that $\lim_{\d\rightarrow 0}E_e=e$. It is supposed that all eigenvalues of $L$ are of this form. One may develop the arguments of the present paper in the more general setting where close to every $e$, $L$ has several eigenvalues $E_1,\ldots,E_\ell$ and each of them may be degenerate.  We do not do this here to keep the exposition simpler. 

The dependence of $E_e$ on $\d$ is not governed by usual analytic perturbation theory, since the unperturbed $e$ is an embedded eigenvalue of $L_0$. However, suitably modified expressions from analytic perturbation theory of isolated eigenvalues will still play a role in the present setting.  Let $P_e$ be the spectral projection of $L_0$ associated to the eigenvalue $e$. If $e$ was an isolated eigenvalue of $L_0$, then the first and the second order corrections (in $\d$) of eigenvalues would be given, according to analytic perturbation theory \cite{Kato}, by the eigenvalues of $P_e IP_e$ and of $P_e IP_e^\perp(L_0-e)^{-1}IP_e$, respectively.  

We assume that 
\begin{itemize}
\item[{\bf (A1)}] For all eigenvalues $e$ of $L_0$,
\begin{equation}
\label{n1}
P_e I P_e =0.
\end{equation}
\end{itemize}
{}For embedded $e$, the resolvent $P^\perp_e(L_0-e)^{-1}$ does not exist as a bounded operator, so $P_e IP_e^\perp(L_0-e)^{-1}IP_e$ is not defined, typically. Nevertheless, we can replace $e$ by $e-\i\epsilon$ and consider $\epsilon$ small. This suggests that the second order eigenvalue corrections to $e$ are linked to the {\em level shift operator}
\begin{equation}
\label{n2}
\Lambda_e = -P_e IP_e^\perp (L_0-e+\i0_+)^{-1} IP_e,
\end{equation}
where $\i0_+$ indicates the limit 
of the resolvent $(L_0-e+\i \epsilon)^{-1}$,
as $\epsilon\rightarrow 0_+$. The existence of the limit is guaranteed by assumption (A2)
below (take $\d=0$ in the resolvent in \eqref{nlap}). The operator $\Lambda_e$ is represented by an $m_e\times m_e$ matrix.

Let $Q$ be an orthogonal projection and denote
\begin{equation}
\label{n4}
R_z=(L-z)^{-1}\quad \mbox{and}\quad R_z^{Q}= (Q^\perp L Q^\perp-z)^{-1}\upharpoonright_{{\rm Ran} Q^\perp} . 
\end{equation} 
In the following, we denote by $C(\phi,\psi)$ a constant which is independent of $z$ and $\Delta$ but which may depend on $\phi, \psi\in \mathcal H$.
\begin{itemize}
\item[{\bf (A2)}] (Limiting Absorption Principle.)  
There is a dense set ${\mathcal D}\subset{\mathcal H}$ with ${\rm Ran}\, IP_e\subset\mathcal D$ ($\forall e$) and there is an $\alpha>0$ such that the following hold.\\
{\bf (1)} Let $S_e=\{ z\in{\mathbb C}_-\ :\ |{\rm Re}z-e|\le\alpha\}$.
Here, ${\mathbb C}_-$ denotes the (open) lower complex half  plane. For all $e$ and all $\phi,\psi\in{\mathcal D}$, we have 
\begin{equation}
\label{nlap}
\sup_{z\in S_e}| \tfrac{d^k}{dz^k} \scalprod{\phi}{R_z^{P_e}\psi}| \le  C(\phi,\psi)<\infty,\ \quad k=0,\ldots 3.
\end{equation}
{\bf (2)} Let $S_\infty=\{z\in{\mathbb C}_-\ :\ |{\rm Re}z-e|>\alpha \mbox{ for all $e$}\}$. For all $\phi,\psi\in{\mathcal D}$, we have 
\begin{equation}
\label{nlap1}
\sup_{z\in S_\infty}| \tfrac{d^k}{dz^k} \scalprod{\phi}{ R_z\psi}| \le  C(\phi,\psi)<\infty,\ \quad k=0,1. 
\end{equation}
\end{itemize}
Assumption  (A2)(1)
implies that $P_e I R_z^{P_e} IP_e$ extends as a twice continuously differentiable function 
to real $z$ with $|z-e|\le\alpha$. Moreover, the estimates \eqref{nlap} and \eqref{nlap1} with $k=0$ imply that $P^\perp_e LP^\perp_e$ and $L$ have purely absolutely continuous spectrum in the interval $(e-\alpha,e+\alpha)$ and in the region $\{x\in{\mathbb R}\ :\ |x-e|>\alpha \mbox{ for all $e$}\}$, respectively (see e.g. \cite{KoMeSo} or \cite{CFKS}, Proposition 4.1).
\begin{itemize}
\item[({\bf A3})] The operators $\tfrac{d}{d \Delta} P_e I R_z^{P_e} IP_e$  is bounded,
uniformly for $z\in{\mathbb C}_-$ with $|{\rm Re}z-e|\le \alpha$ and for all $|\Delta|\le \Delta_0$.
\end{itemize}
The Feshbach map associated to an orthogonal projection $Q$, applied to $L-z$, is defined by
\begin{equation}
\label{n3}
\F(L-z;Q)= Q( L-z- L R_z^Q L)Q.
\end{equation}
It follows from  assumptions (A1) and (A2)
that 
$$
\lim_{\epsilon\rightarrow 0_+} \F(L-E_e+\i\epsilon;P_e) \equiv \F(L-E_e;P_e) = P_e(e-E_e-\d^2 IR_{E_e-\i0_+}^{P_e}I)P_e.
$$
Let $\psi_{E_e}=\psi_{E_e}(\d)$ be s.t. $L\psi_{E_e}=E_e\psi_{E_e}$. Note that $P_e\psi_{E_e}\neq 0$ for otherwise $P^\perp_eLP^\perp_e\psi_{E_e}=E_e P_e^\perp\psi_{E_e}$, which cannot hold for small $\d$, since $P^\perp_eLP^\perp_e$ has purely 
absolutely continuous spectrum in a neighbourhood of $e$ due to Condition (A2)(1). We normalize $\psi_{E_e}$ as $\|P_e\psi_{E_e}\|=1$. 

By the isospectrality property of the Feshbach map (see Appendix \ref{FeshbachAppendix}, Proposition \ref{Feshprop}), we have 
\begin{equation}
\label{n5}
0=\F(L-E_e;P_e) P_e\psi_{E_e}=\big(\tfrac{e-E_e}{\d^2}-P_eIR_{E_e-\i 0_+}^{P_e}IP_e)P_e\psi_{E_e}.
\end{equation}
This, together with Conditions (A2)(1) and (A3), implies that $\xi_\d:= \tfrac{e-E_e}{\d^2}$ is bounded in $\d$ for small $\d$ and that $P_eIR_{E_e-\i 0_+}^{P_e}IP_e=\Lambda_e+O(|\d|+ |e-E_e|)=\Lambda_e+O(|\d|)$. On a suitable sequence $\d_n\rightarrow 0$, we have $\xi_{\d_n}\rightarrow \xi_0$ and $P_e\psi_{E_e}(\d_n)\rightarrow P_e\psi_0$  for some $\xi_0\in\mathbb R$ and some unit vector $P_e\psi_0$ (Bolzano-Weierstrass). Consequently, taking $\d\rightarrow 0$ in \eqref{n5} along this sequence gives
\begin{equation}
\label{nnn1}
\Lambda_eP_e\psi_0 = \xi_0P_e\psi_0,
\end{equation}
showing that $\xi_0$ is a real eigenvalue of $\Lambda_e$.  For ease of presentation, we assume the following. 

\medskip

\begin{itemize}
\item[{\bf (A4)}] (Fermi Golden Rule Condition.) The eigenvalues of all the level shift operators $\Lambda_e$ are simple. Moreover,

{\bf (1)} If $e$ is an unstable eigenvalue, then all the eigenvalues $\lambda_{e,0},\ldots,\lambda_{e,m_e-1}$ of $\Lambda_e$ have strictly positive imaginary part.

{\bf (2)} If $e$ is a partially stable eigenvalue, then $\Lambda_e$ has a single real eigenvalue $\lambda_{e,0}$. All  other eigenvalues  $\lambda_{e,1},\ldots,\lambda_{e,m_e-1}$ have strictly positive imaginary part.
\end{itemize}

\medskip

Under condition (A4)(2), the set $\{\xi_\d=\tfrac{e-E_e}{\d^2}\}$ for $\d$ small has a unique limit point $\xi_0$ and we have $\lambda_{e,0}=\xi_0$.
Having only simple eigenvalues, $\Lambda_e$ is diagonalizable and has the spectral representation
\begin{equation}
\label{n7}
\Lambda_e =\sum_{j=0}^{m_e-1} \lambda_{e,j}P_{e,j},
\end{equation}
where $P_{e,j}$ are the (rank one) spectral projections. We introduce the notation
\begin{equation}
\label{defple}
a\ple b,
\end{equation}
where $a$ is a complex number, a vector or a bounded operator and $b>0$, to mean that $|a|\le {\rm const.} b$, where $|\cdot|$ is the appropriate norm and $\rm const.$ is a constant which does not depend on the coupling parameter $\d$, nor on time $t$.

\begin{thm}[Resonance expansion of propagator]
\label{nthm1}
There is a constant $c>0$ s.t. for $0<|\Delta|<c$ the following holds. Denote the projection onto the eigenvalue $E_e$ of $L$ by $\Pi_{E_e}$. Let $t>0$, $\phi,\psi\in\mathcal D$ s.t. $L\phi$, $L\psi\in{\mathcal D}$. Then
\begin{eqnarray}
\label{nmainreseqn}
\scalprod{\phi}{\e^{\i t L}\psi} &=& \sum_{e\ {\rm partially\,  stable}}\Big\{  \e^{\i t E_e} \scalprod{\phi}{\Pi_{E_e}\psi} + \sum_{j=1}^{m_e-1} \e^{\i t (e+ \d^2 a_{e,j})}\scalprod{\phi}{\Pi'_{e,j}\psi} \Big\}\\
&& + \sum_{e\ {\rm unstable}} \sum_{j=0}^{m_e-1} \e^{\i t (e+ \d^2 a_{e,j})}\scalprod{\phi}{\Pi'_{e,j}\psi}  +R(t),
\nonumber
\end{eqnarray}
where
\begin{equation}
\label{nmthm1}
R(t) \ple \frac{1}{t}.
\end{equation}
The exponents $a_{e,j}$ and the operators $\Pi'_{e,j}$ are close to the spectral data of the level shift operator $\Lambda_e$, \eqref{n7}. Namely, 
\begin{equation}
\label{nmthm2}
a_{e,j} = \lambda_{e,j}+O(\d), \qquad \Pi'_{e,j} =P_{e,j}+O(\d).
\end{equation}
\end{thm}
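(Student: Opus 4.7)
My plan is to start from the contour representation
$$
\scalprod{\phi}{\e^{\i tL}\psi} \;=\; -\frac{1}{2\pi\i}\int_\RR \e^{\i tx}\scalprod{\phi}{R_{x-\i0}\psi}\,dx\qquad(t>0),
$$
obtained from $\e^{\i tL}=-\tfrac{1}{2\pi\i}\oint R_z\,\e^{\i tz}\,dz$ around $\sigma(L)\subset\RR$ and the observation that, for $t>0$, the $\RR+\i0$ piece vanishes after deforming to $\RR+\i\infty$. I would then localize with a smooth partition of unity $\{\chi_e\}_e\cup\{\chi_\infty\}$, with $\chi_e\in C_0^\infty$ equal to one near $e$ and supported in $(e-\alpha/2,e+\alpha/2)$, and handle the far and near-resonance regimes separately.

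On $\mathrm{supp}\,\chi_\infty$, assumption (A2)(2) provides one bounded $z$-derivative of $\scalprod{\phi}{R_{x-\i0}\psi}$ uniformly up to the real axis, so one integration by parts in $x$ yields $O(1/t)$; the decay at $\pm\infty$ is ensured by $L\phi,L\psi\in\mathcal D$ together with the identity $\scalprod{\phi}{R_z\psi}=-\tfrac{1}{z}\scalprod{\phi}{\psi}-\tfrac{1}{z^2}\scalprod{\phi}{L\psi}+\tfrac{1}{z^2}\scalprod{L\phi}{R_zL\psi}$. Near each eigenvalue $e$ I apply the Feshbach formula (Proposition \ref{Feshprop}),
$$
R_z \;=\; P_e^\perp R_z^{P_e}P_e^\perp \;+\;\bigl(P_e-\d\, R_z^{P_e}I\bigr)\,\F(L-z;P_e)^{-1}\bigl(P_e-\d\, I R_z^{P_e}\bigr).
$$
By (A2)(1) the first summand has three bounded $z$-derivatives on $\overline{S_e}$, so three integrations by parts bound its contribution by $O(1/t^3)\subset R(t)$.

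The heart of the argument is the $\F^{-1}$ term. Combining (A1), (A2)(1) and (A3), I would write
$$
\F(L-z;P_e) \;=\; (e-z)P_e + \d^2\Lambda_e + \d^2 G(z,\d),\qquad G,\,\partial_z G \;=\; O(\d+|z-e|),
$$
and invoke finite-dimensional analytic perturbation theory on the $m_e\times m_e$ matrix $\F(L-z;P_e)\!\upharpoonright\!\mathrm{Ran}\,P_e$. Since $\Lambda_e$ has simple eigenvalues $\lambda_{e,j}$ with rank-one projections $P_{e,j}$ by (A4), this matrix diagonalizes smoothly as $\sum_j \mu_{e,j}(z)\Pi^\F_{e,j}(z)$ with $\mu_{e,j}(z)=(e-z)+\d^2\bigl(\lambda_{e,j}+O(\d+|z-e|)\bigr)$ and $\Pi^\F_{e,j}(z)=P_{e,j}+O(\d+|z-e|)$. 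The implicit equation $\mu_{e,j}(z)=0$ then has a unique solution $z_{e,j}=e+\d^2 a_{e,j}$ near $e$ with $a_{e,j}=\lambda_{e,j}+O(\d)$; by (A4) it lies in the upper half plane except in the partially-stable case $j=0$, where $z_{e,0}=E_e\in\RR$.

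For the final extraction, I linearize $1/\mu_{e,j}(x-\i0)=-1/(x-z_{e,j})+r_{e,j}(x)$ on $\mathrm{supp}\,\chi_e$, with $r_{e,j}$ and $r_{e,j}'$ bounded uniformly in $\d$, and split $\chi_e=1-(1-\chi_e)$. The full-line integral $\int_\RR \e^{\i tx}/(x-z_{e,j})\,dx$ is evaluated by closing in the upper half plane, giving $2\pi\i\,\e^{\i tz_{e,j}}$ when $\Im z_{e,j}>0$, and by Sokhotski--Plemelj yielding the same value when $z_{e,0}=E_e\in\RR$ (interpreting the pole as $E_e+\i0$, consistent with the $-\i0$ prescription in $\mu_{e,0}$); the complementary $(1-\chi_e)$-integral, the $r_{e,j}$-piece, and the $z$-variations of $\Pi^\F_{e,j}$ and of the Feshbach pre/post-factors are smooth on $\mathrm{supp}\,\chi_e$ and contribute $O(1/t)$ by one IBP. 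After assembling signs and the overall prefactor $-1/(2\pi\i)$, this produces the claimed exponentials with $\Pi'_{e,j}=P_{e,j}+O(\d)$; for $j=0$ in the partially-stable case, Feshbach isospectrality, as already exploited in \eqref{n5}, identifies $\Pi'_{e,0}$ with the true eigenprojection $\Pi_{E_e}$. The main obstacle is precisely the uniformity in $\d$: since $z_{e,j}$ sits at distance $O(\d^2)$ from $\RR$, $|\mu_{e,j}(x-\i0)|\sim\d^2$ near $e$, and naive derivative estimates on $1/\mu_{e,j}$ would produce negative powers of $\d$; hypothesis (A3) is what keeps $r_{e,j}$ and $r_{e,j}'$ bounded independently of the coupling, and hence the constant in $R(t)\ple 1/t$ finite as $\d\to 0$.
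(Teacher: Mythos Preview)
Your overall strategy matches the paper's: represent the propagator by a resolvent integral, localize near each eigenvalue $e$, apply the Feshbach map with projection $P_e$, and extract the pole structure of $\F(L-z;P_e)^{-1}$. The paper keeps the contour at $\RR-\i w$, does one global integration by parts \emph{before} the Feshbach decomposition, uses sharp intervals $\G_e$, and extracts second-order poles via Cauchy's formula on rectangular contours closed into ${\mathbb C}_+$; your smooth-cutoff, real-axis variant is a reasonable alternative packaging. There are, however, two genuine gaps.

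First, your resonances $z_{e,j}$ are defined as solutions of $\mu_{e,j}(z)=e-z+\d^2 a_{e,j}(z)=0$, and for $j\ge1$ (or unstable $e$) you place them in ${\mathbb C}_+$. But $a_{e,j}(z)$, being built from $R_z^{P_e}$, is only defined for $z\in\overline{\mathbb C}_-$ by (A2); in this Mourre-type setting there is no analytic continuation into ${\mathbb C}_+$ available (avoiding the need for one is precisely the point of the paper). Hence $\mu_{e,j}$ has no zero in ${\mathbb C}_+$, and your implicit-function step does not make sense as stated. The paper instead \emph{freezes} $a_{e,j}(z)$ at $z=e$ and sets $z_{e,j}:=e+\d^2 a_{e,j}(e)$, which is simply a number in ${\mathbb C}_+$; the discrepancy $a_{e,j}(z)-a_{e,j}(e)$ is then absorbed into an $L^1$-controlled remainder (see \eqref{019}--\eqref{027}). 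Note also that even with this fix, for $j\ge1$ one only gets $r_{e,j}'\in L^1$ uniformly in $\d$ (its $L^\infty$ norm is $O(\d^{-2})$), which still suffices for one IBP but is not what you claimed; (A3) plays no role here, since it concerns $\d$-regularity of $A_z$, not $z$-regularity.

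Second, in the partially stable case $j=0$ the pole of $1/\mu_{e,0}$ at $z_{e,0}=E_e\in\RR$ does \emph{not} have residue $-1$: since $\mu_{e,0}'(E_e)=-1+\d^2 a_{e,0}'(E_e)$, the residue is $-1/(1-\d^2 a_{e,0}'(E_e))$. With your linearization $-1/(x-E_e)$ the remainder
\[
r_{e,0}(x)=\frac{1}{\mu_{e,0}(x-\i0)}+\frac{1}{x-E_e}
=\frac{-\d^2\,b(x)}{(x-E_e)(1-\d^2 b(x))},\qquad b(x)=\frac{a_{e,0}(x)-a_{e,0}(E_e)}{x-E_e},
\]
retains a simple pole at $E_e$ and is not bounded, so your IBP step fails there. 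The paper extracts the correct residue (see \eqref{012}--\eqref{015} and \eqref{032}), and this factor $1/(1-\d^2 a_{e,0}'(E_e))$ is exactly what later combines with the Feshbach pre/post-factors from ${\mathcal B}(z)$ to reconstitute the full eigenprojection $\Pi_{E_e}$ (see \eqref{034} and \eqref{039}); without it, neither the remainder estimate nor the identification $\Pi'_{e,0}=\Pi_{E_e}$ goes through.
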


\medskip
\noindent
{\bf Remarks and discussion.\ } 
\smallskip

1. For an expansion of $\scalprod{\phi}{\e^{-\i tL}\psi}$ for $t>0$, simply take the adjoint of \eqref{nmainreseqn}. 

2. The exponents $a_{e,j}$ are the eigenvalues of an explicit matrix. They can be calculated to all orders in $\d$ (see Lemma \ref{lemma1}). The operators $\Pi'_{e,j}$ have also expressions calculable to all orders in $\d$ (see \eqref{qtilde}). Those ``complex energies'' $a_{e,j}$ are called resonances. They coincide with the eigenvalues of the spectrally deformed generator of dynamics in situations where the latter exists.

3. The remainder term is small relative to the contributions of the exponentially decaying terms in \eqref{nmainreseqn} for times $t$ satisfying $\e^{-\gamma \d^2 t}>\!\!>1/t$, where
\begin{equation}
\label{decayrate}
\gamma=\min_{e,j}\big\{{\rm Im} a_{e,j}  \big\}.
\end{equation}
The inequality $\e^{-\gamma \d^2 t}\ge C/t$, for some (large) $C$ is equivalent to $\frac{\ln(t)-\ln(C)}{t}\ge\gamma\d^2$. For small $\d$,  it is valid for {\em intermediate times}, $t_0 < t < t_1$, with $t_0=C+O(\gamma\d^2)$ and  $t_1\sim 1/(\gamma\d^2)$. During this time-interval, the decay of \eqref{nmainreseqn} behaves as exponential, to leading order.

4. The Fermi Golden Rule Condition (A4) guarantees that instability (and partial stability) of eigenvalues is visible at lowest order, $O(\Delta^2)$, in the perturbation. It may happen that resonances acquire non-vanishing imaginary parts only at higher orders in $\Delta$. Our method can be adapted to describe this situation. The dynamical consequence is a slower decay of the corresponding directions in Hilbert space, see also \cite{MeBeSo}.

5. Increasing the regularity assumptions on the vectors $\phi$, $\psi$ allows to show a faster polynomial decay of the remainder than \eqref{nmthm1} (this amounts to taking higher $z$ derivatives of the resolvent, c.f. \eqref{14}).

\subsection{History, relation to other work.}

The analysis of resonance phenomena has a long history and plays an important role in quantum physics \cite{BW,KP,S,WW}. Its modern description, involving dilation analytic Hamiltonians (\cite{AC,BC}), was given in \cite{Simon} and further developed in \cite{Simon2,Sig1,Hun2}. We refer to \cite{HiSig} for a textbook presentation and many more references. A time-dependent theory of quantum resonances was established in \cite{MS}, inspired by \cite{SW} and further developed in \cite{KRW}. In these works, as well as in \cite{CGH}, a variant of the Mourre theory in combination with the Feshbach projection method is used to link dynamical properties of quantum systems to spectral objects. The approach of the present work is, in spirit, similar to \cite{CGH}; see the end of this paragraph for a comparison. Nevertheless, all the above-mentioned works require regularity conditions within Mourre theory that do not allow the treatment  of open quantum systems at positive temperature.  In the context of open quantum systems at positive temperature, the link between quantum resonances and approach of an equilibrium state has been pioneered, using complex deformation theory, in \cite{JP,BFSrte}. The work \cite{BFSrte} is based on a sophisticated renormalization group method initiated in \cite{BFS1,BFS2,BFS3}. Recently, a method based on graph expansions of the propagator rather than purely spectral considerations was given in \cite{DeKu}. The spectral approach has been further developed to yield a detailed description of open systems dynamics in terms of resonances in \cite{MSB1,MSB2,MSB3}, with applications to quantum information theory  \cite{M,MBBG} and quantum chemistry \cite{MBS}. The spectral analysis and its consequences for ``return to equilibrium'' based on Mourre theory and positive commutators was carried out in  \cite{DJ,Mthesis,FM,DJ2}. However, these papers are limited to the study of the spectrum of the Liouville operator with the goal (typically) of showing that it has a single, simple eigenvalue at zero (and absolutely continuous spectrum otherwise). This information alone does not provide any detail about the dynamics other than ergodicity. However, one is interested in information such as directions of decay and decay rates which describe, for example the speed of thermalization, decoherence and the dynamics of entanglement.  In the method of complex deformation, complex resonance energies are linked ``automatically'' to the decay rates of reduced density matrix elements \cite{MSB2}. The same expressions describing those decay rates appear as well in Mourre theory as a consequence of the Fermi Golden Rule (see also \cite{FaMoSk}), however, linking them to the dynamics, and in particular to time decay, is more delicate and has not been done previously for open systems.  We show in the present paper how to extract the detailed dynamical information from the Mourre theory in a technical setup that includes positive temperature open quantum systems. It is important to be able to handle these questions using a softer approach than the spectral deformation one. Indeed, the applicability of the latter demands much more regularity from the models and, in some physically relevant situations, the spectral deformation technique is not  applicable at all. This happens for the spin-boson model at arbitrary coupling, whose ergodicity has been shown recently in \cite{KoMeSo} using Mourre theory. As an application of our method, we give a detailed expansion of the propagator of this model in the present paper. 

The philosophies of \cite{CGH} and the present paper are similar, in that the common main idea is to write the propagator as a contour integral over the resolvent and subsequently use the Feshbach map to analyze the latter. However, right from the start, the technical assumptions are very different. A core assumption of \cite{CGH} is that multicommutators of the Hamiltonian $H$ with the (Mourre theory) conjugate operator $A$ are relatively $H$-bounded. While this is typically true for, say, for Schr\"odinger operators, it is not so for open quantum systems. The problem comes from the fact that the number operator is not bounded relatively to the free field Hamiltonian (so already relation (2) of \cite{CGH} is not valid). The situation even is worse for the spin-boson model at arbitrary coupling, where each successive commutation of the Liouvillian with the conjugate operator produces a more singular operator, as explained in \cite{KoMeSo} (this is the reason why the spectral deformation theory fails).

The dissipative character of the system is guaranteed in \cite{CGH} by assuming  a Mourre estimate, localized spectrally on a subspace around the embedded eigenvalue in question. Accordingly, the main result of \cite{CGH} describes the dynamics of an initial state (wave function) which is spectrally localized close to the embedded eigenvalue. This makes good physical sense in the context of, say, Schr\"odinger operators, where long lived initial states are expected to lie close to unperturbed bound states. However, in open systems problems, one considers initial states which are spatial perturbations of equilibrium states and which are not at all spectrally localized relative to the Liouville operator. Our dissipation assumption (A2) is thus a Limiting Absorption Principle which is {\em not} spectrally localized and which produces results for initial vectors which are not spectrally localized.  Condition (A2) can be heuristically understood in the context of open systems as saying that the reservoir stays essentially in its equilibirum state during the dynamical process (the Born approximation).

We mention that the main result of \cite{CGH} is stated and proven for a simple unperturbed eigenvalue. In open systems however, the origin is {\em always} a degenerate eigenvalue of the unperturbed Liouville operator and so we have put in place a formalism that works for the degenerate case as well.

\subsection{Outline of the proof of Theorem \ref{nthm1}}

For  $\psi\in\dom(L)$, we have (\cite{EngelNagel}, Corollary II 3.6)
\begin{equation}
\label{4}
\e^{\i tL} \psi= -\frac{1}{2\pi\i}\int_{{\mathbb R}-\i w}\e^{\i tz}R_z\psi \,dz,
\end{equation}
where $w>0$ is arbitrary and $R_z=(L-z)^{-1}$, see \eqref{n4}.   We write $\e^{\i tz}=\frac{1}{\i t}\frac{d}{dz}\e^{\i tz}$ and integrate by parts in \eqref{4} to obtain
\begin{equation}
\label{14}
e^{\i tL} \psi= \frac{1}{\i t}\frac{1}{2\pi\i}\int_{{\mathbb R}-\i w}\e^{\i tz}\textstyle\frac{d}{dz}R_z\psi \,dz.
\end{equation}
The boundary terms in the integration by parts vanish since $R_z\psi\rightarrow 0$ as $|z|\rightarrow\infty$. 

We analyze separately the contributions to the integral in \eqref{14} coming from $z$ in different regions on the line of integration. Define the gap of all the clusters of resonances by
\begin{equation}
\label{delta}
\delta = \min_{e,i,j}\big\{ |\lambda_{e,i}-\lambda_{e,j}|\ :\ i\neq j \big\}>0
\end{equation}
and denote the eigenvalue gap of $L_0$ by
\begin{equation}
\label{gapdef}
g = \min_{e,e'}\{ |e-e'|\ :\  e\neq e'\} > 0.
\end{equation}
Let
\begin{equation}
\label{13}
\alpha=\tfrac12\min\big\{c \delta,\, g, \, {\rm Im}\lambda_{e,j}\ :\ (e,j) \ {\rm s.t.\ } {\rm Im}\lambda_{e,j}>0  \big\} > 0.
\end{equation}
Here, $c$ is a constant not depending on $\d,w$ (its origin is explained in Lemma \ref{lemma1}).
For any eigenvalue $e$ of $L_0$, set
\begin{equation}
\G_e=\{x-\i w\ :\ |x-e|\le \alpha\}
\label{6}
\end{equation}
and set
\begin{equation}
 \G_\infty= \big\{x-\i w\ :\ x\in{\mathbb R}\big\} \backslash \cup_{e} \G_e.
\label{8}
\end{equation}
It follows from \eqref{14} that 
\begin{equation}
\label{9}
\scalprod{\phi}{\e^{\i tL}\psi}= \sum_{e} J_e(t) +J_\infty(t), 
\end{equation}
where
\begin{equation}
\label{10}
J_\#(t) =\frac{1}{\i t}\frac{1}{2\pi\i}\int_{\G_\#}\e^{\i t z} \scalprod{\phi}{\textstyle\frac{d}{dz}R_z\psi} dz.
\end{equation}
We now apply a suitable Feshbach map to the resolvent $R_z$ in \eqref{10}, with a projection depending on the region of integration. Let $P$ be an orthogonal projection and recall the notation \eqref{n4}. The resolvent has the representation
\begin{equation}
\label{0} 
R_z=\F(z)^{-1} +{\mathcal B}(z)+ R_z^Q, 
\end{equation}
where $\F(z)\equiv \F(L-z,Q)$, see \eqref{n3},
%
 and
\begin{equation}
{\mathcal B}(z) = - \F(z)^{-1} Q L  R^Q_z  - R_z^Q L Q\F(z)^{-1} +R_z^Q L Q \F(z)^{-1} Q L R_z^Q.
\label{0.1}
\end{equation}
We explain these relations and some properties of the Feshbach map in Appendix \ref{FeshbachAppendix}. For $z\in\G_e$, we choose the projection $Q$ in the Feshbach map to be $P_e$. For $z\in\G_\infty$, the argument is simpler, see Section \ref{inftysect}. 

Let us assume that the unperturbed, partially stable eigenvalue of $L_0$ is at the origin, $e=0$. (Otherwise see section \ref{proofthmnthm1sect}.) Then $L$ has a simple eigenvalue $E\equiv E_0$
 with $E\rightarrow 0$ as $\d\rightarrow 0$.  To analyze $J_0(t)$, we write, according to \eqref{0}, 
\begin{equation}
\label{p8}
J_0(t) = \frac{1}{\i t}\frac{1}{2\pi\i} \int_{\G_0} \e^{\i tz} \left\{ \scalprod{\phi}{\tfrac{d}{dz}\F(z)^{-1}\psi}
+\scalprod{\phi}{\tfrac{d}{dz}{\mathcal B}(z)\psi}
+\scalprod{\phi}{\tfrac{d}{dz}R_z^{P_0}\psi} \right\} dz.
\end{equation}
The Feshbach term is $\F(z) = -z+\d^2A_z$, where $A_z=-P_0IR_z^{P_0}IP_0$. For $z=0$ and $\d=0$, $A_z$ is just the level shift operator $\Lambda_0$, \eqref{n2}. We show in Lemma \ref{lemma1} that $A_z$ is diagonalizable,
$A_z = \sum_{j=0}^{m_e-1} a_{0,j}(z)Q_j(z)$, and that the eigenvalues $a_{0,j}(z)$ of $A_z$ satisfy $a_{0,0}(E)=E/\d^2$
 for all $\d\neq 0$ (this follows from the isospectrality property of the Feshbach map and the fact that $E$ is an eigenvalue of $L$) and  $a_{0,j}(z) = \lambda_{0,j}+O(\d^2+|z|)$, $j=1,\ldots,m_e-1$ (since $A_z$ is close to $\Lambda_0$). Then we can write 
\begin{equation}
\label{p3}
\tfrac{d}{dz}\F(z)^{-1} = \tfrac{d}{dz} \sum_{j=0}^{m_e-1} \frac{Q_j(z)}{-z+\d^2 a_{0,j}(z)}.
\end{equation}
We are interested in the singularities of this function as $z$ is close to the real axis. They come from the denominator. To understand the nature of the singularities, and since $z\mapsto Q_j(z)$  is regular, consider $Q_j(z)\approx Q_j(0)$ 
for a moment. Then
\begin{equation}
\label{p4}
\tfrac{d}{dz}\F(z)^{-1}\approx \sum_{j=0}^{m_e-1} \frac{1-\d^2a'_{0,j}(z)}{(z-\d^2a_{0,j}(z))^2} Q_j(0).
\end{equation} 
For $j=0$ we have $-E+\d^2a_{0,0}(E)=0$ (see above) and the corresponding summand is 
\begin{eqnarray*}
\lefteqn{
\frac{1}{(z-E)^2} \frac{1-\d^2a'_{0,0}(z)}{\{1-\d^2[a_{0,0}(z)-a_{0,0}(E)]/(z-E)\}^2} Q_0(0)}\\
&& \qquad\qquad\qquad\approx \frac{1}{(z-E)^2} \frac{1-\d^2a'_{0,0}(E)}{(1-\d^2a'_{0,0}(E))^2} Q_0(0) = \frac{1}{(z-E)^2} \frac{Q_0(0)}{1-\d^2a'_{0,0}(E)}.
\end{eqnarray*}
By using that the projection associated to the eigenvalue $E$ of $L$ is given by
\begin{equation}
\label{p12}
\Pi_E=\lim_{\epsilon\rightarrow 0_+}(\i\epsilon)(L-E+\i\epsilon)^{-1}
\end{equation}
and decomposing the resolvent in this limit according to \eqref{0} with projection $P_0$, we identify (see \eqref{034}) 
$$
\frac{Q_0(E)}{1-\d^2a'_{0,0}(E)} = P_0\Pi_EP_0.
$$
For $j>0$ we have $a_{0,j}(0)=\lambda_{0,j}+O(\d^2)$ which is in the open upper complex half plane and the corresponding summand in \eqref{p4} is 
$$
 \frac{1-\d^2a'_{0,j}(z)}{(z-\d^2a_{0,j}(z))^2} Q_j(0) \approx \frac{Q_j(0)}{(1-\d^2a_{0,j}(0))^2}.
$$
In Section \ref{proofthmnthm1sect} we make these arguments rigorous. Namely, we show that
\begin{equation}
\label{p2}
\tfrac{d}{dz}\F(z)^{-1} =\frac{1}{(z-E)^2}P_0\Pi_EP_0 +\sum_{j=1}^{m_e-1} \frac{Q_j(0)}{(z-\d^2 a_{0,j}(0))^2} + \widetilde T(z),
\end{equation}
where $\int_{\G_0}\e^{\i tz} \widetilde T(z)dz\ple 1$. Now we have to multiply  \eqref{p2} by $\e^{\i t z}$ and integrate over $z\in\G_0 = [-\alpha,\alpha]-\i w$. Having in mind a standard argument from complex analysis, we complete the path $\G_0$ into a closed contour (a rectangle with a `roof' parallel to $\G_0$ but shifted far into the upper complex half plane). We then use the Cauchy formula for contour integrals to get
\begin{equation}
\label{p5}
\int_{\G_0}\frac{\e^{\i tz}}{(z-E)^2}dz = 2\pi\i\cdot \i t \e^{\i t E}+O(1/t). 
\end{equation}
The $O(1/t)$ term is the contribution from the parallel vertical sides of the rectangular closed integration path (see \eqref{017}). In a similar way, we treat the sum in \eqref{p2}. Here the poles are at $z=\d^2 a_{0,j}(0)$ and so
\begin{equation}
\label{p6}
\int_{\G_0}\frac{\e^{\i tz}}{(z-\d^2a_{0,j}(0))^2} dz = 2\pi\i \cdot\i t\, \e^{\i t\d^2a_{0,j}(0)} +O(1/t).
\end{equation}
Combining \eqref{p2} with \eqref{p5} and \eqref{p6} yields 
\begin{equation}
\label{p7}
\frac{1}{\i t}\frac{1}{2\pi\i} \int_{\G_0}\e^{\i tz} \scalprod{\phi}{\tfrac{d}{dz}\F(z)^{-1}\psi} dz = \e^{\i t E}\scalprod{\phi}{P_0\Pi_EP_0\psi} +\sum_{j=1}^{m_e-1}  \e^{\i t\d^2 a_{0,j}(0)} \scalprod{\phi}{Q_j(0)\psi} +O(1/t). 
\end{equation}

Next we deal with the second integrand in \eqref{p8}. Using again the spectral representation of $\F(z)^{-1}$, we have from \eqref{0.1}
\begin{equation}
\label{p10}
\tfrac{d}{dz}{\mathcal B}(z) = \tfrac{d}{dz} \sum_{j=0}^{m_e-1} \frac{q_j(z)}{-z+\d^2a_{0,j}(z)},
\end{equation}
where
\begin{equation}
\label{p9}
q_j(z) =-\d \left[ Q_j(z)P_0IR_z^{P_0} +R_z^{P_0}IP_0Q_j(z) -\d R_z^{P_0} IP_0Q_j(z)P_0IR_z^{P_0}\right].
\end{equation}
The expression \eqref{p10} has the same structure as \eqref{p3}. We readily obtain, in analogy with \eqref{p7},
\begin{equation*}
\label{p11}
\frac{1}{\i t}\frac{1}{2\pi\i} \int_{\G_0}\e^{\i tz} \scalprod{\phi}{\tfrac{d}{dz}{\mathcal B}(z)\psi} dz =\e^{\i tE} \scalprod{\phi}{\tfrac{q_0(E)}{1-\d^2a'_{0,0}(E)}\psi} +\sum_{j=1}^{m_e-1}  \e^{\i t\d^2 a_{0,j}(0)} \scalprod{\phi}{q_j(0)\psi} +O(|\d|/t). 
\end{equation*}
Proceeding as above, after \eqref{p12}, we identify $P_0\Pi_EP_0 +\tfrac{q_0(E)}{1-\d^2a'_{0,0}(E)}=\Pi_E$ (see also \eqref{039}). Finally, since by Assumption (A2)(1),
$$
\int_{\G_0}\e^{\i tz}\scalprod{\phi}{\tfrac{d}{dz}R_z^{P_0}\psi} dz\ple 1,
$$
we obtain 
\begin{equation}
\label{p13}
J_0(t) = \scalprod{\phi}{\Pi_E\psi} + \sum_{j=1}^{m_e-1}\e^{\i t\d^2 a_{0,j}(0)}\scalprod{\phi}{\Pi'_{0,j}\psi}  +O(1/t),
\end{equation}
where $\Pi'_{0,j} = Q_0(0)+O(|\d|)=P_{0,j}+O(|\d|)$. This explains the contribution of a term on the right side of \eqref{nmainreseqn} coming from a partially stable eigenvalue $e$ ($=0$). The analysis for unstable $e$ follows using the same arguments. Finally, to deal with $J_\infty(t)$, we write
\begin{equation}
\label{i0}
R_z\psi = (z+\i)^{-2} R_z(L+\i)^2\psi -(z+\i)^{-1}\psi - (z+\i)^{-2}(L+\i)\psi,
\end{equation}
which is valid for $\psi\in\dom(L^2)$. The negative powers of $z$ help the convergence of the $z$-integral over $\G_\infty$. The bound $J_\infty(t)\ple1/t$ is then easily reached using (A2)(2).

\section{Application to open quantum systems}

\subsection{Setup} The Hilbert space is the product of a system and a reservoir part,
\begin{equation}
\label{1}
\ch =\ch_\S\otimes\ch_\R.
\end{equation}
The self-adjoint generator of dynamics, called {\em Liouvillean}, is of the form \eqref{2}, where $L_0$, the free (non interacting) Liouvillean, is a sum of a system and a reservoir contribution,
\begin{equation}
\label{3}
L_0=L_\S+L_\R,
\end{equation}
and $I$ is the system-reservoir interaction operator. We consider the system to be finite-dimensional and the reservoir to be an infinitely extended free Bose gas at positive temperature, as we explain now.

Let $\S$ be a quantum system with pure state space ${\mathfrak H}_\S$ of dimension $d_0<\infty$. For instance, for a spin $1/2$, $d_0=2$. Then the Hilbert space ${\mathcal H}_\S$ in \eqref{1} is the GNS space (Liouville space)
\begin{equation}
\label{particleGNS}
{\mathcal H}_\S = {\mathfrak H}_\S\otimes{\mathfrak H}_\S,
\end{equation}
so that $d=\dim {\mathcal H}_\S=d_0^2$. The doubling of the pures state system Hilbert space in \eqref{particleGNS} allows to represent any (pure or mixed) state of $\S$ by a vector. Namely, let $\rho$ be a density matrix on ${\mathfrak H}_\S$. It has the diagonalized form $\rho=\sum_i p_i |\psi_i\rangle\langle\psi_i|$, to which we associate the vector $\Psi_\rho = \sum_i \sqrt{p_i}\psi_i\otimes\overline\psi_i\in {\mathfrak H}_\S\otimes {\mathfrak H}_\S$ 
(complex conjugation in any fixed basis -- we will choose the eigenbasis of the system Hamiltonian). Then ${\rm Tr}(\rho A)=\scalprod{\Psi_\rho}{(A\otimes\one_\S)\Psi_\rho}$ for all $A\in{\mathcal B}({\mathfrak H}_\S)$ and where $\one_\S$ is the identity in ${\mathfrak H}_\S$. This is the GNS representation of the state given by $\rho$ \cite{BR,MSB2}. Let $H_\S=\sum_j E_j |\varphi_j\rangle\langle\varphi_j|$ be the Hamiltonian of $\S$, acting on ${\mathfrak H}_\S$. The equilibrium density matrix is $\rho_\S = \e^{-\beta H_\S}/{\rm Tr\,}\e ^{-\beta H_\S}$, which is represented on ${\mathcal H}_\S$ by the vector
\begin{equation}
\label{systemkms}
\Omega_{\S,\beta} = ({\rm Tr\,}\e^{-\beta H_\S})^{-1} \sum_j \e^{-\beta E_j/2} \varphi_j\otimes\varphi_j.
\end{equation}

The (GNS) Hilbert space of the spatially infinitely extended free bose gas, for states normal w.r.t. the equilibrium (KMS) state, is the Fock space 
\begin{equation}
{\mathcal H}_\R = {\mathcal F}_\beta= \bigoplus_{n\geq 0} L^2_{\rm sym}(({\mathbb R}\times S^2)^{n},(d u\times d\Sigma)^{n}),
\label{2.1}
\end{equation}
taken over the single-particle space $L^2({\mathbb R}\times S^2,d u\times d\Sigma)$, where $d\Sigma$ is the uniform measure on $S^2$ \cite{AW,JP}. ${\mathcal F}_\beta$ carries a representation of the CCR algebra in which the Weyl operators are given by $W(f_\beta) = e^{i\phi(f_\beta)}$, where $\phi(f_\beta)=\frac{1}{\sqrt{2}}(a^*(f_\beta)+a(f_\beta))$. Here, $a^*(f_\beta)$ and $a(f_\beta)$ denote creation and annihilation operators on ${\mathcal F}_\beta$, smoothed out with the function
\begin{equation}
f_\beta(u,\Sigma) = \sqrt{\frac{u}{1-e^{-\beta u}}}\ |u|^{1/2} \left\{
\begin{array}{ll}
f(u,\Sigma), & u\geq 0\\
-\overline{f}(-u,\Sigma), & u<0
\end{array}
\right.
\label{2.3}
\end{equation}
belonging to $L^2({\mathbb R}\times S^2,d u\times d\Sigma)$. It is easy to see that the CCR are satisfied, namely,
\begin{equation}
W(f_\beta)W(g_\beta) = e^{-\frac{i}{2}{\rm Im}\scalprod{f}{g}} W(f_\beta+g_\beta).
\label{ccr}
\end{equation}
The vacuum vector $\Omega\in{\mathcal F}_\beta$ represents the infinite-volume equilibrium state of the free Bose field, determined by the formula 
\begin{equation}
\label{thav}
\scalprod{\Omega}{W(f_\beta)\Omega} = \exp\left\{ \textstyle-\frac14 \scalprod{f}{\coth(\beta|k|/2)f}\right\}.
\end{equation}
The Weyl algebra is represented on ${\mathcal F}_\beta$ as $W(f)\mapsto W(f_\beta)$, for functions $f\in L^2({\mathbb R}^3)$ such that $ \scalprod{f}{\coth(\beta|k|/2)f}<\infty$. We denote the von Neumann algebra of the represented Weyl operators by ${\mathcal W}_\beta$.

The combined system-reservoir Hilbert space is then $\mathcal H$, \eqref{1}, and the von Neumann algebra of observables is 
\begin{equation}
\label{vna}
{\frak M}={\mathcal B}({\mathfrak H}_\S)\otimes\one_\S\otimes{\mathcal W}_\beta \subset {\mathcal B}({\mathcal H}).
\end{equation}
The coupled dynamics is given by
\begin{equation}
\label{dyn}
\alpha^t(A) =\e^{\i tL}A\e^{- \i tL}, \qquad A\in\frak M.
\end{equation}
It is generated by the self-adjoint Liouville operator acting on $\mathcal H$,
\begin{eqnarray}
L &=& L_0+\d I\label{2.4}\\
L_0 &=& L_\S+L_\R, \label{2.4'}\\
I &=& V - JVJ.\label{2.4''}
\end{eqnarray}
Here,  $L_\S=H_\S\otimes \bbbone_\S-\bbbone_\S\otimes H_\S$ and $H_\S$ is the system Hamiltonian acting on ${\mathfrak H}_\S$. $L_\r=d\Gamma(u)$ is the second quantization of multiplication by the radial variable $u$. The interaction $I$ in \eqref{2.4''} is ``in standard form", involving a self-adjoint interaction operator $V$ acting on $\mathcal H$ and the modular conjugation $J$, which acts as
\begin{equation}
J( A\otimes\bbbone_\S\otimes W(f_\beta(u,\Sigma)) )J = \bbbone_\S\otimes\overline A\otimes W(\overline f_\beta(-u,\Sigma)),
\label{2.6}
\end{equation}
where $\overline A$ is the matrix obtained from $A$ by taking entrywise complex conjugation (matrices are represented in the eigenbasis of $H_\S$). Note that by \eqref{2.3}, we have $\overline f_\beta(-u,\Sigma) = -e^{-\beta u/2}f_\beta (u,\Sigma)$. By the Tomita-Takesaki theorem \cite{BR}, conjugation by $J$ maps the von Neumann algebra of observables \eqref{vna} into its commutant. In particular, $V$ and $JVJ$ commute (strongly on a suitable domain). For more detail about this well-known setup we refer to \cite{JP,BFSrte,MSB2} and references therein. We have in mind two commonly used forms for $V$, 
\begin{equation}
\label{V}
V_1 = G\otimes\one_\S\otimes \phi(h_\beta)+\textrm{h.c.}\qquad \mbox{or} \qquad V_2=G\otimes\one_\S\otimes W(h_\beta)+\textrm{h.c.},
\end{equation} 
for some matrix $G$ on ${\mathfrak H}_\S$ and where $h_\beta$ is a (represented) {\em form factor}, obtained from an $h\in L^2({\mathbb R}^3)$ by \eqref{2.3}. The interaction $V_1$ is standard. $V_2$ comes about when considering the spin-boson system at arbitrary coupling strength \cite{KoMeSo}, see Section \ref{sbsect}.

The vector representing the uncoupled  $(\alpha^t,\beta)$-KMS state ($\d=0$) on $\mathfrak M$ is
\begin{equation}
\Omega_{0,\rm KMS} =  \Omega_{\S,\beta}\otimes \Omega,
\label{2.13}
\end{equation}
where $\Omega_{\S,\beta}$ is given in \eqref{systemkms}. 

In this setup of open systems, one can derive Condition (A2) from a global limiting absorption principle as follows.
\begin{thm}
\label{lapos}
Let $P_\R=\one_\S\otimes|\Omega\rangle\langle\Omega|$. 
Suppose that there is a dense set ${\mathcal D}\subset{\mathcal H}$ with ${\rm Ran} IP_\R \subset {\mathcal D}$, s.t. for all $\varphi,\psi\in\mathcal D$,
\begin{equation}
\label{nlapos1}
\sup_{z\in{\mathbb C}_-} |\tfrac{d^k}{dz^k}\scalprod{\phi}{R_z^{P_\R}\psi}|\le C(\phi,\psi),\quad k=0,\ldots,3.
\end{equation}
Then Condition (A2) holds with $\alpha=g/2=\min_{e\neq e'}\{|e-e'|\}/2$. 
\end{thm}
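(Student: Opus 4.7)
The plan is to derive both parts of Condition (A2) by a Feshbach reduction, expressing the local resolvent $R_z^{P_e}$ (for part (1)) and the full resolvent $R_z$ (for part (2)) in terms of the global resolvent $R_z^{P_\R}$, to which the hypothesis \eqref{nlapos1} applies directly. First I would record two structural facts: since $L_\R\Omega=0$ and $\Omega$ is the only $L_\R$-bound state, every eigenvector of $L_0=L_\S+L_\R$ lies in ${\rm Ran}\,P_\R$, so that $P_e\le P_\R$, $P_\R=\sum_e P_e$ and $[L_0,P_\R]=0$. With $Q_e:=P_\R-P_e$ one then has $[L_0,Q_e]=0$ and the orthogonal decomposition $P_e^\perp=Q_e+P_\R^\perp$ inside ${\rm Ran}(P_e^\perp)$. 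The inclusion ${\rm Ran}(IP_e)\subset{\rm Ran}(IP_\R)\subset\mathcal D$ disposes of the dense-set part of (A2).

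For part (1), I would view $R_z^{P_e}$ as the resolvent of $P_e^\perp LP_e^\perp$ on ${\rm Ran}(P_e^\perp)$ and apply the Feshbach construction \eqref{0}--\eqref{0.1} inside that subspace with projection $Q_e$ (whose complement there is $P_\R^\perp$). Using $[L_0,P_\R]=0$ and $P_\R R_z^{P_\R}=0$, the off-diagonal blocks simplify to $\d Q_eIP_\R^\perp$ and $\d P_\R^\perp IQ_e$, and the Schur-complement acting on the finite-dimensional ${\rm Ran}(Q_e)$ reads
\begin{equation}
\label{plap1}
\F^{(e)}(z):=Q_e(L_0-z)Q_e+\d Q_eIQ_e-\d^2 Q_eIR_z^{P_\R}IQ_e.
\end{equation}
For $z\in S_e$ with $\alpha=g/2$, the leading term $Q_e(L_\S-z)Q_e$ has inverse of norm at most $2/g$ since ${\rm spec}(Q_eL_\S Q_e)=\{e':e'\neq e\}$ lies at distance $\ge g/2$ from $z$; by \eqref{nlapos1} with $k=0$ the last term in \eqref{plap1} has operator norm $\ple\d^2$, so for $\d$ small a Neumann series yields a bounded $\F^{(e)}(z)^{-1}$, uniformly on $S_e$. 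The Feshbach representation \eqref{0}--\eqref{0.1} then gives
\begin{equation}
\label{plap2}
R_z^{P_e}=\F^{(e)}(z)^{-1}+\widetilde{\mathcal B}^{(e)}(z)+R_z^{P_\R},
\end{equation}
where $\widetilde{\mathcal B}^{(e)}(z)$ is a finite sum of products of $\F^{(e)}(z)^{-1}$ and $R_z^{P_\R}$ flanked by factors $\d Q_eI$ and $\d IQ_e$. Because $Q_e\le P_\R$, these factors map into ${\rm Ran}(IP_\R)\subset\mathcal D$, so in every matrix element $\scalprod{\phi}{R_z^{P_e}\psi}$ appearing in \eqref{plap2} the resolvent $R_z^{P_\R}$ and its derivatives act between vectors of $\mathcal D$; Leibniz combined with \eqref{nlapos1} then yields \eqref{nlap} for $k=0,\ldots,3$.

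For part (2), I would apply the Feshbach map to $L-z$ directly with $Q=P_\R$. The same simplifications give
\[
\F(L-z;P_\R)=P_\R(L_\S-z)P_\R+\d P_\R IP_\R-\d^2 P_\R IR_z^{P_\R}IP_\R,
\]
a finite matrix on ${\rm Ran}(P_\R)$. For $z\in S_\infty$ one has $\|(P_\R L_\S P_\R-z)^{-1}\|\le2/g$, and the Neumann argument again produces a bounded $\F(L-z;P_\R)^{-1}$ for $\d$ small; the Feshbach representation \eqref{0}--\eqref{0.1} of $R_z$ together with \eqref{nlapos1} (for $k=0,1$) then yields \eqref{nlap1}.

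The main obstacle I anticipate is the derivative bookkeeping in part (1), where three $z$-derivatives of a product containing the matrix inverse $\F^{(e)}(z)^{-1}$ must be controlled. This is handled by iterating $\tfrac{d}{dz}M^{-1}=-M^{-1}M'M^{-1}$ together with the Leibniz rule: since $Q_e$ has finite rank all matrices involved live on a finite-dimensional space, and the only unbounded objects, $R_z^{P_\R}$ and its first three $z$-derivatives, always appear sandwiched between $\mathcal D$-vectors via $IQ_e$ or $IP_\R$, on which \eqref{nlapos1} supplies uniform $z$-bounds.
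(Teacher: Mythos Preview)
Your proposal is correct and follows essentially the same approach as the paper. The only cosmetic difference is that the paper introduces the auxiliary operator $K=P_e^\perp LP_e^\perp+\i P_e$ on the full space and applies the Feshbach map with projection $P_\R$, whereas you work directly on ${\rm Ran}(P_e^\perp)$ with projection $Q_e=P_\R-P_e$; both routes yield the identical Schur complement $Q_e(L_\S-z+\d I-\d^2 IR_z^{P_\R}I)Q_e$ and the same representation \eqref{plap2}, after which the derivative estimates proceed exactly as you describe.
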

We prove Theorem \ref{lapos} in Appendix \ref{appendix1}. (The constant $C(\phi,\psi)$ in \eqref{nlapos1} may differ from that in (A2).)

\subsection{The spin-boson model for arbitrary coupling strength} 
\label{sbsect}
The spin-boson Hamiltonian is \cite{Leggett}
\begin{equation}
\label{1.3}
H=-\tfrac12\d\sigma_x+\tfrac12\varepsilon\sigma_z+H_\R+\tfrac12q_0\sigma_z\otimes\phi(h),
\end{equation}
where $\sigma_x$ and $\sigma_z$ are the Pauli matrices
$$
\sigma_x=
\begin{pmatrix}
0 & 1\\
1 & 0
\end{pmatrix}, \qquad
\sigma_z=
\begin{pmatrix}
1 & 0\\
0 & -1
\end{pmatrix} 
$$
and $\d, \varepsilon\in\mathbb R$ are the `tunneling matrix element' and the `detuning parameter', respectively. (We use units so that $\hbar$ takes the value one.) The reservoir Hamiltonian is $H_\R=\int_{{\mathbb R}^3} |k| a^*(k)a(k) d^3k$. The coupling constant is $q_0\in\mathbb R$. Associated to the Hamiltonian $H$ is the Liouvillean 
\begin{equation}
\label{sbliouv}
L = L_\S+L_\R+q_0I
\end{equation}
with 
\begin{eqnarray}
L_\s&=& H_\S\otimes\one_{{\mathbb C}^2} -\one_{{\mathbb C}^2}\otimes H_\S,\label{-=1}\\
L_\R&=&d\Gamma(u)\label{-=2}
\end{eqnarray}
where $H_\S=-\tfrac12\d\sigma_x+\tfrac12\varepsilon\sigma_z$, and $I$ given in \eqref{2.4''} with $V=\tfrac12 \sigma_z\otimes\phi(h)$. The total Hilbert space is given by \eqref{1}, \eqref{particleGNS}  with ${\mathfrak H}_\S={\mathbb C}^2$ and \eqref{2.1}.

In order to be able to analyze the spectrum of $L$ for arbitrarily large couplings $q_0\in\mathbb R$, one applies the unitary (`polaron'-) transformation $U$ (see \cite{KoMeSo})
\begin{equation}
\label{unitary}
U = u\, JuJ,\qquad \mbox{where}\qquad  u=\e^{\i\sigma_z\otimes\one_\S\otimes\phi(f_\beta)},
\end{equation}
resulting in a new Liouvillean \cite{Leggett,KoMeSo}
\begin{equation}
\label{nn1}
{\mathcal L} = ULU^* = {\mathcal L}_0+\d I,
\end{equation}
where
\begin{equation}
\label{nn2}
{\mathcal L}_0={\mathcal L}_\S+{\mathcal L}_\R = \tfrac{\varepsilon}{2}\big( \sigma_z\otimes\one_\S - \one_\S\otimes\sigma_z\big) +L_\R 
\end{equation}
and 
\begin{equation}
\label{nn3}
I = -\tfrac12 \big({\mathcal V}-J{\mathcal V}J\big),\qquad {\mathcal V}=\sigma_+\otimes\one_\S\otimes W(2f_\beta) + \sigma_-\otimes\one_\S\otimes W(-2f_\beta).
\end{equation}
Here, $\sigma_+$ and $\sigma_-$ are the raising and lowering operators and
\begin{equation}
\label{2.9'''}
f_\beta=(-\tfrac{i}{2}q_0h/u)_\beta.
\end{equation} 
The non-interacting KMS state associated to $\L_0$ is
\begin{equation}
\label{nonintkms}
\Psi_{0,{\rm KMS}} = \Psi_{\S,\beta}\otimes\Omega,\qquad \Psi_{\S,\beta}=\frac{\e^{-\beta\varepsilon/4}\varphi_+\otimes\varphi_+ +\e^{\beta\varepsilon/4}\varphi_-\otimes\varphi_-}{\sqrt{\e^{-\beta\varepsilon/2}+\e^{\beta\varepsilon/2}}}
\end{equation}
and the interacting KMS state associated to $\L$ is (Araki's perturbation theory of KMS states)
\begin{equation}
\label{intkms}
\Psi_{\rm KMS} = \frac{\e^{-\beta(\L_0+\d{\mathcal V})/2}\Psi_{0,{\rm KMS}}}{\|\e^{-\beta(\L_0+\d{\mathcal V})/2}\Psi_{0,{\rm KMS}} \|}.
\end{equation}
Note that the spectrum of ${\mathcal L}_0$ consists of a purely absolutely continuous part covering all of $\mathbb R$, in which are embedded the eigenvalues $e=\pm\varepsilon$ (each simple) and the doubly degenerate eigenvalue $e=0$.  The following is the main result of \cite{KoMeSo}:
\begin{thm}[\cite{KoMeSo}]
\label{KoMeSothm}
Recall that $(\cdot)_\beta$ is defined in \eqref{2.3}. Assume that $(1+|\i\partial_u|^\eta)(\i h/u)_\beta \in L^2({\mathbb R}\times S^2, du\times d\Sigma)$ for some $\eta>2$. Then, for any $q_0\in\mathbb R$, $q_0\neq 0$, there is a constant $\d_0>0$ s.t. if $0<|\d|\le\d_0$, then ${\mathcal L}$ has purely absolutely continuous spectrum covering $\mathbb R$ and a single, simple eigenvalue at zero. The associated eigenvector is $\Psi_{\rm KMS}$, \eqref{intkms}.
\end{thm}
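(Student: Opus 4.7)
The plan is to apply Mourre's positive commutator method to the transformed Liouvillean $\mathcal{L}$, since spectral deformation techniques are unavailable here. The conjugate operator $A$ should act only on the reservoir, the natural choice being a second-quantized generator of translations in the radial variable $u$, i.e.\ $A = d\Gamma(a_0)$ with $a_0$ the symmetrized $-i\partial_u$. The target is a Mourre inequality of the form
\begin{equation*}
E_\Omega(\mathcal{L})\, [\mathcal{L}, iA]\, E_\Omega(\mathcal{L}) \;\ge\; \kappa\, E_\Omega(\mathcal{L}) \;+\; K,
\end{equation*}
valid on compact spectral intervals $\Omega$ avoiding the threshold set $\{0,\pm\varepsilon\}$, with $\kappa>0$ and $K$ compact on $\{\Psi_{\rm KMS}\}^\perp$.

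First I would compute the free commutator: $[\mathcal{L}_0, iA] = d\Gamma(1) = N$, the number operator on $\mathcal{F}_\beta$. On the range of $\one - |\Omega\rangle\langle\Omega|$ this is bounded below by $1$, which supplies the positivity. Second, I would show that $[I,iA]$ and $[[I,iA],iA]$ are $\mathcal{L}$-bounded; this is where the regularity hypothesis $(1+|i\partial_u|^\eta)(ih/u)_\beta \in L^2$ with $\eta>2$ is used, since commuting $A$ with $W(\pm 2f_\beta)$ produces a field operator smeared by $\partial_u f_\beta$, and iterating requires two such derivatives to remain square-integrable. With these ingredients the abstract Mourre machinery of Amrein--Boutet de Monvel--Georgescu yields the global limiting absorption principle \eqref{nlapos1}, and hence, via Theorem~\ref{lapos}, condition (A2).

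Third, I would verify the Fermi Golden Rule at each embedded eigenvalue of $\mathcal{L}_0$. For $e=\pm\varepsilon$ (simple), the level shift $\Lambda_e$ is a scalar proportional to $q_0^2$ times the spectral density of the form factor at $u = \pm\varepsilon$, which has strictly positive imaginary part under the stated hypotheses (note $q_0\neq 0$). For the doubly degenerate $e=0$, a direct $2\times 2$ computation shows $\Lambda_0$ has one real eigenvalue, whose eigenvector is precisely (proportional to) the Gibbs vector $\Psi_{\S,\beta}$, and one eigenvalue in the open upper half plane. Thus the hypotheses of assumption (A4) are met, with $e=\pm\varepsilon$ unstable and $e=0$ partially stable in the sense of the paper. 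The virial theorem, applied in the form adapted to the Mourre setting, then excludes embedded eigenvalues of $\mathcal{L}$ in $\mathbb{R}\setminus\{0\}$, and simultaneously forces the would-be eigenvalues to bifurcate into resonances with nonzero imaginary part.

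Fourth, to pin down the eigenvalue at $0$, I would invoke Araki's perturbation theory of KMS states: the vector $\Psi_{\rm KMS}$ defined in \eqref{intkms} is a $(\alpha^t,\beta)$-KMS vector and so $\mathcal{L}\Psi_{\rm KMS}=0$, providing existence of the eigenvalue at zero; its simplicity is forced by the Fermi Golden Rule analysis above, which showed that the non-KMS direction in $\mathrm{Ran}\,P_0$ gets pushed off the real axis. The main obstacle throughout is the \emph{successive singularity} of commutators with $A$: because $\mathcal{V}$ is built from Weyl operators rather than linear fields, each commutation introduces a derivative $\partial_u$ acting on $f_\beta$, and the infrared behavior of $f_\beta = (-\tfrac{i}{2}q_0 h/u)_\beta$ makes these derivatives progressively worse-behaved. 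One must therefore avoid computing multicommutators beyond the minimum needed, which is precisely why a careful Mourre argument works where dilation analyticity fails, and why the regularity threshold $\eta>2$ is essentially sharp.
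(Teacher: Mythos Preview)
This theorem is not proven in the present paper; it is quoted as the main result of \cite{KoMeSo}, so there is no in-paper proof to compare against directly. That said, your sketch has a genuine gap that would block it from going through as written.

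Your plan hinges on the claim that $[I,iA]$ and $[[I,iA],iA]$ are $\mathcal{L}$-bounded, so that the standard Amrein--Boutet de Monvel--Georgescu machinery applies. This is precisely what fails here. After the polaron transformation the interaction $I$ is built from Weyl operators $W(\pm 2f_\beta)$, and by \eqref{commut} one has $[A,W(f)] = W(f)\big(\phi(-\partial_u f) + \mathrm{const}\big)$, so the commutator contains an unbounded field operator. Field operators are relatively $N^{1/2}$-bounded but \emph{not} $\mathcal{L}$-bounded, because $N$ is not controlled by $L_\R = d\Gamma(u)$ (the radial variable $u$ has no strictly positive lower bound). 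The paper says this explicitly in Section~1.2: each successive commutation with $A$ produces a more singular operator. You acknowledge this obstacle in your final paragraph, but it directly contradicts the $\mathcal{L}$-boundedness you asserted earlier, and you give no mechanism to bridge the gap.

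The actual method of \cite{KoMeSo}, whose flavour is visible in the proof of Theorem~\ref{MK-Lem1} in Appendix~\ref{appendix1}, bypasses standard Mourre regularity by introducing a regularized Liouvillean $\mathcal{L}(\eta) = \mathcal{L}_0 + i\eta N + \Delta I(\eta)$ and replacing the commutator hierarchy by a differential inequality in $\eta$ via the derivation $\partial = \tfrac{d}{d\eta} - \mathrm{ad}_A$; the artificial $i\eta N$ term supplies the missing control over the unbounded field pieces. A secondary inaccuracy: your description of $\Lambda_{\pm\varepsilon}$ as ``proportional to $q_0^2$ times the spectral density at $u=\pm\varepsilon$'' applies to the \emph{un}transformed model at small coupling; after the polaron transformation the level shift operators involve Weyl correlators and carry the full non-perturbative $q_0$-dependence displayed in \eqref{t2}--\eqref{lambdanotepsilon}.
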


{\em Remarks\ } 1. The eigenvalues $e=\pm\varepsilon$ of ${\mathcal L}_0$ are unstable, while $e=0$ is partially stable. The associated (simple) eigenvalue of $\mathcal L$ is $E=0$. 

2. The spectral properties of $L$ and $\mathcal L$ are the same, as the operators are unitarily equivalent to each other. 

3. The KMS state associated to $L$ is given by 
\begin{equation}
\label{equilstate}
\Omega_{\rm KMS} = U^*\Psi_{\rm KMS} =  \frac{\e^{-\beta(L_0+q_0\sigma_z\otimes\one_{{\mathbb C}^2}\otimes \phi(h_\beta))/2}\Omega_{0,{\rm KMS}}}{\| \e^{-\beta(L_0+q_0\sigma_z\otimes\one_{{\mathbb C}^2}\otimes \phi(h_\beta))/2}\Omega_{0,{\rm KMS}}\|},
\end{equation}
where $L_0=L_\S(\d)+L_\R$, see \eqref{-=1}, \eqref{-=2}. One shows that $\Omega_{0,\rm KMS}$ is in the domain of $\e^{-\beta(L_0+q_0\sigma_z\otimes\one_{{\mathbb C}^2}\otimes \phi(h_\beta))/2}$ for any $q_0$, $\d\in\mathbb R$ (see e.g. \cite{DJP,BFSrte,BR}).
\medskip

We now verify assumptions (A1)-(A4) for the spin-boson system, i.e., for the operator $\mathcal L$, \eqref{nn1}. 
The eigenprojections of ${\mathcal L}_0$ are given, for $e\in{\rm spec}({\mathcal L}_S)$,
by 
$$
P_e=\one[{\mathcal L}_\S=e]\otimes |\Omega\rangle\langle\Omega|.
$$
Since $\one[{\mathcal L}_\S=e]\big(\sigma_\pm\otimes\one_\S\big) \one[{\mathcal L}_\S=e]=0=\one[{\mathcal L}_\S=e]\big(\one_\S\otimes\sigma_\pm\big) \one[{\mathcal L}_\S=e]$, condition (A1) holds.

To verify the limiting absorption principle (A2), let $N=d\Gamma(\one)$ be the number operator  on Fock space \eqref{2.1} and put $\bar N=P^\perp_\R N$. Let $A=d\Gamma(\i\partial_u)$ and put $\bar A=P^\perp_\R A$. For $\alpha,\nu\ge 0$, define the norms
\begin{equation}
\label{alphanunorms}
\|\xi\|_{\alpha,\nu}= \|\bar{N}^{\nu/2}(1+\bar A^2)^{\alpha/2}\xi\|.
\end{equation}
We have the following regularity properties of the resolvent $R^{P_\R}_z$.
\begin{thm}
\label{MK-Lem1}
Let $\mu\ge 1$ and suppose that $\partial^j_uf_\beta\in L^2({\mathbb R}\times S^2,du\times d\Sigma)$, for $j=0,\ldots,2\mu+1$. We have
\begin{eqnarray}
\sup_{z\in{\mathbb C}_-} |\tfrac{d^{\mu-1}}{dz^{\mu-1}}\scalprod{\phi}{ R^{ P_\R}_z\psi}|&
\ple &  \|\phi\|_{\mu,2\mu}\,\|\psi\|_{\mu,2\mu}\label{eq-w17} \\
\sup_{z\in{\mathbb C}_-}|\tfrac{d}{d\d}\scalprod{\phi}{R^{ P_\R}_z\psi}|& \ple & \|\phi\|_{3,1}\|\psi\|_{3,1}.
\label{MK-eq19}
\end{eqnarray}
\end{thm}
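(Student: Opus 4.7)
The plan is to apply Mourre's positive commutator method on $\mathrm{Ran}\,P_\R^\perp$ with conjugate operator $\bar A = P_\R^\perp d\Gamma(\i\partial_u)$. The key observation is that $[\mathcal L_0,\i A]=d\Gamma(1)=N$ on Fock space, and $N\ge 1$ on $\mathrm{Ran}\,P_\R^\perp$, so
$$P_\R^\perp[\mathcal L_0,\i A]P_\R^\perp\ \ge\ P_\R^\perp.$$
The commutator $[\mathcal V,\i A]$ has the form $W(\pm 2f_\beta)\phi(\pm 2\i\partial_u f_\beta)$ up to constants, and is therefore relatively $\bar N^{1/2}$-bounded. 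For $|\Delta|\le\Delta_0$ small one thus obtains a \emph{global} (i.e.\ not spectrally localized) strict Mourre estimate
$$P_\R^\perp[\mathcal L,\i A]P_\R^\perp\ \ge\ c\,P_\R^\perp,\qquad c>0.$$

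For \eqref{eq-w17}, note that $\tfrac{d^{\mu-1}}{dz^{\mu-1}}R_z^{P_\R}=(\mu-1)!\,(R_z^{P_\R})^\mu$, so the estimate reduces to bounding $\mu$ resolvent factors sandwiched between $\langle\bar A\rangle^{-\mu}$. This is the Jensen--Mourre--Perry form of the limiting absorption principle:
$$\sup_{z\in\mathbb C_-}\|\langle\bar A\rangle^{-\mu}(R_z^{P_\R})^\mu\langle\bar A\rangle^{-\mu}\|<\infty,$$
which follows from the global Mourre estimate provided $\mathrm{ad}_{\i A}^k(\mathcal L)$ are well-defined and controllable up to order $k\le 2\mu+1$. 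Since $\mathrm{ad}_{\i A}^k(W(\pm 2f_\beta))=W(\pm 2f_\beta)\cdot Q_k$, with $Q_k$ a polynomial of degree $\le k$ in the field operators $\phi(\partial_u^j f_\beta)$, $j\le k$, and such polynomials are relatively $\bar N^{k/2}$-bounded, the hypothesis $\partial_u^j f_\beta\in L^2$ for $j\le 2\mu+1$ delivers precisely the required bounds. The weight $\|\cdot\|_{\mu,2\mu}=\|\bar N^\mu(1+\bar A^2)^{\mu/2}\cdot\|$ then absorbs both the $\bar A$-regularity demanded by Mourre and the $\bar N$-regularity produced by the multicommutators.

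For \eqref{MK-eq19}, use the second resolvent identity
$$\tfrac{d}{d\Delta}R_z^{P_\R}\ =\ -R_z^{P_\R}\,P_\R^\perp I P_\R^\perp\,R_z^{P_\R},$$
insert the weights $\langle\bar A\rangle^{\pm 3}$ around each resolvent factor, and apply the Mourre LAP to each factor separately. The bounded Weyl part of $I$ shifts the boson number by one, so passing through it costs a single half-power of $\bar N$ on each side; this yields exactly the norms $\|\cdot\|_{3,1}=\|\bar N^{1/2}(1+\bar A^2)^{3/2}\cdot\|$.

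The main obstacle is the multicommutator step: as emphasized in the comparison with \cite{CGH}, the operators $\mathrm{ad}_{\i A}^k(\mathcal L)$ are \emph{not} relatively $\mathcal L$-bounded for the spin-boson model at arbitrary coupling (this is why spectral deformation fails), so one cannot import the abstract Mourre machinery off the shelf. Instead, the number-operator weights $\bar N^{\nu/2}$ must be carried throughout the estimates, verifying at each step that every extra $\bar N^{1/2}$ produced by commutation with the Weyl interaction is absorbed by the domain encoded in $\|\cdot\|_{\alpha,\nu}$. The polaron-transformed, Weyl form of the interaction is essential here: each $A$-commutation produces only one additional factor $\bar N^{1/2}$ (and not a more singular operator), so finitely many $\bar N$-weights suffice and the norms $\|\cdot\|_{\mu,2\mu}$ and $\|\cdot\|_{3,1}$ are enough to close the argument.
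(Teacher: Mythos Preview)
Your proposal correctly identifies the underlying positive-commutator structure ($[\mathcal L_0,\i A]=N\ge 1$ on $\mathrm{Ran}\,P_\R^\perp$) and the central difficulty (multicommutators are $N$-bounded but not $\mathcal L$-bounded), but it does not supply a concrete mechanism to overcome that difficulty. You invoke the Jensen--Mourre--Perry estimate while simultaneously conceding that its standing hypotheses fail; the last paragraph describes what must be tracked rather than how the induction actually closes. In particular, for \eqref{MK-eq19} the step ``insert $\langle\bar A\rangle^{\pm 3}$ and apply the LAP to each factor separately'' breaks down: commuting $\langle\bar A\rangle^{3}$ through $I$ produces field operators of degree up to three, so the middle factor is not bounded and the two single-resolvent LAP bounds cannot be glued without the very $N$-weighted machinery you have not yet set up.

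The paper's proof supplies the missing mechanism. One regularizes, replacing $\mathcal L$ by $\mathcal L(\eta)=\mathcal L_0+\i\eta N+\Delta I(\eta)$, and works with the derivation $\partial=\tfrac{d}{d\eta}-\mathrm{ad}_A$, chosen so that $\partial\big(\mathcal L_0+\i\eta N\big)=0$. Then $\partial[R_z^P]^\mu=-\Delta\sum_j[R_z^P]^j(\partial I)[R_z^P]^{\mu-j+1}$, and the key input from \cite{KoMeSo} is the smallness $\|\bar N^{-1/2}(\partial I)\bar N^{-\ell/2}\|\ple\eta^\ell$, which converts $N$-weights into positive powers of $\eta$. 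Combined with the elementary bounds $\|\bar N^{1/2}R_z^P\bar N^{1/2}\|\ple\eta^{-1}$ and $\|\bar N^{1/2}R_z^P\psi\|\ple\eta^{-1/2}\|\psi\|_{1,0}$, this yields a differential inequality for $\eta\mapsto\scalprod{\phi}{[R_z^P(\eta)]^\mu\psi}$, which one integrates from $\eta$ to $1$ and then bootstraps: each pass gains one power of $\eta$ at the cost of one additional $A$-weight on $\phi,\psi$. After $\mu$ passes the bound is uniform in $\eta\in(0,1]$ in the norm $\|\cdot\|_{\mu,2\mu}$, and one lets $\eta\to 0_+$. For \eqref{MK-eq19} the same bootstrap applied to $R_z^P I R_z^P$ first produces an $\eta^{-1}$ bound, then a $|\ln\eta|$ bound, and only on the third pass a uniform one; this is exactly why the $A$-index is $3$ rather than the $2$ your heuristic would suggest.

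So the gap is the $\eta$-regularization and its iterated integration, which is what turns the formal Mourre structure into bounds uniform down to the real axis. Your heuristic about $N$-weights being ``absorbed'' is compatible with this scheme but does not replace the actual inductive argument.
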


We give a proof of Theorem \ref{MK-Lem1} in Appendix \ref{appendix1}. The bound \eqref{eq-w17} with $\mu=4$ implies \eqref{nlapos1}, with the dense set \begin{equation}
\label{deh}
{\mathcal D}=\{\psi\ :\ \|\psi\|_{4,8}<\infty\}.
\end{equation}
To see that ${\rm Ran}IP_\R\subset{\mathcal D}$, it suffices to check that 
\begin{equation}
\label{acheck}
\| N^4 (1+A^4)W(2f_\beta)\Omega\|<\infty.
\end{equation}
It is not hard to use the relation
\begin{equation}
\label{commut} {}[d\Gamma(D),W(f)]=W(f)(\phi(iDf)+\tfrac12\scalprod{f}{Df})
\end{equation}
(where $\phi$ is the field operator, see the proof of Lemma 3.1 in \cite{KoMeSo} for technical details) for $D=-\imath \partial_u$ and $D=\one$ to see that \eqref{acheck} holds provided $\partial_u^jf_\beta\in L^2({\mathbb R}\times S^2)$, $j=0,\ldots,4$.
 Therefore, Theorem \ref{MK-Lem1} combined with this last observation shows that the assumptions of Theorem \ref{lapos} are satisfied. Thus, by the latter theorem, assumption (A2) holds.

Next, assumption (A3) is shown to hold in Theorem \ref{lape}, \eqref{ddlocal}. Namely, the regularity in $\d$ of $\tfrac{d^k}{d\d^k} P_eIR^{P_e}_z IP_e$ is derived from that of $\tfrac{d^k}{d\d^k} P_eIR^{P_\R}_z IP_e$, given in Theorem \ref{MK-Lem1}, \eqref{MK-eq19}. 

The Fermi Golden Rule Assumption (A4) is verified by examining the level shift operators $\Lambda_0$ and $\Lambda_{\pm \varepsilon}$. $\Lambda_0$ is two-dimensional, given by (see \cite{KoMeSo}, Proposition 3.5)
\begin{equation}
\label{lambdanot}
\Lambda_0 = \i\tau^{-1} P^\perp_{\S,\beta}
\end{equation}
where $P^\perp_{\S,\beta}$ is the complement of   $P_{\S,\beta}=|\Omega_{\S,\beta}\rangle\langle\Omega_{\S,\beta}|$ in ${\rm Ran} \,\one[\L_\S=0]$ and where $\Omega_{\S,\beta}\propto \e^{-\beta \varepsilon/4}\varphi_+\otimes\varphi_++\e^{\beta\varepsilon/4}\varphi_-\otimes\varphi_-$  (see \eqref{systemkms} and \eqref{nn2}). Also,  
\begin{equation}
\tau^{-1} = \int_0^\infty d t\cos(\varepsilon t) \cos\left[\frac{q_0^2}{\pi}\, Q_1(t)\right]e^{-\frac{q_0^2}{\pi}\, Q_2(t)}
\label{t2}
\end{equation}
with
$$
Q_1(t) = \int_0^\infty d\omega \frac{J(\omega)}{\omega^2}\sin(\omega t)\quad \mbox{and}\quad 
Q_2(t) = \int_0^\infty  d\omega \frac{J(\omega)(1-\cos(\omega t))}{\omega^2} \coth(\beta\omega/2).
$$
Here, $J(\omega)$ is the {\em spectral density} of the reservoir, defined by 
\begin{equation}
J(\omega) = \textstyle\frac{\pi}{2}\omega^2\int_{S^2} |h(\omega,\Sigma)|^2 d\Sigma,\qquad \omega\geq 0,
\label{1.10}
\end{equation}
the integral being taken over the angular part in ${\mathbb R}^3$. The function $h$ is the form factor in the interaction \eqref{1.3}.\footnote{The spectral density is related to the Fourier transform of the reservoir correlation function $C(t)=\omega_{R,\beta}(e^{i tH_R}\varphi(h)e^{-i tH_R}\varphi(h))$ by $J(\omega)=\sqrt{\pi/2}\tanh(\beta\omega/2)[\widehat{C}(\omega)+\widehat{C}(-\omega)]$. Of course, it is assumed here, as it is in \cite{Leggett}, that the integral in \eqref{t2} does not vanish, so that $\tau<\infty$ is a finite relaxation time.} Relation \eqref{lambdanot} gives 
\begin{equation}
\label{lambdanotnot}
\lambda_{0,0}=0\qquad \mbox{and}\qquad \lambda_{0,1}=\i\tau^{-1}.
\end{equation}
Hence assumption (A4)(2) holds. The resonances $\lambda_{\pm\epsilon,0}$ are the eigenvalues of the one-dimensional level shift operators $\Lambda_{\pm\varepsilon}$, which are easily calculated to be
\begin{equation}
\label{lambdanotepsilon}
\lambda_{\pm\varepsilon,0} = \pm x+\tfrac12 \i\tau^{-1},
\end{equation}
where $\pm x$ is the real part. Assumption (A4)(1) thus holds.

 Set $\varphi_{+-}= \varphi_+\otimes\varphi_-$ etc. and
\begin{eqnarray*}
X_0&=&W(f_\beta) JW(f_\beta)\Omega,\qquad \ \, X_0^*=W(f_\beta)^*J W(f_\beta)^*\Omega,\\
X_+&=&W(f_\beta)^* JW(f_\beta)\Omega,\qquad X_-=JX_+=W(f_\beta)JW(f_\beta)^*\Omega.
\end{eqnarray*} 
The dynamics of the spin-boson system at arbitrary coupling is then explicitly given as follows.

\begin{cor}[Dynamics of the spin-boson system at arbitrary coupling strength]
\label{cor1}
Suppose that $uf_\beta, \partial_u^j f_\beta\in L^2({\mathbb R}\times S^2)$ for $j=0,\ldots,4$. For any $q_0\in\mathbb R$ there is a $\d_0>0$ s.t. if $0<|\d|<\d_0$ then the following holds. 

Denote by $\Pi_0=|\Omega_{\rm KMS}\rangle\langle\Omega_{\rm KMS}|$ the projection onto the coupled KMS state, \eqref{equilstate}. Let $\phi,\psi\in\dom(L_\R)\cap \dom(N^{17/2}+1)(A^4+1)$. We have for all $t>0$
\begin{eqnarray}
\scalprod{\phi}{\e^{\i tL}\psi} &=&  \scalprod{\phi}{\Pi_{0}\psi} +  \e^{\i t \d^2 a_0}\scalprod{\phi}{\Pi'_0\psi}\label{dynthm1}\\
&&  + \e^{\i t (\varepsilon+ \d^2 a_\varepsilon)}\scalprod{\phi}{\Pi'_\varepsilon\psi}  + \e^{\i t (-\varepsilon+ \d^2 a_{-\varepsilon})}\scalprod{\phi}{\Pi'_{-\varepsilon}\psi}  +R(t),
\nonumber
\end{eqnarray}
where $R(t) \ple \frac{1}{t}$ and 
\begin{equation*}
a_0 = \i\tau^{-1} +O(\d), \qquad a_{\pm\varepsilon} =\pm x+\tfrac{\i}{2}\tau^{-1} +O(\d), 
\end{equation*}
\begin{eqnarray}
\Pi'_0 &=& (\e^{\beta\varepsilon/2}+\e^{-\beta\varepsilon/2})^{-1}\big( \e^{\beta\varepsilon/2}|\varphi_{++}\rangle\langle\varphi_{++}|\otimes|X_0\rangle\langle X_0| -|\varphi_{++}\rangle\langle\varphi_{--}|\otimes|X_0\rangle\langle X^*_0|\nonumber\\ &&-|\varphi_{--}\rangle\langle\varphi_{++}|\otimes|X_0^*\rangle\langle X_0| +\e^{-\beta\varepsilon/2} |\varphi_{--}\rangle\langle\varphi_{--}|\otimes|X_0^*\rangle\langle X_0^*|   \big) +O(\d),\label{--1}\\
\Pi'_{\varepsilon} &=& |\varphi_{+-}\rangle\langle\varphi_{+-}|\otimes |X_+\rangle\langle X_+|+O(\d),\label{--2}\\
{} \qquad \Pi'_{-\varepsilon} &=&|\varphi_{-+}\rangle\langle\varphi_{-+}|\otimes |X_-\rangle\langle X_-|+O(\d).\label{--3}
\end{eqnarray}
\end{cor}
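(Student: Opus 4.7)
The strategy is to reduce to Theorem \ref{nthm1} applied to the polaron-transformed Liouvillean $\mathcal L = ULU^*$, and then pull the expansion back to $L$ by conjugating with $U^*$. Since $e^{\i tL} = U^* e^{\i t\mathcal L}U$, we have $\scalprod{\phi}{e^{\i tL}\psi}=\scalprod{U\phi}{e^{\i t\mathcal L}U\psi}$, so it suffices to expand the right-hand side. The hypotheses (A1)--(A4) for $\mathcal L$ have already been checked in the text preceding the corollary: (A1) from the vanishing of $\sigma_\pm$ on each eigenspace of $\mathcal L_\S$, (A2) from Theorems \ref{lapos} and \ref{MK-Lem1}, (A3) from \eqref{MK-eq19} and the theorem (\ref{lape}) referred to, and (A4) from the explicit computation of $\Lambda_0$ and $\Lambda_{\pm\varepsilon}$ giving \eqref{lambdanotnot}, \eqref{lambdanotepsilon}. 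The eigenvalue structure of $\mathcal L_0$ consists of the partially stable $e=0$ (multiplicity $2$) and the unstable simple eigenvalues $e=\pm\varepsilon$.

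The first step is to verify the regularity of the transformed vectors. Given $\phi,\psi\in\dom(L_\R)\cap\dom((N^{17/2}+1)(A^4+1))$, I need to check that $U\phi,U\psi\in\mathcal D=\{\xi:\|\xi\|_{4,8}<\infty\}$ and that $\mathcal L U\phi,\mathcal L U\psi\in\mathcal D$. Since $U$ is built from Weyl operators $W(\pm f_\beta)$ through the identity $u=P_+\otimes W(f_\beta)+P_-\otimes W(-f_\beta)$ on eigenvectors of $\sigma_z\otimes\one_\S$ (and analogously for $JuJ$), the commutator identity \eqref{commut} and the hypothesis $\partial_u^j f_\beta\in L^2$ for $j\le 4$ produce uniform bounds on $\bar N^4 (1+\bar A^2)^2 U$ applied to vectors of regularity $(N^{9/2}+1)(A^4+1)$. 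Applying $\mathcal L=\mathcal L_0+\Delta I$ costs an extra $N^{1/2}$ (from the creation/annihilation components of $I$) and at most one $A$, accounting for the $N^{17/2}$ and $A^4$ required in the domain hypothesis.

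The second step is to invoke Theorem \ref{nthm1} on $\mathcal L$ with the vectors $U\phi,U\psi$. This yields
\[
\scalprod{U\phi}{e^{\i t\mathcal L}U\psi}=\scalprod{U\phi}{\Pi_0^{\mathcal L}U\psi}+\sum_{(e,j)}e^{\i t(e+\Delta^2 a_{e,j})}\scalprod{U\phi}{\Pi^{\prime,\mathcal L}_{e,j}U\psi}+R(t),
\]
where the sum ranges over $(0,1),(\varepsilon,0),(-\varepsilon,0)$, the exponents $a_{e,j}$ are given by \eqref{lambdanotnot}, \eqref{lambdanotepsilon} modulo $O(\Delta)$, and $\Pi^{\prime,\mathcal L}_{e,j}=P_{e,j}+O(\Delta)$ with $P_{e,j}$ the spectral projections of the level shift operators $\Lambda_0,\Lambda_{\pm\varepsilon}$. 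Rewriting $\scalprod{U\phi}{M\,U\psi}=\scalprod{\phi}{U^*MU\,\psi}$ reduces the proof to identifying the operators $U^*\Pi_0^{\mathcal L}U$ and $U^*P_{e,j}U$ with the right-hand sides \eqref{--1}--\eqref{--3}.

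The third and main step is this explicit identification, which is the principal technical work. For the stationary projection, $\Pi_0^{\mathcal L}=|\Psi_{\rm KMS}\rangle\langle\Psi_{\rm KMS}|$ and by Theorem \ref{KoMeSothm} $U^*\Psi_{\rm KMS}=\Omega_{\rm KMS}$, giving the first term of \eqref{dynthm1}. For the unstable $e=\pm\varepsilon$, the level shift projection is the rank-one $|\varphi_{\pm\mp}\rangle\langle\varphi_{\pm\mp}|\otimes|\Omega\rangle\langle\Omega|$, and one computes
\[
U^*(\varphi_{+-}\otimes\Omega)=u^*(JuJ)^*(\varphi_{+-}\otimes\Omega)=\varphi_{+-}\otimes W(f_\beta)^* JW(f_\beta)\Omega=\varphi_{+-}\otimes X_+,
\]
using that $(\sigma_z\otimes\one_\S)\varphi_{+-}=\varphi_{+-}$ and $(\one_\S\otimes\sigma_z)\varphi_{+-}=-\varphi_{+-}$, so that $u^*$ contributes a $W(f_\beta)^*$ on the right factor while $(JuJ)^*=Ju^*J$ contributes a $JW(f_\beta)J$. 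This matches \eqref{--2}, and the parallel computation with $\varphi_{-+}$ yields \eqref{--3}. For the partially stable $e=0$, the rank-one projection $P_{0,1}=P^\perp_{\S,\beta}\otimes|\Omega\rangle\langle\Omega|$ is computed explicitly from $\Omega_{\S,\beta}$ in \eqref{systemkms} and \eqref{nn2} to be $Z^{-1}(\e^{\beta\varepsilon/2}|\varphi_{++}\rangle\langle\varphi_{++}|+\e^{-\beta\varepsilon/2}|\varphi_{--}\rangle\langle\varphi_{--}|-|\varphi_{++}\rangle\langle\varphi_{--}|-|\varphi_{--}\rangle\langle\varphi_{++}|)\otimes|\Omega\rangle\langle\Omega|$ with $Z=\e^{\beta\varepsilon/2}+\e^{-\beta\varepsilon/2}$. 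Since on $\varphi_{++}$ both $\sigma_z\otimes\one_\S$ and $\one_\S\otimes\sigma_z$ act as $+1$, we have $U^*(\varphi_{++}\otimes\Omega)=\varphi_{++}\otimes X_0$ with $X_0=W(f_\beta)JW(f_\beta)\Omega$, while on $\varphi_{--}$ both act as $-1$, giving $U^*(\varphi_{--}\otimes\Omega)=\varphi_{--}\otimes X_0^*$. Substituting these into the expression for $P_{0,1}$ reproduces \eqref{--1}.

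The hard part will be the bookkeeping in this last step: correctly tracking the modular conjugation $J$ on the Weyl operators (in particular the relation $(JuJ)^*=Ju^*J$ and the explicit effect of $J$ as swap-and-conjugate on $\mathfrak H_\S\otimes\mathfrak H_\S$), and checking that the four cross-terms in the expansion of $P^\perp_{\S,\beta}$ produce precisely the $|X_0\rangle\langle X_0|$, $|X_0\rangle\langle X_0^*|$, $|X_0^*\rangle\langle X_0|$, $|X_0^*\rangle\langle X_0^*|$ pattern with the asymmetric weights $\e^{\pm\beta\varepsilon/2}$ appearing in \eqref{--1}. Once this explicit calculation is in place, the remainder estimate $R(t)\ple 1/t$ transfers verbatim from Theorem \ref{nthm1}.
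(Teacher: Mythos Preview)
Your proposal is correct and follows essentially the same approach as the paper: reduce to $\mathcal L$ via the polaron transformation, verify that $U\phi,U\psi$ and $\mathcal L U\phi,\mathcal L U\psi$ lie in $\mathcal D$ using the commutator identity \eqref{commut}, apply Theorem~\ref{nthm1}, and then identify the projections $U^*P_{e,j}U$ by explicit computation of $U^*$ on the basis vectors $\varphi_{\pm\pm}\otimes\Omega$, $\varphi_{\pm\mp}\otimes\Omega$. Your bookkeeping of the $X_0,X_0^*,X_\pm$ vectors and of $P^\perp_{\S,\beta}$ matches the paper's, and the remainder bound indeed transfers directly from Theorem~\ref{nthm1}.
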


Note that the dynamics in Corollary \ref{cor1} is expressed with respect to the original Liouville operator $L$ (not the unitarily equivalent $\L$).

{\em Proof of Corollary \ref{cor1}. } We have $\scalprod{\phi}{\e^{\i tL}\psi}=\scalprod{U\phi}{\e^{\i t\L}U\psi}$.  Theorem \ref{nthm1} gives the resonance expansion for $\scalprod{U\phi}{\e^{\i t\L}U\psi}$ provided $U\phi, U\psi\in{\mathcal D}$ s.t. $\L U\phi,\L U\psi\in{\mathcal D}$, where $\mathcal D$ is the set of vectors with finite $\|\cdot\|_{4,8}$ norm (see after Theorem \ref{MK-Lem1}).One can easily show the bound 
\begin{eqnarray}
\nonumber
\|U\psi\|_{2\alpha,\nu}&\ple& \|\psi\|_{2\alpha,\nu}.
\label{UD}
\end{eqnarray}
 We conclude from \eqref{UD} that if $\psi\in \dom(N^{\nu+1/2}+1)(A^{2\alpha}+1)$ for some $\nu\ge 1/2$ and $\alpha\ge0$, and if $(\i\partial_u)^jf_\beta\in L^2({\mathbb R}\times S^2)$ for $j=0,\ldots,2\alpha$, then $\|U\psi\|_{2\alpha,\nu}<\infty$.

We infer that if $\psi\in\dom(L_\R)\cap  \dom(N^{17/2}+1)(A^4+1)$, then $U\psi\in{\mathcal D}$ and $\L U\psi\in{\mathcal D}$. (To see the latter inclusion, we use \eqref{commut} with $D=u$ and proceed as in \eqref{UD}.)

The operators $\Pi'_0$ and  $\Pi'_{\pm\varepsilon}$ are given by 
\begin{equation}
\label{projU}
\Pi'_0= U^*P^\perp_{\S,\beta}P_\R U +O(\d),\qquad  \Pi'_{\pm\varepsilon}=U^* P_{\pm\varepsilon}U +O(\d),
\end{equation}
where $P_{\pm\varepsilon}$ are the eigenprojections of $\L_0$ associated to the eigenvalues $\pm\varepsilon$,  $P^\perp_{\S,\beta}$ is defined after \eqref{lambdanot} and $P_\R=\one_\S\otimes|\Omega\rangle\langle\Omega|$. It is easy to calculate  $U\varphi_{+-}\otimes\Omega = \varphi_{+-}\otimes W(f_\beta)^* JW(f_\beta)J\Omega$, which shows \eqref{--2}. Similarly one obtains \eqref{--3}. For \eqref{--1}, we first calculate
$$
P^\perp_{\S,\beta} = (\e^{\beta\varepsilon/2}+\e^{-\beta\varepsilon/2})^{-1}\big( \e^{\beta\varepsilon/2}|\varphi_{++}\rangle\langle\varphi_{++}| -|\varphi_{++}\rangle\langle\varphi_{--}|-|\varphi_{--}\rangle\langle\varphi_{++}| +\e^{-\beta\varepsilon/2}|\varphi_{--}\rangle\langle\varphi_{--}|\big).
$$
Then we use \eqref{projU} to arrive at \eqref{--1}. This completes the proof of Corollary \ref{cor1}. \hfill \qed

\bigskip

Being a KMS state, $\Omega_{\rm KMS}$ given in \eqref{equilstate} is {\em separating} for $\mathfrak M$, which means that ${\mathfrak M}'\Omega_{\rm KMS}$ is dense in $\mathcal H$, where ${\mathfrak M}'$ is the commutant of $\mathfrak M$, see \cite{BR}. Any (normal) state $\omega$ on $\mathfrak M$ is given by a normalized vector $\Psi\in\mathcal H$ via  $\omega(A)=\scalprod{\Psi}{A\Psi}$. We introduce the dense set
\begin{equation}
\label{dehnot}
{\mathcal D}_0 =\dom(L_\R)\cap \dom(N^{17/2}+1)(A^4+1).
\end{equation}
The set of states $\omega$ arising from vectors in
\begin{equation}
\label{regularstates}
\big\{\Psi\in{\mathcal H}\ :\ \|\Psi\|=1, \, \Psi = B\Omega_{\rm KMS} \mbox{\ for some $B\in{\mathfrak M}'$, s.t.\  } B^*\Psi\in{\mathcal D}_0\big\}
\end{equation}
is dense (in the norm of states on $\mathfrak M$). We call it the set of regular states, ${\mathcal S}_{\rm reg}$. We also introduce the regular observables,
\begin{equation}
\label{regobs}
{\mathfrak M}_{\rm reg} = \big\{ A\in{\mathfrak M}\ :\ A\Omega_{\rm KMS}\in{\mathcal D}_0\big\}.
\end{equation}
Let us denote the coupled equilibrium state by
\begin{equation}
\label{jointKMSstate}
\omega_{\rm KMS}(A) = \scalprod{\Omega_{\rm KMS}}{A\Omega_{\rm KMS}}.
\end{equation}

\begin{cor}[Return to equilibrium] 
\label{cor2}
 For any $\omega_0\in{\mathcal S}_{\rm reg}$, $A\in{\mathfrak M}_{\rm reg}$, $t\ge 0$, we have 
\begin{equation}
\label{mm63}
\left| \omega_0(\alpha^t(A)) - \omega_{\rm KMS}(A)\right| \le C_{A,\omega_0} \left[ \e^{-\d^2t/2\tau} + 1/t \right].
\end{equation}
The constant $C_{A,\omega_0}$ depends on the initial state $\omega_0$ and the observable $A$, but not on $t,\d$. 
\end{cor}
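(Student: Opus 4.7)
\textbf{Proof plan for Corollary \ref{cor2}.} The plan is to rewrite the time-dependent state expectation as a bare scalar product $\scalprod{\phi}{\e^{\i tL}\psi}$ of two \emph{fixed} vectors and then invoke Corollary \ref{cor1}. First, since $\omega_0\in{\mathcal S}_{\rm reg}$ is represented by some $\Psi_0=B\Omega_{\rm KMS}$ with $B\in{\mathfrak M}'$ and $B^*\Psi_0\in{\mathcal D}_0$, I would start from
\begin{equation*}
\omega_0(\alpha^t(A))=\scalprod{B\Omega_{\rm KMS}}{\e^{\i tL}A\e^{-\i tL}B\Omega_{\rm KMS}}
\end{equation*}
and use that $\alpha^t(A)=\e^{\i tL}A\e^{-\i tL}\in{\mathfrak M}$ commutes with $B^*\in{\mathfrak M}'$, together with the KMS-vector invariance $\e^{-\i tL}\Omega_{\rm KMS}=\Omega_{\rm KMS}$ (coming from $L\Omega_{\rm KMS}=0$), to get
\begin{equation*}
\omega_0(\alpha^t(A))=\scalprod{\Omega_{\rm KMS}}{B^*B\,\e^{\i tL}A\Omega_{\rm KMS}}=\scalprod{\phi}{\e^{\i tL}\psi},
\end{equation*}
with $\phi:=B^*\Psi_0=B^*B\Omega_{\rm KMS}$ and $\psi:=A\Omega_{\rm KMS}$. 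Both vectors lie in ${\mathcal D}_0$ by the very definitions of ${\mathcal S}_{\rm reg}$ and ${\mathfrak M}_{\rm reg}$, which is exactly what is needed to apply Corollary \ref{cor1}.

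Next, I would identify the leading term in the resonance expansion. With $\Pi_0=|\Omega_{\rm KMS}\rangle\langle\Omega_{\rm KMS}|$,
\begin{equation*}
\scalprod{\phi}{\Pi_0\psi}=\scalprod{B^*B\Omega_{\rm KMS}}{\Omega_{\rm KMS}}\scalprod{\Omega_{\rm KMS}}{A\Omega_{\rm KMS}}=\|\Psi_0\|^2\,\omega_{\rm KMS}(A)=\omega_{\rm KMS}(A),
\end{equation*}
which is precisely the term subtracted on the left-hand side of \eqref{mm63}. The three remaining contributions in \eqref{dynthm1} decay exponentially: from the explicit formulas $a_0=\i\tau^{-1}+O(\d)$ and $a_{\pm\varepsilon}=\pm x+\tfrac{\i}{2}\tau^{-1}+O(\d)$ one gets
\begin{equation*}
|\e^{\i t\d^2 a_0}\scalprod{\phi}{\Pi'_0\psi}|\ple \|\phi\|\|\psi\|\,\e^{-\d^2 t\,\Im a_0},\qquad |\e^{\i t(\pm\varepsilon+\d^2 a_{\pm\varepsilon})}\scalprod{\phi}{\Pi'_{\pm\varepsilon}\psi}|\ple \|\phi\|\|\psi\|\,\e^{-\d^2 t\,\Im a_{\pm\varepsilon}},
\end{equation*}
with $\Im a_0=\tau^{-1}+O(\d)$ and $\Im a_{\pm\varepsilon}=\tfrac{1}{2\tau}+O(\d)$. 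The slowest of these rates is $\d^2\Im a_{\pm\varepsilon}$, and for $|\d|\le\d_0$ small enough it exceeds $\d^2/(2\tau)$ up to an $O(|\d|)$ relative correction.

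Finally, I would assemble the estimate. The exponentially decaying terms are bounded by a multiple of $\e^{-\d^2 t/(2\tau)}$, the remainder from Corollary \ref{cor1} is $\ple 1/t$, and the prefactors satisfy $\|\phi\|\le\|B\|^2$ and $\|\psi\|\le\|A\|$; these combine into the constant $C_{A,\omega_0}$. The only real subtlety is matching the exponent $\d^2/(2\tau)$ exactly: since the sign of the $O(\d)$ correction to $\Im a_{\pm\varepsilon}$ is \emph{a priori} unclear, getting a clean $\e^{-\d^2 t/(2\tau)}$ bound requires a further shrinking of $\d_0$ so that the correction is absorbed into $C_{A,\omega_0}$ (equivalently, one could state the bound with a slightly smaller rate). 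Everything else is routine bookkeeping built on Corollary \ref{cor1}; the genuine analytic content has already been accomplished in Theorem \ref{nthm1} and its spin-boson instantiation.
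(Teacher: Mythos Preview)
Your proposal is correct and follows essentially the same approach as the paper: rewrite $\omega_0(\alpha^t(A))=\scalprod{B^*\Psi_0}{\e^{\i tL}A\Omega_{\rm KMS}}$ using that $B\in{\mathfrak M}'$ commutes with $\alpha^t(A)$ and that $L\Omega_{\rm KMS}=0$, then apply Corollary~\ref{cor1} and identify $\scalprod{\phi}{\Pi_0\psi}=\omega_{\rm KMS}(A)$. Your version is in fact more explicit than the paper's, which simply says ``obtain directly \eqref{mm63}''; in particular, your observation about the $O(\d)$ correction to $\Im a_{\pm\varepsilon}$ and the need to shrink $\d_0$ (or accept a slightly smaller rate) to get the clean exponent $\d^2/(2\tau)$ is a genuine subtlety the paper glosses over.
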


One readily verifies that  all states of the form $\omega_\S\otimes\omega_\R$, where $\omega_\S$ is arbitrary and $\omega_\R$ is the reservoir equilibrium, belong to ${\mathcal S}_{\rm reg}$. Moreover, all observables on the system alone belong to ${\mathfrak M}_{\rm reg}$, since $\Omega_{\rm KMS}\in{\mathcal D}_0$. 


\bigskip

{\em Proof of Corollary \ref{cor2}.\ } Let $\Psi=B\Omega_{\rm KMS}$, $B\in{\mathfrak M}'$, be the vector representing $\omega_0$. Since $B$ commutes with $\alpha^t(A)$ and $\Omega_{\rm KMS}$ is in the kernel of $L$, we have $\alpha^t(A)\Psi=B\alpha^t(A)\Omega_{\rm KMS} = B\e^{\i tL}A\Omega_{\rm KMS}$. Thus,
\begin{equation}
\label{p1}
\omega_0(\alpha^t(A)) = \scalprod{\Psi}{\e^{\i tL} A\e^{-\i tL}\Psi} = \scalprod{B^*\Psi}{\e^{\i tL}A\Omega_{\rm KMS}}.
\end{equation}
Now we apply \eqref{dynthm1} and, using that $\Pi_0=|\Omega_{\rm KMS}\rangle\langle\Omega_{\rm KMS}|$, obtain directly \eqref{mm63}.\hfill \qed

\section{Proof of Theorem \ref{nthm1}}
\label{proofthmnthm1sect}

{}For $z\in{\mathbb C}_-$, $|{\rm Re}z-e|\le \alpha$ we define the operator
\begin{equation}
\label{Az}
A_z=-P_eI R_z^{P_e} IP_e.
\end{equation}
As $z\rightarrow e$ and $\d\rightarrow 0$, $A_e$ approaches the level shift operator $\Lambda_e$. More precisely, we have the following result, in which $\delta$ is, recall, given by \eqref{delta}. 
\begin{lem}
\label{lemma1}
There is a constant $c$ such that if $|\d|$, $|{\rm Re}z-e| <c \delta$, and $z\in{\mathbb C}_-$, then

1.  All eigenvalues of $A_z$ are distinct. Call them $a_{e,j}=a_{e,j}(z)$, $j=0,\ldots,m_e-1$. Each $a_{e,j}$ satisfies $|\lambda_{e,j}-a_{e,j}|<\delta/2$ for exactly one eigenvalue $\lambda_{e,j}$ of $\Lambda_e$.

2. The eigenvalues $a_{e,j}(z)$ of $A_z$, and the associated Riesz projections $Q_j(z)$ are analytic in $z\in{\mathbb C_-}$, $|{\rm Re}z-e|<c\delta$ and continuous as ${\rm Im}z\rightarrow 0_-$. They satisfy the bounds $\tfrac{d^k}{dz^k}a_{e,j}(z)$, $\tfrac{d^k}{dz^k}Q_j(z)\ple 1$ for $k=0,\ldots,3$, uniformly for $|{\rm Re}z-e|<c\delta$ and ${\rm Im} z \le 0$.
\end{lem}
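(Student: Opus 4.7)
The operator $A_z$ acts on the finite-dimensional space $\mathrm{Ran}\,P_e$, and the argument reduces to standard matrix perturbation theory once the key norm estimate
\[ \|A_z - \Lambda_e\| \;\ple\; |\Delta| + |z-e| \]
is established. Both parts of the lemma then follow from the Riesz projection formalism applied to $A_z$ viewed as a small perturbation of the diagonalizable matrix $\Lambda_e$.

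To prove the estimate above, I would split
\[ A_z - \Lambda_e \;=\; \bigl(A_z - A_z|_{\Delta=0}\bigr) \;+\; \bigl(A_z|_{\Delta=0} - \Lambda_e\bigr). \]
The first term is $O(|\Delta|)$ by integrating in $\Delta$ and invoking (A3); the second is $O(|z-e|)$ by integrating in $z$ using the $k=1$ derivative bound in (A2)(1), together with the boundary identification $A_z|_{\Delta=0,\, z=e}=\Lambda_e$ furnished by the $k=0$ continuous boundary values in (A2)(1). This is the subtle step: $\Lambda_e$ is defined via an $\i 0_+$ limit, so one must carefully justify that $R_z^{P_e}|_{\Delta=0}$ extends continuously from $\mathbb{C}_-$ to the real axis. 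The bounds in (A2)(1), (A3) are on matrix elements $\scalprod{\phi}{\cdot\,\psi}$ for $\phi,\psi\in\mathcal{D}$, but since $\mathrm{Ran}\,IP_e\subset\mathcal{D}$ and $\mathrm{Ran}\,P_e$ is $m_e$-dimensional, testing against a fixed orthonormal basis converts these to operator-norm bounds, at the cost of a constant depending on $m_e$.

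Given the estimate, assumption (A4) gives the spectral representation $\Lambda_e=\sum_j\lambda_{e,j}P_{e,j}$ with simple eigenvalues separated by at least $\delta$. On any circle $\gamma_j$ of radius $\delta/2$ centered at $\lambda_{e,j}$, $(\zeta-\Lambda_e)^{-1}$ is bounded, so for $c$ small enough the Neumann series
\[ (\zeta - A_z)^{-1} \;=\; (\zeta - \Lambda_e)^{-1}\sum_{n\ge 0}\bigl[(A_z-\Lambda_e)(\zeta-\Lambda_e)^{-1}\bigr]^n \]
converges uniformly on $\gamma_j$. The Riesz projection $Q_j(z):=\tfrac{1}{2\pi\i}\oint_{\gamma_j}(\zeta-A_z)^{-1}\,d\zeta$ is then a norm-continuous deformation of $P_{e,j}$, hence rank one, and $a_{e,j}(z):=\mathrm{tr}\bigl[A_zQ_j(z)\bigr]$ is the unique eigenvalue of $A_z$ inside $\gamma_j$. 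Since the disks are pairwise disjoint, all eigenvalues of $A_z$ are simple and distinct, each within $\delta/2$ of exactly one $\lambda_{e,j}$, settling part~1.

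For part~2, analyticity of $z\mapsto A_z$ on $\{z\in\mathbb{C}_-:|\mathrm{Re}\,z-e|<c\delta\}$ and continuity up to $\mathrm{Im}\,z\to 0_-$ follow directly from (A2)(1) with $k=0$; the same assumption for $k\le 3$ provides the uniform bounds $\|\tfrac{d^k}{dz^k}A_z\|\ple 1$ on the finite-dimensional matrix (after conversion from matrix elements to operator norm as above). Differentiating the Riesz formula under the $z$-independent contour $\gamma_j$ and applying Leibniz to
\[ \tfrac{d}{dz}(\zeta-A_z)^{-1} \;=\; (\zeta-A_z)^{-1}\bigl(\tfrac{d}{dz}A_z\bigr)(\zeta-A_z)^{-1}, \]
one expresses $\tfrac{d^k}{dz^k}Q_j(z)$ and $\tfrac{d^k}{dz^k}a_{e,j}(z)$ for $k=0,\ldots,3$ as finite sums of products of factors $(\zeta-A_z)^{-1}$ (bounded on $\gamma_j$) and $\tfrac{d^l}{dz^l}A_z$ with $l\le k$ (bounded by (A2)(1)), yielding the required uniform estimates. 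The principal technical obstacle remains the careful bookkeeping of the $\i 0_+$ boundary limit in Step~1; everything afterward is routine finite-dimensional analytic perturbation theory.
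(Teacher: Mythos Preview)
Your proposal is correct and follows essentially the same route as the paper: establish $\|A_z-\Lambda_e\|\ple|\Delta|+|z-e|$, use the Neumann series for $(\zeta-A_z)^{-1}$ on circles of radius $\delta/2$ about the $\lambda_{e,j}$ to define the Riesz projections $Q_j(z)$ and eigenvalues $a_{e,j}(z)=\mathrm{tr}\,A_zQ_j(z)$, then differentiate under the fixed contour for the derivative bounds. Your decomposition of the key estimate into a $\Delta$-piece (via (A3)) and a $z$-piece (via (A2)(1)) is in fact more explicit than the paper, which attributes the whole bound to (A3); likewise your remark on passing from matrix-element bounds to operator-norm bounds via finite rank is a detail the paper leaves implicit.
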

The simplicity of the spectrum implies the spectral representation
\begin{equation}
\label{naz}
A_z = \sum_{j=0}^{m_e-1} a_{e,j}(z) Q_j(z). 
\end{equation}

{\em Proof of Lemma \ref{lemma1}.\ } When necessary, we display the $\d$-dependence of $A_z$ by $A_z(\d)$. (Here, $A_x$ for $x\in\mathbb R$ is understood as the limit of $A_z$, as $z\rightarrow x$, $z\in{\mathbb C}_-$.) We have $A_e(0)=\Lambda_e$ and, by assumption (A3),
\begin{equation}
\label{04}
\|\Lambda_e-A_z(\d)\|\ple |\d|+|z-e|.
\end{equation}
Assumption (A4) implies that 
\begin{equation}
\label{03}
\|(\Lambda_e-\zeta)^{-1}\| \ple \frac{1}{{\rm dist}(\zeta,{\rm spec}(\Lambda_e))}.
\end{equation} 
Using the standard Neumann series for resolvents, together with the estimates \eqref{04} and \eqref{03}, yields
\begin{equation}
\label{05}
(A_z(\d)-\zeta)^{-1} = (\Lambda_e-\zeta)^{-1} +O\left(\frac{|\d|+|z-e|}{{\rm dist}(\zeta,{\rm spec}(\Lambda_e))^2}\right),
\end{equation}
provided $|\d|$, $|z-e| < c_0\, {\rm dist}(\zeta,{\rm spec}(\Lambda_e))$, for some constant $c_0$ independent of $\d$, $z$. Let $\C_j$ be the circle centered at $\lambda_{e,j}$ with radius $\delta/2$ and define
\begin{equation}
\label{06}
Q_j(\d,z) = \frac{-1}{2\pi \i}\oint_{\C_j}(A_z(\d)-\zeta)^{-1}d \zeta.
\end{equation}
Note that $Q_j(0,e)$ is the Riesz eigenprojection of $\Lambda_e$ associated to the eigenvalue $\lambda_{e,j}$. Using \eqref{05} and \eqref{06} we obtain that $\|Q_j(\d,z)-Q_j(0,e)\|<1$, provided $|\d|$, $|z-e|<c_0\delta/2$ and $|\d|+|z-e| < c_1\delta^2/4$, for some $c_1$ independent of $\d$, $z$. This proves point 1. of Lemma \ref{lemma1}.

Next, 
\begin{equation}
\label{07}
Q_j'=\tfrac{d}{dz}Q_j=\frac{1}{2\pi\i}\oint_{\C_j}(A_z(\d)-\zeta)^{-1}A'_z(\d) (A_z(\d)-\zeta)^{-1}d \zeta \ \ple \,1.
\end{equation}
Since $a_{e,j}(z)={\rm Tr} A_zQ_j(z)$ we get $a'_{e,j}(z)={\rm Tr}(A'_zQ_j(z)+A_zQ'_j(z))\ple 1$. The statements about the higher derivatives follow in the same manner. This shows point 2 and completes the proof of Lemma \ref{lemma1}.\hfill \qed

\subsection{Estimates for $z$ in a vicinity of a partially stable eigenvalue $e$.}
\label{nestzero}

We introduce the operators
\begin{equation}
\label{qtilde}
\widetilde Q_j = -Q_j(e)P_eIR^{P_e}_{e-\i 0_+}-R^{P_e}_{e-\i 0_+}IP_eQ_j(e) +\d R^{P_e}_{e-\i 0_+} IP_eQ_j(e)P_eIR^{P_e}_{e-\i 0_+},\qquad j\ge 1,
\end{equation}
where $Q_j(e)$ is determined by \eqref{Az} and $R^{P_e}_{e-\i 0_+}$ is the limit of $R^{P_e}_{z}=(\bar L-z)^{-1}$ as $z$ approaches $e$ through the lower half plane. Note that $Q_j(e)=P_{e,j}+O(|\d|)$, see also \eqref{n7}. 

\begin{prop}
\label{propzero}
Let $\Pi_{E_e}$ be the orthogonal projection associated to the eigenvalue $E_e$ of $L$. We have, for $\phi,\psi\in\mathcal D$,
\begin{equation}
\label{13.01}
J_e(t) = \e^{\i t E_e} \scalprod{\phi}{\Pi_{E_e}\psi} + \sum_{j=1}^{m_e-1} \e^{\i t(e+\d^2  a_{e,j}(e))}\scalprod{\phi}{(Q_j(e)+\d\, \widetilde Q_j)\psi} +R_e(t) 
\end{equation}
where (recall that $w>0$ is the arbitrary parameter in \eqref{14})
$$
|R_e(t)|\le C \,\frac{1+\e^{wt}/t}{t},
$$
for a constant $C$ independent of $\d,t,w$, and where 
\begin{equation}
\label{13.1}
a_{e,j}(e) = \lambda_{e,j} +O(|\d|),\qquad Q_j(e)=P_{e,j}+O(|\d|).
\end{equation}
\end{prop}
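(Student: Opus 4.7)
The plan is to make rigorous the outline given in Section~1.3 (for $e=0$) and extend it to an arbitrary partially stable eigenvalue~$e$. I begin with the Feshbach decomposition \eqref{0}, applied with $Q=P_e$, which expresses $R_z$ as the sum $\F(z)^{-1}+{\mathcal B}(z)+R_z^{P_e}$. Inserted into \eqref{10}, this splits $J_e(t)$ into three integrals. The integral over $\frac{d}{dz}R_z^{P_e}$ is immediately controlled by Assumption (A2)(1), which bounds the integrand uniformly on $\Gamma_e$; together with the prefactor $1/(\i t)$ this yields an $O(1/t)$ contribution.

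For the Feshbach term I invoke Lemma~\ref{lemma1} to write
\begin{equation*}
\F(z)^{-1}=\sum_{j=0}^{m_e-1}\frac{Q_j(z)}{e-z+\d^{2} a_{e,j}(z)}\qquad\text{on } \mathrm{Ran}\,P_e.
\end{equation*}
The isospectrality relation \eqref{n5} forces the $j=0$ denominator to vanish at $z=E_e$: factoring $e-z+\d^{2}a_{e,0}(z)=(E_e-z)\bigl(1-\d^{2}\varphi(z)\bigr)$ with $\varphi(z)=[a_{e,0}(z)-a_{e,0}(E_e)]/(z-E_e)$ analytic and bounded by Lemma~\ref{lemma1}, I isolate a double pole at $z=E_e$ after applying $\frac{d}{dz}$. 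For $j\ge 1$, Assumption (A4)(2) and Lemma~\ref{lemma1} place the corresponding zeros at points $e+\d^{2}a_{e,j}(z_j)$ a distance $\sim \d^{2}\,{\rm Im}\,\lambda_{e,j}$ above the real axis. I then close $\Gamma_e=[e-\alpha,e+\alpha]-\i w$ into a rectangle by adjoining vertical segments $[e\pm\alpha-\i w,\,e\pm\alpha+\i T]$ and a horizontal segment at height $+\i T$, and let $T\to\infty$. The top piece vanishes through the $\e^{\i t z}$ factor. The vertical sides produce a term of size $\e^{w t}/t$ (one uses the decay $\F(z)^{-1}\ple 1/|z|$ for large $|{\rm Im}\,z|$, combined with one further integration by parts in $z$). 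The residue theorem applied to the rectangle then yields the contribution $\e^{\i tE_e}\scalprod{\phi}{P_e\Pi_{E_e}P_e\psi}/(1-\d^{2}a'_{e,0}(E_e))$ from $j=0$ and contributions $\e^{\i t(e+\d^{2}a_{e,j}(e))}\scalprod{\phi}{Q_j(e)\psi}$ from $j\ge 1$.

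The term $\frac{d}{dz}{\mathcal B}(z)$ has, by \eqref{0.1} and the same spectral expansion, the structure \eqref{p10}--\eqref{p9} with $Q_j(z)$ replaced by the $\d$-weighted operators $q_j(z)$. The same contour argument applies verbatim. The $j\ge 1$ residues provide exactly the operators $\d\,\widetilde Q_j$ defined in \eqref{qtilde} (one reads off the three summands of \eqref{qtilde} from the three summands of \eqref{p9} at $z=e$). The $j=0$ residue at $z=E_e$ combines with the Feshbach contribution from the previous step to reconstruct the full spectral projection $\Pi_{E_e}$; this identification uses the representation \eqref{p12}, $\Pi_{E_e}=\lim_{\epsilon\to 0_+}(\i\epsilon)R_{E_e-\i\epsilon}$, applied to the Feshbach decomposition \eqref{0} at $z=E_e$ and matching powers of $\i\epsilon$ as $\epsilon\to 0_+$. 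The estimates \eqref{13.1} are direct consequences of Lemma~\ref{lemma1} applied at $z=e$ with $|\d|$ small.

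The main obstacle is the bookkeeping in the $j=0$ residue extraction: one must cleanly isolate the singular part of $\frac{d}{dz}\F(z)^{-1}$ (and of $\frac{d}{dz}{\mathcal B}(z)$) around $z=E_e$ despite $z$-dependence in \emph{both} $Q_j(z)$ and $a_{e,j}(z)$, and then perform the ``Feshbach reconstruction'' identifying the sum of the two $j=0$ residues as $\Pi_{E_e}$. The surrounding analytic content---uniform control of $(E_e-z)^{-1}$-singularities, a Neumann series for $(1-\d^{2}\varphi(z))^{-1}$ that is valid up to ${\rm Im}\,z>0$ small, and uniform derivative bounds for the vertical pieces of the rectangular contour---is supplied by Lemma~\ref{lemma1} and Assumption (A2); the $\e^{w t}/t$ behaviour of the remainder arises precisely from the vertical segments, while the bounded $O(1/t)$ piece comes from the $R_z^{P_e}$ contribution and the contour pieces where no pole is enclosed.
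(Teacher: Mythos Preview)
Your approach is essentially the paper's: Feshbach decomposition with $P_e$, spectral expansion of $\F(z)^{-1}$ via Lemma~\ref{lemma1}, extraction of pole terms, contour closing, and reconstruction of $\Pi_{E_e}$ via \eqref{p12}. However, your description of the contour step contains a genuine slip. The function $\F(z)^{-1}$ (and $A_z$, and $\tfrac{d}{dz}\F(z)^{-1}$) is only defined for $z\in\overline{\mathbb C}_-$: Lemma~\ref{lemma1} and Assumption (A2) give analyticity in ${\mathbb C}_-$ and continuity up to the real axis, but no extension into ${\rm Im}\,z>0$. Thus your ``Neumann series valid up to ${\rm Im}\,z>0$ small'' and the bound ``$\F(z)^{-1}\ple 1/|z|$ for large $|{\rm Im}\,z|$'' on the upward vertical segments are not available, and you cannot apply the residue theorem to $\tfrac{d}{dz}\F(z)^{-1}$ itself on a rectangle reaching into the upper half plane.

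The paper's remedy is to first extract from each summand $T_j$ the \emph{explicit rational} singular part---namely $(z-E_e)^{-2}Q_0(E_e)/(1-\d^2 a'_0(E_e))$ for $j=0$ and $(z-e-\d^2 a_j(e))^{-2}Q_j(e)$ for $j\ge 1$, with \emph{frozen} coefficients---and close the contour \emph{only for these rational functions}, which are globally meromorphic; the vertical-side bound $\e^{wt}/t$ then comes trivially from $\int_{-w}^\infty \e^{-yt}\,dy$ against an integrand bounded by ${\rm dist}(e\pm\alpha,\text{pole})^{-2}$, with no further integration by parts. The remainders $\widetilde T_j$ after extraction are controlled \emph{directly on $\G_e\subset\overline{\mathbb C}_-$}, without any contour deformation: for $j=0$ a Taylor expansion using the third $z$-derivative (hence $k=3$ in (A2)) shows the remainder is $O(1)$ uniformly (see \eqref{013}--\eqref{015}); for $j\ge1$ the lower bound $|z-e-\d^2 a_j(z)|\ge\tfrac12|z-e-\d^2\lambda_{e,j}|$ on $\G_e$ yields $L^1$ estimates of the type \eqref{024}--\eqref{0.26}. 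One minor slip: your $j=0$ residue carries a spurious extra factor $(1-\d^2 a'_{e,0}(E_e))^{-1}$; that factor is already absorbed in $P_e\Pi_{E_e}P_e=Q_0(E_e)/(1-\d^2 a'_0(E_e))$, see \eqref{032}.
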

{\em Proof of Proposition \ref{propzero}.\ } We apply the Feshbach map with projection $P_e$ (having rank $m_e$), 
\begin{equation}
\label{01}
\F(z) \equiv\F(L-z;P_e)= e-z+\d^2A_z,
\end{equation}
where $A_z$ is given in \eqref{Az}. Due to Condition (A2), $z\mapsto A_z$ is analytic for $z\in{\mathbb C}_-$, $|z-e|<\alpha$, and its $z$-derivatives up to degree 3 stay bounded as ${\rm Im}z\rightarrow 0_-$. According to the decomposition \eqref{0}, we have
\begin{equation}
\label{08}
\int_{\G_e}\e^{\i tz}\scalprod{\phi}{\textstyle\frac{d}{dz}R_z\psi} dz = \int_{\G_e}\e^{\i tz}\left[ \scalprod{\phi}{\textstyle\frac{d}{dz}\F(z)^{-1}\psi} + \scalprod{\phi}{\textstyle\frac{d}{dz}{\mathcal B}(z)\psi}+ \scalprod{\phi}{\textstyle\frac{d}{dz}R^{P_e}_z\psi}  \right] dz,
\end{equation}
where 
\begin{equation}
{\mathcal B}(z) = -\d \F(z)^{-1} P_e I R_z^{P_e} - \d R_z^{P_e} I P_e\F(z)^{-1} +\d^2 R_z^{P_e} I P_e \F(z)^{-1} P_e I R_z^{P_e} .
\label{09'}
\end{equation}
To examine $\F(z)^{-1}$, we use the spectral representation of the operator $A_z$.

\subsubsection{The contribution to \eqref{08} from $\tfrac{d}{dz}\F(z)^{-1}$} In this subsection, we will simply write 
$$
a_j \equiv a_{e,j},\qquad 0\le j\le m_e-1,
$$
to ease the notation. Due to \eqref{01} and Lemma \ref{lemma1}, 
\begin{equation}
\label{033}
\F(z)^{-1} = \sum_{j=0}^{m_e-1}\frac{1}{e-z+\d^2a_j}Q_j.
\end{equation}
We analyze the first term on the right side of \eqref{08}, using that  
\begin{equation}
\tfrac{d}{dz}\F(z)^{-1} =\sum_{j=0}^{m_e-1} T_j,\qquad \mbox{where}\qquad T_j=\frac{1-\d^2a_j'}{(z-e-\d^2a_j)^2}Q_j + \frac{1}{e-z+\d^2a_j}Q'_j .
\label{09}
\end{equation}
We examine the singularities of $T_j$ in $z$. By the isospectrality of the Feshbach map, we know that $e-E_e+\d^2A_{E_e}$ has an eigenvalue zero (see also \eqref{n5}). Therefore, $e-E_e+\d^2a_0(E_e)=0$. Also, $a_j(z)=\lambda_{e,j}+O(|\d|+|z-e|)$ for $j=1,\ldots,m_e-1$. Consider first $T_0$. We have 
$$
z-e-\d^2a_0(z)=z-e-\d^2a_0(E_e)+\d^2(a_0(E_e)-a_0(z))=z-E_e+\d^2(a_0(E_e)-a_0(z))
$$
and so
\begin{eqnarray}
\label{012}
\frac{1-\d^2a_0'}{(z-e-\d^2a_0)^2} &=&\frac{1}{(z-E_e)^2}\frac{1-\d^2a'_0(z)}{[1-\d^2(a_0(z)-a_0(E_e))/(z-E_e)]^2}\\
& =& \frac{1}{(z-E_e)^2}\frac{1}{1-\d^2 a'_0(E_e)} +\frac{h(z)}{(z-E_e)^2},
\nonumber
\end{eqnarray}
where
\begin{equation}
\label{013}
h(z) = \frac{1-\d^2a'_0(z)}{[1-\d^2(a_0(z)-a_0(E_e))/(z-E_e)]^2} -\frac{1-\d^2a'_0(E_e)}{[1-\d^2a_0'(E_e)]^2}  =O(\d^2|z-E_e|^2).
\end{equation}
To arrive at \eqref{013}, we expand $h(z)$ around a point $z_0\in {\mathbb C}_-$ which is very close to $E_e$,
$$
h(z)=h(z_0) +(z-z_0)h'(z_0)+\int_{[z_0,z]}d s\int_{[s,z_0]}ds'h''(s').
$$
The integrals are over paths (straight lines) in the lower complex plane. Then, sending $z_0\rightarrow E_e$, using that $h(E_e)=h'(E_e)=0$ and controlling the double integral with the third derivative of $h$, we arrive at \eqref{013}. In this argument, we assume the derivatives up to order three to have a continuous extension as ${\rm Im} z\rightarrow 0_-$.

Thus
\begin{eqnarray}
\label{014}
T_0 &=& \frac{1}{(z-E_e)^2}\frac{Q_0(E_e)}{1-\d^2 a'_0(E_e)} +  \frac{1}{z-E_e}\frac{[Q_0(z)-Q_0(E_e)]/(z-E_e)}{1-\d^2 a'_0(E_e)}\\
&&-\frac{1}{z-E_e}\frac{Q'_0(z)}{1-\d^2[a_0(z)-a_0(E_e)]/(z-E_e)} +O(\d^2).
\nonumber
\end{eqnarray}
An expansion of the {\em sum} of  the second and third term on the right side of \eqref{014} shows that this term is $O(1)$ uniformly in $z\in\G_e$, giving the bound
\begin{equation}
\label{015}
T_0= \frac{1}{(z-E_e)^2}\frac{Q_0(E_e)}{1-\d^2 a'_0(E_e)} +O(1+\d^2).
\end{equation}
Therefore, 
\begin{equation}
\label{016}
\int_{\G_e}\e^{\i tz}\scalprod{\phi}{T_0\psi}=\frac{\scalprod{\phi}{Q_0(E_e)\psi}}{1-\d^2 a'_0(E_e)} \int_{\G_e} \frac{\e^{\i tz}}{(z-E_e)^2}dz + O(1).
\end{equation}
The remaining integral on the right side is estimated using the standard Cauchy formula from complex analysis. Namely, we complete $\G_e$ into a rectangular closed path, adding the vertical pieces ${\mathcal C}_\pm=\{e\pm\alpha+\i y\ :\ y\in [-w,R]\}$ and the horizontal roof $\{x+\i R\ :\ e-\alpha\le x\le e+\alpha\}$.  Then we obtain from the Cauchy integral formula of basic complex analysis, upon taking $R\rightarrow\infty$, that 
\begin{equation}
\label{017}
\int_{\G_e}\frac{\e^{\i tz}}{(z-E_e)^2}dz = 2\pi\i (\e^{\i tz})'|_{z=E_e} + O(\e^{wt}/t) = \i t\, 2\pi\i\, \e^{\i tE_e}  + O(\e^{wt}/t).
\end{equation}
The remainder term comes from the integrals along the two vertical pieces of the path, which are bounded above by $\int_{-w}^\infty \e^{-yt}dy$. Note that $w>0$ is arbitrary (see \eqref{4}) and we will take $w\rightarrow 0$ which will make the remainder in \eqref{017} to be $O(1/t)$. Combining \eqref{016} and \eqref{017} yields
\begin{equation}
\label{018}
\frac{1}{\i t}\frac{1}{2\pi\i} \int_{\G_e}\e^{\i tz}\scalprod{\phi}{T_0\psi}dz = \e^{\i tE_e} \frac{ \scalprod{\phi}{Q_0(E_e)\psi}}{1-\d^2 a'_0(E_e)}+O(1/t+\e^{wt}/t^2).
\end{equation}

Next, we analyze $T_j$ in \eqref{09}, for  $j\ge1$.  Recall that $a_j(e)|_{\d=0}=\lambda_{e,j}$ are the eigenvalues with strictly positive imaginary part of the level shift operator $\Lambda_e$. The integrands behave in a  different way now, since ${\rm Im}a_j(e)>0$ (while before, $a_0(E_e)=0$). The following bound is useful,
\begin{eqnarray}
|z-e-\d^2a_j(z)| &=& \left| z-e-\d^2\lambda_{e,j} -\d^2(a_j(e) - \lambda_{e,j})-\d^2(a_j(z)-a_j(e)) \right|\nonumber\\
&\ge& |z-e-\d^2\lambda_{e,j}| -c_1\d^2(|\d|+|z-e|)\nonumber\\
&\ge& \tfrac12  |z-e-\d^2\lambda_{e,j}|,
\label{020}
\end{eqnarray}
provided that $|\d|$, $|z-e|\le c_2 {\rm Im}\lambda_{e,j}$, where $c_1$  and $c_2=1/(2c_1)$ are independent of $\d$ and $z$. To arrive at the last inequality, \eqref{020}, we proceed as follows: the inequality is equivalent to $2c_1\d^2(|\d|+|z-e|)\le |z-e-\d^2\lambda_{e,j}|$. Now  $|z-e-\d^2\lambda_{e,j}| \ge |{\rm Im}z-\d^2{\rm Im}\lambda_{e,j}|\ge \d^2{\rm Im}\lambda_{e,j}$, since ${\rm Im} z<0$ and ${\rm Im}\lambda_{e,j}>0$.

We have
\begin{eqnarray}
\label{019}
T_j &=& \frac{Q_j(e)}{(z-e-\d^2a_j(e))^2}  - \frac{\d^2a'_j(z)Q_j(z)}{(z-e-\d^2a_j(z))^2}+ R_j+S_j,
\end{eqnarray}
where 
\begin{eqnarray}
R_j &=& \frac{1}{z-e-\d^2a_j(z)}\left[\frac{z-e}{z-e-\d^2a_j(z)} \frac{Q_j(z)-Q_j(e)}{z-e}-Q_j'(z)\right] \nonumber\\
&=& \frac{1}{z-e-\d^2a_j(z)}\left[\frac{Q_j(z)-Q_j(e)}{z-e}-Q_j'(z) +\frac{\d^2a_j(z)}{z-e-\d^2a_j(z)}\frac{Q_j(z)-Q_j(e)}{z-e} \right] \nonumber\\
&=& O\left(\frac{|z-e|}{|z-e-\d^2\lambda_{e,j}|} + \frac{\d^2}{|z-e-\d^2\lambda_{e,j}|^2}\right)
\label{021}
\end{eqnarray}
and
\begin{equation}
\label{022}
S_j = \frac{-2\d^2 (z-e)(a_j(e)-a_j(z)) +\d^4[a_j(e)^2-a_j(z)^2]}{ [z-e-\d^2a_j(z)]^2 [z-e-\d^2a_j(e)]^2} Q_j(0)= O\left( \frac{\d^2|z-e|^2+\d^4|z-e|}{|z-e-\d^2\lambda_{e,j}|^4} \right).
\end{equation}
To arrive at the estimates \eqref{021} and \eqref{022} we have used \eqref{020}. Similarly, the second term on the right side of \eqref{019} is $O(\d^2/|z-e-\d^2\lambda_{e,j}|^2)$ and so we obtain
\begin{equation}
\label{023}
T_j = \frac{Q_j(e)}{(z-e-\d^2a_j(e))^2}  +\widetilde T_j,
\end{equation}
with
\begin{equation}
\label{023.1}
\|\widetilde T_j\|\ple \frac{|z-e|}{|z-e-\d^2\lambda_{e,j}|} + \frac{\d^2}{|z-e-\d^2\lambda_{e,j}|^2}+ \frac{\d^2|z-e|^2+\d^4|z-e|}{|z-e-\d^2\lambda_{e,j}|^4}.
\end{equation}
Note that, with $\lambda_{e,j}=\xi_j+\i\eta_j$ and $w, \eta_j>0$, we have 
$$
\frac{|z-e|}{|z-e-\d^2\lambda_{e,j}|}\le 1+\frac{\d^2|\lambda_{e,j}|}{|z-e-\d^2\lambda_{e,j}|}\le1+\frac{|\lambda_{e,j}|}{\eta_j}\ple 1,
$$ 
so
\begin{equation}
\label{024}
\int_{\G_e}\frac{|z-e|}{|z-e-\d^2\lambda_{e,j}|}dx \ple 1.
\end{equation}
Also,
\begin{equation}
\label{025}
\int_{\G_e} \frac{\d^2}{|z-e-\d^2\lambda_{e,j}|^2} dx \le2\int_0^{\alpha+\d^2|\xi_j|}\frac{\d^2}{x^2 +\d^4\eta_j^2} dx\le 2\int_0^\infty\frac{dy}{y^2+ \eta_j^2}\ple 1.
\end{equation}
Very similarly, one sees that
\begin{equation}
\label{0.26}
\int_{\G_e}\frac{\d^2|z-e|^2+\d^4|z-e|}{|z-e-\d^2\lambda_{e,j}|^4}dx\ple 1
\end{equation}
and it follows that 
\begin{equation}
\label{027}
\int_{\G_e}\e^{\i tz} \widetilde T_j \,dz \ple 1.
\end{equation}
Next, using Cauchy's integral formula as above (\eqref{015}-\eqref{017}), we obtain that 
\begin{equation}
\label{028}
\int_{\G_e} \e^{\i tz}\frac{Q_j(e)}{(z-e-\d^2a_j(e))^2}dz = \i t\, 2\pi\i \, \e^{\i t (e+\d^2 a_j(e))} Q_j(e) +O(\e^{wt}/t).
\end{equation}
Combining \eqref{028} with \eqref{023} and \eqref{027}, we see that for all $j\geq 1$,
\begin{equation}
\label{029}
\frac{1}{\i t}\frac{1}{2\pi \i} \int_{\G_e} \e^{\i tz} \scalprod{\phi}{T_j\psi} dz = \e^{\i t(e+\d^2a_j(e))}\scalprod{\phi}{Q_j(e)\psi} +O(1/t+\e^{w t}/t^2).
\end{equation}

At this point, it is instructive to explain the coefficient $(1-\d^2a'_0(E_e))^{-1}$ in front of the non-decaying term in \eqref{018}. Let $\Pi_{E_e}$ be the projection onto the embedded eigenvalue $E_e$ of $L$. We have (in the strong sense) $\Pi_{E_e}=\lim_{\epsilon \rightarrow 0_+}(\i \epsilon) (L-E_e+\i \epsilon)^{-1}$, so by \eqref{0},
\begin{equation}
\label{031}
\Pi_{E_e}=\lim_{\epsilon \rightarrow 0_+}(\i \epsilon)\left\{  \F(L-E_e+\i \epsilon;P_e)^{-1} +{\mathcal B}(E_e-\i\epsilon) +R_{E_e-\i\epsilon}^{P_e}\right\}.
\end{equation}
The $P_e$-block of the decomposition is, by \eqref{033},
\begin{equation}
P_e\Pi_{E_e}P_e = \lim_{\epsilon\rightarrow 0_+}(\i\epsilon)\F(L-E_e+\i\epsilon;P_e)^{-1}
=\lim_{\epsilon\rightarrow 0_+}\sum_{j=0}^{m_e-1}\frac{\i\epsilon}{e-E_e+\i\epsilon+\d^2a_j(E_e-\i\epsilon)} Q_j(E_e-\i\epsilon).
\end{equation}
{}For $j\ge 1$, we have ${\rm Im}a_j(E_e)>0$ and the corresponding term in the sum vanishes in the limit $\epsilon\rightarrow 0$. Hence
\begin{equation}
\label{032}
P_e\Pi_{E_e}P_e=\lim_{\epsilon\rightarrow 0_+}(\i\epsilon) \frac{Q_0(E_e-\i\epsilon)}{e-E_e+\i\epsilon+\d^2a_0(E_e-\i\epsilon)}  =\frac{Q_0(E_e)}{1-\d^2a_0'(E_e)}.
\end{equation}
We have used the relation $e-E_e+\d^2a_0(E_e)=0$ (see after \eqref{09}). Therefore, the non-decaying, oscillating term on the right side in \eqref{018} is
\begin{equation}
\label{034}
 \frac{\e^{\i tE_e}}{1-\d^2a_0'(E_e)} \scalprod{\phi}{Q_0(E_e)\psi} =\e^{\i tE_e} \scalprod{\phi}{P_e\Pi_{E_e}P_e\psi}.
\end{equation}
Finally, we combine \eqref{018}, \eqref{029} and \eqref{034} to arrive at
\begin{eqnarray}
\label{035}
\frac{1}{\i t}\frac{1}{2\pi\i}\int_{\G_e}\e^{\i tz}\scalprod{\phi}{\tfrac{d}{dz}\F(z)^{-1}\psi} d z &=&\e^{\i tE_e} \scalprod{\phi}{P_e\Pi_{E_e}P_e \psi}+\sum_{j=1}^{m_e-1} \e^{\i t(e+\d^2a_j(e))}\scalprod{\phi}{Q_j(e)\psi}\nonumber\\
&& + O(1/t+\e^{w t}/t^2).
\end{eqnarray}
On the right side appears the $P_e$-block $P_e\Pi_{E_e}P_e$ of the projection $\Pi_{E_e}$. The contributions of the terms in \eqref{08} with ${\mathcal B}(z)$ and $R_{z}^{P_e}$ will add the remaining blocks to finally give the full expression $\scalprod{\phi}{\Pi_{E_e}\psi}$.

\subsubsection{The contribution to \eqref{08} from $\tfrac{d}{dz}{\mathcal B}(z)$} From \eqref{09'} and \eqref{033}, we have 
\begin{equation}
\label{036}
\tfrac{d}{dz}\scalprod{\phi}{{\mathcal B}(z)\psi} = \d \ \sum_{j=0}^{m_e-1} \frac{1-\d^2a'_j}{(z-e-\d^2a_j)^2}\, q_j+ \frac{1}{e-z+\d^2a_j} \, q'_j,
\end{equation}
where
\begin{equation}
\label{037}
q_j(z) =  - \scalprod{\phi}{[Q_jP_e I R_z^{P_e} + R_z^{P_e} I P_eQ_j -\d  R_z^{P_e} I P_e Q_j P_e I R_z^{P_e} ]\psi}.
\end{equation}
The summand in \eqref{036} is of the same form as $T_j$ in \eqref{09}, with $Q_j$ replaced by $q_j$. We may thus repeat the analysis leading to \eqref{018} and \eqref{029}, giving
\begin{eqnarray}
\label{038}
\frac{1}{\i t}\frac{1}{2\pi\i} \int_{\G_e}\e^{\i tz}\tfrac{d}{dz}\scalprod{\phi}{{\mathcal B}(z)\psi}dz &=& \e^{\i tE_e}\frac{\d\, q_0(E_e)}{1-\d^2 a'_0(E_e)} +\d\,\sum_{j=1}^{m_e-1} \e^{\i t(e+\d^2a_j(e))}q_j(e) \\
&&\ \ + C(\phi,\psi) \cdot \,O\left(|\d|/t+|\d|\e^{wt}/t^2\right).\nonumber
\end{eqnarray}
Recalling \eqref{031}, \eqref{032} and using that $R_{E_e-\i\epsilon}^{P_e}$ stays bounded as $\epsilon\rightarrow 0_+$, we get
\begin{eqnarray}
\scalprod{\phi}{\Pi_{E_e}\psi} &=& \lim_{\epsilon\rightarrow 0_+} (\i\epsilon) \scalprod{\phi}{\left\{\F(L-E_e+\i\epsilon;P_e)^{-1} +{\mathcal B}(E_e-\i\epsilon)\right\}\psi}\nonumber\\
&=& \scalprod{\phi}{P_e\Pi_{E_e}P_e\psi} + \frac{\d\,q_0(E_e)}{1-\d^2 a'_0(E_e)}.
\label{039}
\end{eqnarray}

\subsubsection{The contribution to \eqref{08} from $\tfrac{d}{dz}R_z^{P_e}$}  By assumption (A2), \eqref{nlap}, we have $\scalprod{\phi}{\tfrac{d}{dz}R_z^{P_e}\psi}\ple 1$ and so we get 
\begin{equation}
\label{040}
\frac{1}{\i t}\frac{1}{2\pi \i} \int_{\G_e} \e^{\i tz} \scalprod{\phi}{\tfrac{d}{dz}R_z^{P_e}\psi}d z \ple \frac{C(\phi,\psi)}{t}.
\end{equation}
Combining \eqref{08}, \eqref{035}, \eqref{038}, \eqref{039} and \eqref{040} gives \eqref{13.01}. This concludes the proof of Proposition \ref{propzero}.\hfill \qed

\subsection{Estimates for $z$ in a vicinity of an unstable eigenvalue $e$.}
\label{nestunstable}

The analysis is the same, actually somewhat easier, than the one presented in Section \ref{nestzero}. Indeed, for an unstable eigenvalue $e$, all the $\lambda_{e,j}$ have strictly positive imaginary part (see Assumption (A4), (1)). Therefore, we can proceed as in  Section \ref{nestzero}, and in \eqref{09}, all the terms $T_j$ (even for $j=0$) are now treated as above, after \eqref{018}. We immediately obtain the following result. 
\begin{prop}
\label{prope}
We have, for $\phi,\psi\in\mathcal D$,
\begin{equation}
\label{n13}
J_e(t) =  \sum_{j=0}^{m_e-1} \e^{\i t(e+\d^2  a_{e,j}(e))}\scalprod{\phi}{(Q_j(e)+\d\, \widetilde Q_j)\psi} +R_e(t) 
\end{equation}
where (recall that $w>0$ is the arbitrary parameter in \eqref{14})
$$
|R_e(t)|\le C \,\frac{1+\e^{wt}/t}{t},
$$
for a constant $C$ independent of $\d,t,w$ and 
\begin{equation}
\label{n13.1}
a_{e,j}(e) = \lambda_{e,j} +O(|\d|),\qquad Q_j(e)=P_{e,j}+O(|\d|).
\end{equation}
Also $\widetilde{Q}_j$ is defined in \eqref{qtilde}. 
\end{prop}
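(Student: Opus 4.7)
The plan is to mirror the proof of Proposition \ref{propzero} step by step, exploiting the fact that Assumption (A4)(1) now forces $\mathrm{Im}\,\lambda_{e,j} > 0$ for \emph{every} $j = 0, \ldots, m_e - 1$. I would start by applying the Feshbach map with projection $P_e$ and the decomposition \eqref{0} to obtain
\[
\int_{\Gamma_e} e^{itz}\,\bigl\langle\phi, \tfrac{d}{dz}R_z\psi\bigr\rangle\, dz = \int_{\Gamma_e} e^{itz}\!\left[\bigl\langle\phi, \tfrac{d}{dz}\mathcal{F}(z)^{-1}\psi\bigr\rangle + \bigl\langle\phi, \tfrac{d}{dz}\mathcal{B}(z)\psi\bigr\rangle + \bigl\langle\phi, \tfrac{d}{dz}R_z^{P_e}\psi\bigr\rangle\right]dz,
\]
exactly as in \eqref{08}, with $\mathcal{F}(z)$ and $\mathcal{B}(z)$ given by \eqref{01} and \eqref{09'}. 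Lemma \ref{lemma1} then provides the spectral decomposition \eqref{033}, so that $\tfrac{d}{dz}\mathcal{F}(z)^{-1} = \sum_{j=0}^{m_e-1} T_j$ in the notation of \eqref{09}.

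The key observation is that in the partially stable case the term $T_0$ had to be singled out because $\lambda_{e,0}\in\mathbb{R}$ placed its pole on the real axis and generated the non-decaying contribution $e^{itE_e}\langle\phi, P_e\Pi_{E_e}P_e\psi\rangle$. For unstable $e$, the bound \eqref{020}, which relies only on $\mathrm{Im}\,\lambda_{e,j} > 0$, is valid for every single index $j$. I would therefore apply the chain of estimates \eqref{019}--\eqref{027} uniformly to each $T_j$, with no term needing separate treatment. The Cauchy formula computation \eqref{028} then yields, after summing,
\[
\frac{1}{it}\frac{1}{2\pi i}\int_{\Gamma_e} e^{itz}\,\bigl\langle\phi, \tfrac{d}{dz}\mathcal{F}(z)^{-1}\psi\bigr\rangle\, dz = \sum_{j=0}^{m_e-1} e^{it(e+\delta^2 a_{e,j}(e))}\,\bigl\langle\phi, Q_j(e)\psi\bigr\rangle + O\!\left(1/t + e^{wt}/t^2\right).
\]

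The contribution of $\mathcal{B}(z)$ is handled identically. From \eqref{036}--\eqref{037}, $\tfrac{d}{dz}\mathcal{B}(z)$ has the same pole structure as $\tfrac{d}{dz}\mathcal{F}(z)^{-1}$ with $Q_j$ replaced by the bounded operator $q_j$, so the same argument over all $j$ now produces $\delta\sum_{j=0}^{m_e-1} e^{it(e+\delta^2 a_{e,j}(e))} q_j(e) + O(|\delta|/t + |\delta|e^{wt}/t^2)$. Comparing $q_j(e)$ with the definition \eqref{qtilde} identifies $\delta\, q_j(e) = \delta\,\langle\phi,\widetilde Q_j\psi\rangle$ (where the definition \eqref{qtilde} is now extended to $j=0$, using $Q_0(e)$ from Lemma \ref{lemma1}). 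Finally, Assumption (A2)(1) gives $\langle\phi, \tfrac{d}{dz}R_z^{P_e}\psi\rangle \prec 1$, so the last term contributes $O(1/t)$ after the factor $1/(it)$. Summing the three pieces gives \eqref{n13} with the stated remainder bound.

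There is no real technical obstacle; the unstable case is genuinely easier than the partially stable one precisely because no resonance sits on the real axis. In particular, the absence of a real $\lambda_{e,0}$ means no persistent oscillating term $e^{itE_e}$ appears in $J_e(t)$, consistent with the hypothesis that $L$ has no eigenvalue near an unstable $e$ for small $\delta$; the only subtle bookkeeping is to check that extending the definition \eqref{qtilde} to $j=0$ is compatible with the identification arising from $\tfrac{d}{dz}\mathcal{B}(z)$, which is immediate from \eqref{037}.
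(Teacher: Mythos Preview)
Your proposal is correct and follows essentially the same approach as the paper's own proof, which simply remarks that for an unstable $e$ all $\lambda_{e,j}$ have strictly positive imaginary part so that every $T_j$ (including $j=0$) can be treated by the estimates \eqref{019}--\eqref{028}. Your observation that the definition \eqref{qtilde} must be extended to $j=0$ is the only bookkeeping point and is indeed immediate from \eqref{037}.
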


\subsection{Estimates for $z$ away from the eigenvalues $e$}

\label{inftysect}

On the unbounded set $\G_\infty$  we use the relation \eqref{i0}. This will help to ensure that the integrand is decaying sufficiently quickly at infinity. Setting $\widetilde\phi=(L-\i)\phi$ and $\widetilde\psi = (L+\i)\psi$, 
\begin{eqnarray}
\int_{\G_\infty} \e^{\i tz} \scalprod{\phi}{\tfrac{d}{dz}R_z\psi} dz  &=&  \int_{\G_\infty} \e^{\i tz}\left\{ \frac{-2\scalprod{\widetilde\phi}{R_z\widetilde\psi}}{(z+\i)^3}  +\frac{\scalprod{\widetilde\phi}{\tfrac{d}{dz}R_z\widetilde\psi}}{(z+\i)^2}  \right\} dz \label{i4}\\
&&+\int_{\G_\infty} \e^{\i tz}\left\{  \frac{\scalprod{\phi}{\psi}}{(z+\i)^2} + \frac{2\scalprod{\phi}{\widetilde\psi}}{(z+\i)^3}     \right\} dz. \label{i5}
\end{eqnarray}
The integral \eqref{i5} is $\ple \|\phi\|\,\|L\psi\|$. The terms on the right side of \eqref{i4} involving $R_z$ and $\tfrac{d}{dz}R_z$ are estimated using Assumption (A2)(2). Thus
\begin{equation}
\label{i6}
\frac{1}{\i t}\frac{1}{2\pi\i}\int_{\G_\infty} \e^{\i t z}\scalprod{\phi}{\tfrac{d}{dz} R_z\psi} dz \ple \frac{C(\widetilde\phi,\widetilde\psi)}{t}. 
\end{equation}
We have proven the following result. 
\begin{prop}
\label{propaway}
We have 
\begin{equation}
\label{13'''}
J_\infty(t) \ple \frac{C((L-\i)\phi,(L+\i)\psi)}{t}.
\end{equation}
\end{prop}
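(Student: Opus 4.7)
The plan is to insert the resolvent identity \eqref{i0} into the integrand of $J_\infty(t)$, which manufactures explicit decaying factors $(z+\i)^{-k}$ with $k\ge 2$ at the price of shifting one or two powers of $L$ onto the test vectors $\phi$ and $\psi$. These negative powers of $(z+\i)$ are what render the integral over the unbounded contour $\G_\infty$ convergent; without them, the Limiting Absorption Principle alone controls only pointwise boundedness of the resolvent, not its integrability at infinity along the real axis.

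Concretely, I would substitute \eqref{i0} into $\scalprod{\phi}{\tfrac{d}{dz}R_z\psi}$ and also use the analogous representation obtained by taking the adjoint of \eqref{i0} with the roles of $\phi$ and $\psi$ exchanged. Differentiating in $z$ splits the integrand into two groups: first, terms of the form $(z+\i)^{-k}\scalprod{\widetilde\phi}{R_z\widetilde\psi}$ and $(z+\i)^{-k}\scalprod{\widetilde\phi}{\tfrac{d}{dz}R_z\widetilde\psi}$ with $k\ge 2$, where $\widetilde\phi=(L-\i)\phi$ and $\widetilde\psi=(L+\i)\psi$; and second, purely scalar terms of the form $(z+\i)^{-k}\scalprod{\phi}{\psi}$ and $(z+\i)^{-k}\scalprod{\phi}{\widetilde\psi}$. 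This is exactly the decomposition displayed in \eqref{i4}--\eqref{i5}.

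For the first group, Assumption (A2)(2) yields a uniform bound $|\scalprod{\widetilde\phi}{\tfrac{d^k}{dz^k}R_z\widetilde\psi}|\le C(\widetilde\phi,\widetilde\psi)$ valid for all $z\in\G_\infty\subset S_\infty$ (in particular, uniformly as $w\to 0_+$). The second group is trivial. Combining the uniform resolvent bound with the integrability $\int_{\mathbb R}|x+\i|^{-2}\,dx<\infty$ of the prefactors, and observing that $|\e^{\i tz}|=\e^{wt}$ along $\G_\infty$, gives
\[
\left|\int_{\G_\infty} \e^{\i tz}\scalprod{\phi}{\tfrac{d}{dz}R_z\psi}\,dz\right| \le C(\widetilde\phi,\widetilde\psi)\,\e^{wt}.
\]
Multiplying by the prefactor $1/(2\pi t)$ from \eqref{10} and letting $w\to 0_+$ (recall that $w>0$ in \eqref{4} is arbitrary) yields the desired bound \eqref{13'''}.

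The only mildly delicate point is to verify that (A2)(2) truly applies uniformly up to the real axis and that the operator identity \eqref{i0} is legitimate on the relevant domain. Both are immediate: \eqref{i0} is an algebraic rearrangement requiring only $\psi\in\dom(L^2)$, which is guaranteed by the hypothesis $L\psi\in\mathcal D\subset\dom(L)$ of Theorem \ref{nthm1}, and (A2)(2) is formulated precisely to cover $z\in{\mathbb C}_-$ arbitrarily close to $\mathbb R$. No contour manipulation or Cauchy-type computation is needed here, which is why the $\G_\infty$-estimate is, as the authors remark, considerably simpler than its $\G_e$-counterparts.
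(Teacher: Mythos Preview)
Your proposal is correct and follows essentially the same route as the paper: substitute the identity \eqref{i0} to produce the decomposition \eqref{i4}--\eqref{i5}, bound the resolvent pieces via Assumption (A2)(2), and use the integrability of $(z+\i)^{-2}$ along $\G_\infty$ to conclude. Your explicit tracking of the factor $\e^{wt}$ and the subsequent limit $w\to 0_+$ is in fact slightly more careful than the paper's inline argument, which defers this step to the very end of the proof of Theorem~\ref{nthm1}.
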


\subsection{Proof of Theorem \ref{nthm1}}

We combine the estimates in Propositions \ref{propzero}, \ref{prope} and \ref{propaway} to obtain the expansion \eqref{nmainreseqn} with the bound 
$$
|R(t)|\ple \frac{1+\e^{w t}/t}{t}
$$
for the remainder. Since $w>0$ is arbitrary, we have $|R(t)|\ple 1/t$. \hfill \qed

\appendix

\section{}
\label{appendix1}

\subsection{From the global to the local limiting absorption principles}

In this appendix, we derive the limiting absorption principle with projection $P_e$, $e\in{\rm spec}(L_\S)$, from that with projection $P_\R$ given in Theorem \ref{lapos}, \eqref{nlapos1}. Let 
$$
R_z^{P_e}= (P_e^\perp LP_e^\perp -z)^{-1}\upharpoonright_{{\rm Ran}P_e^\perp}
$$
and recall the definition of the energy gap $g$, \eqref{gapdef}. 
\begin{thm}
\label{lape}
Assume the conditions of Theorem \ref{lapos}. Then, for $\varphi,\psi\in\mathcal D$,
\begin{equation}
\label{lapeeqn}
\sup_{z\in{\mathbb C}_-\,:\, |z-e|\le g/2}| \tfrac{d^k}{dz^k} \scalprod{\phi}{R_z^{P_e}\psi}| = C(\phi,\psi)<\infty ,\quad k=0,\ldots, 3.
\end{equation}
Furthermore, \eqref{nlap1} holds.  Moreover, suppose that for $\phi,\psi\in{\mathcal D}$, $\sup_{z\in{\mathbb C}_-}|\frac{d^k}{d\d^k}\scalprod{\phi}{R_z^{P_\R}\psi}|\le C(\phi,\psi)$, $k=0,1$. Then, for all $e$,
\begin{equation}
\label{ddlocal}
\sup_{z\in{\mathbb C}_-}|\tfrac{d^k}{d\d^k}\scalprod{\phi}{R_z^{P_e}\psi}|\le C(\phi,\psi),\qquad k=0,1.
\end{equation}
\end{thm}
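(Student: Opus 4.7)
The strategy is a finite-rank Feshbach reduction from $R_z^{P_e}$ to the already-controlled $R_z^{P_\R}$. Observe that $P_e^\perp = P_\R^\perp + \widetilde P$, where $\widetilde P := P_\R - P_e = \sum_{e'\neq e} P_{e'}$ is a finite-rank orthogonal projection whose orthogonal complement inside $\mathrm{Ran}\, P_e^\perp$ is exactly $P_\R^\perp$. I would apply the Feshbach map \eqref{n3} to $P_e^\perp L P_e^\perp - z$ on $\mathrm{Ran}\, P_e^\perp$ with projection $\widetilde P$. Since $P_\R^\perp$ sits inside $P_e^\perp$, the ``away'' block resolvent appearing in the Feshbach formula coincides with $R_z^{P_\R}$; and since $\widetilde P$ is orthogonal to $P_\R^\perp$ we have $\widetilde P R_z^{P_\R}=0$, so $\widetilde P L R_z^{P_\R} = \Delta\, \widetilde P I R_z^{P_\R}$. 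A direct computation then yields
\begin{equation*}
\F(z) = \sum_{e'\neq e}(e'-z)\,P_{e'} + \Delta\,\widetilde P I \widetilde P - \Delta^2\, \widetilde P I R_z^{P_\R} I \widetilde P,
\end{equation*}
and the decomposition \eqref{0}--\eqref{0.1} produces $R_z^{P_e} = \F(z)^{-1} + \mathcal{B}(z) + R_z^{P_\R}$, with $\mathcal{B}(z)$ a polynomial expression in $\F(z)^{-1}$ and $R_z^{P_\R}$.

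Next I would establish uniform invertibility of $\F(z)$ for $z\in{\mathbb C}_-$ with $|z-e|\le g/2$. The leading term $\sum_{e'\neq e}(e'-z) P_{e'}$ is a finite-dimensional diagonal matrix with inverse of norm at most $2/g$, thanks to the gap \eqref{gapdef}, which forces $|e'-z| \ge g-g/2 = g/2$. The $\Delta$- and $\Delta^2$-corrections are bounded using \eqref{nlapos1} with $k=0$, so a Neumann series delivers $\|\F(z)^{-1}\|=O(1)$ uniformly in $z$ for $|\Delta|$ sufficiently small. Substituting back into $R_z^{P_e} = \F(z)^{-1} + \mathcal{B}(z) + R_z^{P_\R}$ and applying \eqref{nlapos1} again then controls $\langle\phi, R_z^{P_e}\psi\rangle$, proving the $k=0$ case of \eqref{lapeeqn}. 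For $k=1,2,3$ I would differentiate term-by-term: each $z$-derivative of $\F(z)^{-1}$ or $\mathcal{B}(z)$ produces factors of $\F(z)^{-1}$ (already $O(1)$) together with $z$-derivatives of $R_z^{P_\R}$ of order at most $k\le 3$, all bounded by \eqref{nlapos1}.

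To obtain \eqref{nlap1}, I would run the identical argument one level up: apply the Feshbach map directly to $L-z$ with projection $P_\R$. For $z\in S_\infty$, every eigenvalue $e$ of $L_\S$ satisfies $|\mathrm{Re}\,z - e|>g/2$, so $P_\R(L_0-z)P_\R$ is invertible on $\mathrm{Ran}\,P_\R$ with bounded inverse, and the same Neumann-series reasoning bounds $\langle\phi, R_z \psi\rangle$ and $\langle\phi, \tfrac{d}{dz}R_z \psi\rangle$. Finally, \eqref{ddlocal} follows by differentiating the representation of $R_z^{P_e}$ above in $\Delta$; the only new input required is the assumed boundedness of $\tfrac{d}{d\Delta}R_z^{P_\R}$. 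The main technical obstacle is simply bookkeeping the derivatives in the Neumann expansion of $\F(z)^{-1}$ and $\mathcal{B}(z)$ to verify that no more than three $z$-derivatives of $R_z^{P_\R}$ (respectively no more than one $\Delta$-derivative) ever appear, matching exactly the regularity provided by Theorem \ref{lapos}; beyond this the proof is a purely algebraic finite-dimensional manipulation on $\mathrm{Ran}\,\widetilde P$ (respectively $\mathrm{Ran}\,P_\R$).
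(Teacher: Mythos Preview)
Your proposal is correct and follows essentially the same route as the paper. The only cosmetic difference is that the paper first extends $P_e^\perp L P_e^\perp$ to the whole space via the auxiliary operator $K = P_e^\perp L P_e^\perp + \i P_e$ and then applies the Feshbach map with projection $P_\R$ (producing a Feshbach operator that is your $\F(z)$ plus the trivially invertible block $(\i - z)P_e$), whereas you apply the Feshbach map with projection $\widetilde P = P_\R - P_e$ directly on $\mathrm{Ran}\,P_e^\perp$; the two computations are equivalent and the remaining steps (gap estimate, Neumann series, term-by-term differentiation in $z$ and $\Delta$, and the parallel argument with $P_\R$ for \eqref{nlap1}) match.
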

\noindent
(Note: the constants $C(\phi,\psi)$ may differ from the one in \eqref{nlapos1}.)

{\em Proof.\ } We first show \eqref{lapeeqn} by expressing  $R^{P_e}_z$ in terms of $R^{P_\R}_z$ and then using the bound \eqref{nlapos1}.  To do so, consider the operator
 \begin{equation}
K= P_e^\perp L P_e^\perp+ \i P_e,
\label{b1}
\end{equation}
where $\i$ is the imaginary unit. We have $(K-z)^{-1}=(\i-z)^{-1}P_e\oplus R^{P_e}_z\,P_e^\perp$ and therefore
\begin{equation}
\label{b4}
R_z^{P_e} =  (K-z)^{-1} P_e^\perp= P_e^\perp (K-z)^{-1} P_e^\perp.
\end{equation}
Next, denote
\begin{equation}
\label{b6}
\F_z \equiv \F(K-z;P_\R).
\end{equation}
Then by \eqref{mmblock} (with $Q=P_\R\equiv \bbbone_\S\otimes|\Omega_\R\rangle\langle\Omega_\S|$)
\begin{equation}
\label{b7}
(K-z)^{-1} =
\left(
\begin{array}{cc}
\one & 0\\
-\d R_z^{P_\R} P_\R^\perp I\bar P_e & \one
\end{array}
\right)
\left(
\begin{array}{cc}
\F_z^{-1} & 0\\
0 & R^{P_\R}_z 
\end{array}
\right)
\left(
\begin{array}{cc}
\one & -\d \bar P_e I P_\R^\perp R_z^{P_\R}\\
0 & \one
\end{array}
\right),
\end{equation}
where we have set 
\begin{equation}
\label{dueto}
\bar P_e = \bbbone[L_\S\neq e]\,P_\R = P_e^\perp P_\R. 
\end{equation}
Due to \eqref{dueto} and Theorem \ref{lapos}, $\forall \psi\in{\mathcal D}$ the maps $z\mapsto \bar P_eI P_\R^\perp R_z^{P_\R}\psi$ and $z\mapsto \langle\psi| R_z^{P_\R}P_\R^\perp I\bar P_e$ are three times differentiable with bounded derivatives for $z\in\overline{\mathbb C}_-$.

Next, since $P_e^\perp P_\R^\perp=  P_\R^\perp$ and $P_\R^\perp P_e= 0$, we have $(P_\R^\perp KP_\R^\perp-z)^{-1}\upharpoonright_{{\rm Ran} P_\R^\perp} =R^{P_\R}_z$ and so
\begin{equation}
\label{MM-eq1}
\F_z = (\i-z)\,P_e \oplus \bar P_e\big( L_\S-z +\d \, I-
\d^2\, I R^{P_\R}_z I  \big)\,\bar P_e.
\end{equation}
Since $P_\R IP_\R\ple 1$ and $P_\R IR_z^{P_\R}IP_\R\ple 1$, we have 
\begin{equation}
\label{b8}
\F_z^{-1} = (\i-z)^{-1} P_e\oplus (L_\S-z)^{-1} \bar P_e \big(\bbbone +O(\d)\big)
\end{equation}
provided $z\in{\mathbb C}_-$ and $|{\rm Re}z-e|\le \frac12 g$. The remainder term in \eqref{b8} is uniform in these $z$ and thus, for $\d$ small enough, $\F_z^{-1} \ple 1$. We obtain $\tfrac{d}{dz}\mathfrak{F}_z^{-1}=-\mathfrak{F}_z^{-1}\big[\tfrac{d}{dz}\mathfrak{F}_z\big]\mathfrak{F}_z^{-1}\ple 1$ and, taking further $z$-derivatives,
\begin{equation}
\label{b9}
\tfrac{d^k}{dz^k}\F_z \ple 1,\qquad k=0,\ldots,3.
\end{equation}
Combining \eqref{b4}, \eqref{b7}, \eqref{b9} and the regularity (first three $z$-derivatives bounded for $z\in\overline{\mathbb C}_-$) of the matrices to the left and right in \eqref{b7} discussed above yields the result \eqref{lapeeqn}.

To prove \eqref{nlap1}, the limiting absorption principle away from the eigenvalues of $L_0$, we apply the Feshbach map with projection $P_\R=\bbbone_\S\otimes|\Omega_\R\rangle\langle\Omega_\R|$,
\begin{equation}
\label{i1}
\F(z)\equiv \F(L-z,P_\R) = P_\R(L_\S+\d I -z-\d^2 IR_z^{P_\R} I)P_\R.
\end{equation}
 According to the decomposition \eqref{0}, 
\begin{equation}
\label{i8}
\scalprod{\phi}{\textstyle\frac{d}{dz}R_z\psi}  = \scalprod{\phi}{\textstyle\frac{d}{dz}\F(z)^{-1}\psi} + \scalprod{\phi}{\textstyle\frac{d}{dz}{\mathcal B}(z)\psi}+ \scalprod{\phi}{\textstyle\frac{d}{dz}R^{P_\R}_z\psi},
\end{equation}
where 
\begin{equation}
{\mathcal B}(z) = -\d \F(z)^{-1} P_\R I R_z^{P_\R} - \d R_z^{P_\R} I P_\R\F(z)^{-1} +\d^2 R_z^{P_\R} I P_\R \F(z)^{-1} P_\R I R_z^{P_\R} .
\label{i9}
\end{equation}
For any $z\in S_\infty$, we have $\|(L_\S-z)^{-1}\|\le 1/\alpha$ and therefore
\begin{equation}
\label{i2}
\F(z)^{-1} = (L_\S-z)^{-1} \left[ \bbbone +\d P_\R(I-\d IR_z^{P_\R}I)P_\R(L_\S-z)^{-1}\right]^{-1}\ple 1.
\end{equation}
We then obtain at once from \eqref{i2}, \eqref{i9} and \eqref{nlapos1} that  
\begin{equation}
\label{i2'}
\tfrac{d}{dz}\F(z)^{-1} \ple 1, \quad \scalprod{\phi}{\tfrac{d}{dz} {\mathcal B}(z)\psi}\le  |\d|\,   C(\phi,\psi), \quad \scalprod{\phi}{\tfrac{d}{dz} R_z^{P_\R}\psi}\ple   C(\phi,\psi).
\end{equation}
The bounds \eqref{i2'} together with \eqref{i8} imply \eqref{nlap1}. 

Finally, we prove \eqref{ddlocal}. We use again relations \eqref{b4} and \eqref{b7} to express $R_z^{P_e}$ in terms of $R_z^{P_\R}$. Taking the $\d$-derivative in \eqref{b7} results in taking $\d$-derivatives of $R_z^{P_\R}$ and $\F_z^{-1}$. The first one is controlled by the assumption in Theorem \ref{lape}, the second one is controlled by \eqref{MM-eq1} (as for the $z$-derivatives above). \hfill \qed

\subsection{Proof of Theorem \ref{MK-Lem1}}

For $\eta>0$, we introduce the regularized Liouville operator (see \cite{KoMeSo})
\begin{equation}
\label{c5}
\L(\eta) = \L_0+i\eta N +\Delta I(\eta),\qquad \mbox{with}\qquad  I(\eta)=(2\pi)^{-1/2}\int_{\mathbb R} \widehat f(s) \tau_{\eta s}( I)d s,
\end{equation}
and where $\tau_t(X)=e^{-it  A}Xe^{it  A}$ with $A=d\Gamma(i\partial_u)$. Here, $f$ is a Schwartz function satisfying $f^{(k)}(0)=1$, $k=0,1,\ldots$ $\L(\eta)$ is a closed operator on $\dom(\L_0)\cap\dom(N)$.

The strategy of the proof is to derive estimates \eqref{eq-w17}, \eqref{MK-eq19} for $\L$ replaced by the regularized $\L(\eta)$, namely,
\begin{eqnarray}
\sup_{z\in{\mathbb C}_-} |\tfrac{d^{\mu-1}}{dz^{\mu-1}}\scalprod{\phi}{ R^{ P_\R}_z(\eta)\psi}|&
\ple &  \|\phi\|_{\mu,2\mu}\,\|\psi\|_{\mu,2\mu}\label{eq-w17regularized} \\
\sup_{z\in{\mathbb C}_-}|\tfrac{d}{d\d}\scalprod{\phi}{R^{ P_\R}_z(\eta)\psi}|& \ple & \|\phi\|_{3,1}\|\psi\|_{3,1}.
\label{MK-eq19regularized}
\end{eqnarray}
Here, $\ple$ means \eqref{defple} with a constant not depending on $\eta>0$. 
We have $\tfrac{d^{\mu-1}}{dz^{\mu-1}}\scalprod{\phi}{ R^{ P_\R}_z(\eta)\psi}\rightarrow\tfrac{d^{\mu-1}}{dz^{\mu-1}}\scalprod{\phi}{ R^{ P_\R}_z\psi}$ and  $\tfrac{d}{d\d}\scalprod{\phi}{R^{ P_\R}_z(\eta)\psi}\rightarrow \tfrac{d}{d\d}\scalprod{\phi}{R^{ P_\R}_z\psi}$ as  $\eta\rightarrow 0_+$ (this follows from the fact that $R_z(\eta)\rightarrow R_z(0)$ strongly as $\eta\rightarrow 0_+$, see \cite{KoMeSo}, Lemma 4.3). Therefore,  \eqref{eq-w17}, \eqref{MK-eq19} follow from \eqref{eq-w17regularized}, \eqref{MK-eq19regularized}.

{\em Throughout this proof, we will not indicate the dependence of $\L$ and $I$ on $\eta$ and $\d$. We will also simply write $P$ instead of $P_\R$.}  In particular, $R_z^P=(\bar\L(\eta)-z)^{-1}$, where $\bar X=P^\perp_\R XP^\perp_\R\upharpoonright_{{\rm Ran} P^\perp_\R}$.

Let $X_\eta$ be an $\eta$-dependent bounded operator in $\mathcal{B}(\bar P \mathcal H)$ which is sufficiently regular in $\eta>0$. We define $\mathrm{ad}_A(X_\eta)=[A,X_\eta]$ and  $\partial X_\eta=\tfrac{d}{d\eta}X_\eta-\mathrm{ad}_A(X_\eta)$. Note that $\partial \L_0=0$ (recall that we write $\L_0\equiv \L_0(\eta)$). Since $\partial$ is a derivation we have, according to Leibniz' rule,
\begin{equation}
\partial  R^{P}_z=- \d R^{P}_z( \partial I) R^{P}_z, 
\label{MK-eq5}
\end{equation}
and for $\mu\ge 1$,
\begin{equation}
\partial  [R^{ P}_z]^\mu =-\d \sum_{j=1}^\mu [ R^{P}_z]^j (\partial  I)  [ R^{P}_z]^{\mu-j+1}\label{MK-eq5a}.
\end{equation}
Equations \eqref{MK-eq5} and \eqref{MK-eq5a} hold as equalities of bounded operators (note that $\partial I$ is relatively $N$-bounded and ${\rm ran} R_z^P\subset\dom(N)$, see \cite{KoMeSo}, Lemma 4.3).
 It follows that 
\begin{eqnarray}
\label{MK-eq6}
\tfrac{d}{d\eta}\scalprod{\phi}{ [R^{ P}_z]^\mu\psi}&=&\scalprod{\phi}{\mathrm{ad}_A  [R^{ P}_z]^\mu\psi}
+\scalprod{\phi}{ \partial  [R^{ P}_z]^\mu \psi}\\ \label{MK-eq6a}
&=&\scalprod{A\phi}{ [R^{ P}_z]^\mu\psi}-\scalprod{[R^{ P}_z]^{*\mu}\phi}{ A\psi}
+\scalprod{\phi}{ \partial  [R^{ P}_z]^\mu \psi}.
\end{eqnarray}
We are going to establish bounds on the right hand side. 
\begin{prop}[\cite{KoMeSo}] {\bf (1)} Suppose the form factor $f_\beta$ (see \eqref{2.9'''}) satisfies $\partial_u^jf_\beta\in L^2({\mathbb R}\times S^2,du\times d\Sigma)$, for $j=0,\ldots,\ell+1$.
Then, for $\ell\ge 1$,
\begin{equation}
\|\bar{N}^{-1/2}( \partial I) \bar{N}^{-\ell/2}\| \ple \eta^{\ell}.
\label{eq-w1}
\end{equation}
{\bf (2)} We have
\begin{align}
\|\bar{N}^{1/2}R^P_z \bar{N}^{1/2}\|&\ple \eta^{-1},\label{eq-w2}\\
\|\bar{N}^{1/2}R^P_z \psi\|&\ple \eta^{-1/2}\,\|\bar{N}^{-1/2}(1+A^2)^{1/2}\psi\|,\label{eq-w3}\\
\|\bar{N}^{\ell/2}R^P_z \psi\|&\ple \eta^{-1} \|\bar{N}^{(\ell-2)/2}R^P_z \psi\|+\eta^{-1} \|\bar{N}^{(\ell-2)/2}\psi\|, \ \ \ \ell\ge 2.
\label{eq-w4}
\end{align}
\end{prop}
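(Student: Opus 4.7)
The plan leverages two pillars: the regularizer $\mathrm{i}\eta N$ renders $\bar{\mathcal L}(\eta)$ strictly dissipative on $\mathrm{Ran}\,\bar P$ (since $\bar N\ge 1$ there), giving a Mourre-style coercivity that drives part (2); and the Fourier smearing of $\tau_{\eta s}(I)$ against $\widehat f$, together with the normalization $f^{(k)}(0)=1$, is designed so that $\partial I(\eta)$ carries an $\eta^\ell$ factor proportional to $\partial_u^{\ell+1}f_\beta$, which drives part (1).

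For part (2), I would start from the quadratic-form identity, valid on $\mathrm{Ran}\,\bar P$ for $z\in\overline{{\mathbb C}}_-$,
\begin{equation*}
\mathrm{Im}\,\langle\psi,(\bar{\mathcal L}(\eta)-z)\psi\rangle \;=\; \eta\,\|\bar N^{1/2}\psi\|^2 \;+\; |\mathrm{Im}\,z|\,\|\psi\|^2,
\end{equation*}
which follows from self-adjointness of ${\mathcal L}_0$ and of $I(\eta)$ (the latter upon choosing $\widehat f$ real, i.e.\ $f(-x)=\overline{f(x)}$). Applying Cauchy--Schwarz after setting $\psi=R_z^P\bar N^{1/2}\phi$ immediately yields \eqref{eq-w2}. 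For \eqref{eq-w3}, the improvement from $\eta^{-1}$ to $\eta^{-1/2}$ comes from inserting weights $(1+A^2)^{\pm 1/2}$ into the pairing and invoking a Mourre-type LAP for $R_z^P$ grounded in the clean positive commutator $\mathrm{i}[\bar{\mathcal L}_0,A]=\bar N$ on $\mathrm{Ran}\,\bar P$ (from $[A,{\mathcal L}_\S]=0$ and $[A,L_\R]=\mathrm{i}N$), which persists under the perturbation $\Delta I(\eta)$ for $|\Delta|$ small. For \eqref{eq-w4}, I would commute $\bar N^{\ell/2}$ through $R_z^P$ via
\begin{equation*}
\bar N^{\ell/2}R_z^P \;=\; R_z^P\bar N^{\ell/2} \;+\; \Delta\, R_z^P\,[\bar N^{\ell/2},I(\eta)]\,R_z^P,
\end{equation*}
valid because $[{\mathcal L}_0,N]=0$. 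The commutator $[\bar N^{\ell/2},I(\eta)]$ is $O(\bar N^{(\ell-1)/2})$ by standard creation/annihilation estimates, and the outer $R_z^P$ contributes the $\eta^{-1}$ factor, producing the iterative bound.

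For part (1), I would compute $\partial I(\eta)$ directly from its definition using $\partial_\eta\tau_{\eta s}(I)=-\mathrm{i}s\,\mathrm{ad}_A\tau_{\eta s}(I)$ and $\mathrm{ad}_A\tau_{\eta s}(I)=\tau_{\eta s}(\mathrm{ad}_A I)$, obtaining an integral representation with weight $(1+\mathrm{i}s)\widehat f(s)$ against $\tau_{\eta s}(\mathrm{ad}_A I)$. Taylor-expanding $\tau_{\eta s}(\mathrm{ad}_A I)$ in $\eta$ to order $\ell$ and using the moment identities $(2\pi)^{-1/2}\int s^k\widehat f(s)\,ds = (-\mathrm{i})^k$ implied by $f^{(k)}(0)=1$, the low-order polynomial contributions regroup into a Taylor remainder of order $\eta^\ell$ whose operator content is $\mathrm{ad}_A^{\ell+1}I$---a field or Weyl operator smeared by $\partial_u^{\ell+1}f_\beta$. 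Standard creation/annihilation bounds $\|\bar N^{-1/2}a^\#(g)\bar N^{-1/2}\|\ple\|g\|_{L^2}$ then close the estimate, exploiting that additional $\bar N^{-1/2}$ factors on $\mathrm{Ran}\,\bar P$ are bounded by one.

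The main obstacle is \eqref{eq-w3}: refining the naive $\eta^{-1}$ rate to $\eta^{-1/2}$ hinges on a Mourre-type positive commutator estimate $\mathrm{i}[\bar{\mathcal L}(\eta),A]\ge c\bar N$ that must survive the $\Delta I(\eta)$ perturbation, which is delicate since $[A,I(\eta)]$ is only relatively $\bar N^{1/2}$-bounded and the analysis must be uniform in $\eta$. A secondary difficulty is the bookkeeping in (1): the interplay of the weight $(1+\mathrm{i}s)\widehat f(s)$ with the Taylor coefficients of $\tau_{\eta s}(\mathrm{ad}_A I)$ must be arranged to yield the sharp $\eta^\ell$, which requires exploiting the full strength of $f^{(k)}(0)=1$ for all $k$, and the integral Taylor remainder must be bounded uniformly in $s$ against the Schwartz decay of $\widehat f$.
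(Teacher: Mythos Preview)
The paper does not prove this proposition at all: it is quoted from \cite{KoMeSo} (as the citation in the proposition header indicates) and then used as an input in the proof of Theorem~\ref{MK-Lem1}. There is therefore no in-paper argument for you to be compared against.

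That said, your outline is faithful to the mechanisms in \cite{KoMeSo}. The dissipativity identity you wrote down is exactly what gives \eqref{eq-w2}; the sharpening to \eqref{eq-w3} does rest on the positive commutator $\mathrm{i}[L_\R,A]=N$ together with perturbative control of $[\bar A,\bar I(\eta)]$, and you are right that this is the delicate step. For part~(1), the Taylor-remainder mechanism you describe --- the moment identities coming from $f^{(k)}(0)=1$ make all polynomial-in-$\eta$ contributions to $\partial I(\eta)$ cancel, leaving an $\eta^\ell$-sized term built from $\mathrm{ad}_A^{\ell+1}I$ --- is precisely why the normalization of $f$ is chosen this way.

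One genuine discrepancy: your commutation argument for \eqref{eq-w4} drops the $\bar N$-exponent by one per step and produces $\|\bar N^{\ell/2}\psi\|$ on the right, whereas the stated bound drops by two and has $\|\bar N^{(\ell-2)/2}\psi\|$. The missing ingredient is the \emph{norm} (not form) dissipativity $\|\bar N\chi\|\ple\eta^{-1}\|(\bar{\mathcal L}(\eta)-z)\chi\|$, obtained by expanding $\|(\bar{\mathcal L}(\eta)-z)\chi\|^2$ with real and imaginary parts and absorbing the commutator $[\bar I(\eta),\bar N]$ for small $|\Delta|$. Applying this with $\chi=\bar N^{(\ell-2)/2}R_z^P\psi$ and then commuting $\bar N^{(\ell-2)/2}$ through $(\bar{\mathcal L}(\eta)-z)$ yields \eqref{eq-w4} directly. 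Your half-power route would still suffice after iteration, but with worse bookkeeping.
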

In the arguments below in this proof, the biggest value of $\ell$ in \eqref{eq-w1} we will use is $\ell =2\mu$. Hence the regularity condition on $f_\beta$ in the Theorem \ref{MK-Lem1}. Combining \eqref{eq-w2} with \eqref{eq-w3} we get for $j=1,2,\ldots$
\begin{equation}\label{eq-w5}
\|\bar{N}^{1/2}[R^P_z]^j \psi\|\ple \eta^{-j+1/2}\|\bar{N}^{-1/2}(1+A^2)^{1/2}\psi\|.
\end{equation}
Moreover, from \eqref{eq-w4} we obtain for $j,\ell=1,2,\ldots$
\begin{equation}
\|\bar{N}^{\ell/2}[R^P_z]^j \psi\|\ple \eta^{-\left\lfloor\ell/2\right\rfloor}\sum_{k=1}^j\|\bar{N}^{1/2}[R^P_z]^k \psi\|
+\sum_{k=1}^{\left\lfloor\ell/2\right\rfloor}\eta^{-k} \|\bar{N}^{ (\ell-2k)/2}\psi\|.
\end{equation}
Since $\bar N\ge 1$ and because of \eqref{eq-w5} we obtain for $j,\ell=0,1,\ldots$
\begin{align}\label{eq-w7}
\|\bar{N}^{\ell/2}[R^P_z]^j \psi\|&\ple \eta^{-\left\lfloor\ell/2\right\rfloor-j+1/2}
\|\bar{N}^{-1/2}(1+A^2)^{1/2}\psi\|
+\eta^{-\left\lfloor\ell/2\right\rfloor} \|\bar{N}^{\ell/2}\psi\|.
\end{align}
From \eqref{MK-eq5a} and \eqref{eq-w1} follows that 
\begin{equation}\label{eq-w8}
|\scalprod{\phi}{\partial  [R^{ P}_z]^\mu\psi}|
\ple\eta^\ell \sum_{j=1}^\mu \|\bar{N}^{1/2} [ R^{P}_z]^{*j}\phi\| \|N^{\ell/2}  [ R^{P}_z]^{\mu-j+1}\psi\|.
\end{equation}
Since \eqref{eq-w5} holds with $[R^{ P}_z]$ replaced by $[R^{ P}_z]^*$,
we may apply  \eqref{eq-w5} and \eqref{eq-w7} to \eqref{eq-w8}
and obtain
\begin{eqnarray}\label{eq-w9}
|\scalprod{\phi}{\partial  [R^{ P}_z]^\mu\psi}|&\ple& \eta^\ell \sum_{j=1}^\mu 
\eta^{-j+1/2}\|\bar{N}^{-1/2}(1+A^2)^{1/2}\phi\|\\
\nonumber
&&\times 
\Big( \eta^{-\left\lfloor\ell/2\right\rfloor-\mu+j-1/2}
\|\bar{N}^{-1/2}(1+A^2)^{1/2}\psi\|
+\eta^{-\left\lfloor\ell/2\right\rfloor} \|\bar{N}^{\ell/2}\psi\|\Big)\\
\nonumber
&\ple&\eta^{\left\lceil \ell/2\right\rceil-\mu} \|\phi\|_{1,\ell}\,\|\psi\|_{1,\ell}.
\end{eqnarray}
By \eqref{MK-eq6a} we have
\begin{equation}
\label{eq-w14}
\tfrac{d}{d\eta} \scalprod{\phi}{ [R^{ P}_z]^\mu\psi}
=\scalprod{A\phi}{ [R^{ P}_z]^\mu\psi}-\scalprod{[R^{ P}_z]^{*\mu}\phi}{ A\psi}
+\scalprod{\phi}{ \partial  [R^{ P}_z]^\mu \psi}
\end{equation}
and hence 
\begin{eqnarray}
|\tfrac{d}{d\eta} \scalprod{\phi}{ [R^{ P}_z]^\mu\psi}|
& \ple&   \|\bar{N}^{-1/2}A\phi\|\,\|\bar{N}^{1/2}[R^{ P}_z]^\mu\psi\|+
\|\bar{N}^{1/2}[R^{ P}_z]^{*\mu}\phi\|\,\|\bar{N}^{-1/2}A\psi\|\\
\nonumber
&&+
\|\phi\|_{1,2\mu}\,\|\psi\|_{1,2\mu}.
\end{eqnarray}
By \eqref{eq-w5} we easily get 
\begin{equation}
\label{eq-w11}
|\tfrac{d}{d\eta} \scalprod{\phi}{ [R^{ P}_z]^\mu\psi}|
 \ple  \eta^{-\mu+1/2}\|\phi\|_{1,2\mu}\,\| \psi\|_{1,2\mu}.
\end{equation}
Since $\scalprod{\phi}{ [R^{ P}_z]^\mu\psi}=\scalprod{\phi}{ [R^{ P}_z(\eta=1)]^\mu\psi}
-\int_\eta^1\tfrac{d}{d\eta'} \scalprod{\phi}{ [R^{ P}_z(\eta')]^\mu\psi}d\eta'$, 
we conclude from \eqref{eq-w11} and from \eqref{eq-w5} for $\eta=1$ that
\begin{equation}
\label{eq-w12}
| \scalprod{\phi}{ [R^{ P}_z]^\mu\psi}|
 \ple (1+ \eta^{-\mu+3/2}) \|\phi\|_{1,2\mu}\,\|\psi\|_{1,2\mu}.
\end{equation}
Let us now consider again \eqref{eq-w14}, but instead of using \eqref{eq-w5} for an upper bound apply \eqref{eq-w12}
to $\scalprod{A\phi}{ [R^{ P}_z]^\mu\psi}$ and $\scalprod{\phi}{ [R^{ P}_z]^\mu A\psi}$. In this way we get for \eqref{eq-w14} the upper bound 
\begin{equation}
\label{eq-w15}
|\tfrac{d}{d\eta} \scalprod{\phi}{ [R^{ P}_z]^\mu\psi}|
 \ple (1+ \eta^{-\mu+3/2}) \|\phi\|_{2,2\mu}\,\|\psi\|_{2,2\mu}.
\end{equation}
Now we integrate (as before \eqref{eq-w12}) and obtain
\begin{equation}
\label{eq-w16}
| \scalprod{\phi}{ [R^{ P}_z]^\mu\psi}|
  \ple (1+ \eta^{-\mu+5/2}) \|\phi\|_{2,2\mu}\,\|\psi\|_{2,2\mu}.
\end{equation}
The right side of \eqref{eq-w16} has a power of $\eta$ reduced by one, as compared to the right side of \eqref{eq-w12}, 
but $\phi$ and $\psi$ contribute in a stronger norm. We continue this procedure to conclude the proof of \eqref{eq-w17regularized} as well as the bound
\begin{equation}
\sup_{z\in{\mathbb C}_-}| \tfrac{d}{d\eta}\tfrac{d^{\mu-1}}{dz^{\mu-1}}\scalprod{\phi}{ R^{ P_\R}_z\psi}|
  \ple  \|\phi\|_{\mu+1,2\mu}\,\|\psi\|_{\mu+1,2\mu}.
  \label{eq-w18}
\end{equation}
We now prove the bound \eqref{MK-eq19regularized}. Note that 
\begin{equation}
\label{ddelta}
\tfrac{d}{d\d}\scalprod{\phi}{R^{ P}_z\psi}=- \scalprod{\phi}{[R^{ P}_z I R^{ P}_z]\psi}.
\end{equation}
We have (see \eqref{MK-eq5})
\begin{equation}
\partial  [R^{ P}_z I R^{ P}_z]=-\d\,R^{ P}_z (\partial I) R^{ P}_z I R^{ P}_z+R^{ P}_z (\partial I)  R^{ P}_z
-\d\,R^{ P}_z I R^{ P}_z (\partial I) R^{ P}_z .
\end{equation}
Next, we use \eqref{eq-w1},\eqref{eq-w2} and $\|I\|\le 1$ to get
\begin{eqnarray}
|\scalprod{\phi}{\partial  [R^{ P}_z I R^{ P}_z]\psi}|
&\ple& \eta^{\ell-1}\|\bar{N}^{\ell/2}[R^{ P}_z]^*\phi\|\,\|R^{ P}_z\psi\|
+\eta^{\ell}\|\bar{N}^{\ell/2}[R^{ P}_z]^*\phi\|\,\|\bar{N}^{1/2}R^{ P}_z\psi\|\\
&&+\eta^{\ell-1}\|[R^{ P}_z]^*\phi\|\,\|\bar{N}^{\ell/2}R^{ P}_z\psi\|.
\end{eqnarray}
By \eqref{eq-w7} we have
\begin{equation}
\label{eq-w17'}
|\scalprod{\phi}{\partial  [R^{ P}_z I R^{ P}_z]\psi}|
\ple \eta^{\ell-1} \eta^{-\left\lfloor\ell/2\right\rfloor} \|\phi\|_{1,\ell}\|\psi\|_{1,\ell}.
\end{equation}
Recall that
\begin{equation}
\label{eqw-19}
\tfrac{d}{d\eta}\scalprod{\phi}{[R^{ P}_z I R^{ P}_z]\psi}=\scalprod{A\phi}{[R^{ P}_z I R^{ P}_z]\psi}-\scalprod{[R^{ P}_z I R^{ P}_z]^*\phi}{ A\psi}
+\scalprod{\phi}{\partial  [R^{ P}_z I R^{ P}_z]\psi}.
\end{equation}
Due to \eqref{eq-w3}, the first and the second expression on the right side
are bounded above by a constant times $ \eta^{-1} \|\phi\|_{2,0}\|\psi\|_{2,0}$. Thus we get from \eqref{eq-w17'}, \eqref{eqw-19}
\begin{equation}
|\tfrac{d}{d\eta}\scalprod{\phi}{[R^{ P}_z I R^{ P}_z]\psi}|\ple \eta^{-1} \|\phi\|_{2,1}\|\psi\|_{2,1}.
\end{equation}
We integrate from $\eta$ to 1 to obtain the estimate
\begin{equation}
\label{eq-w18'}
|\scalprod{\phi}{[R^{ P}_z I R^{ P}_z]\psi}|\ple |\ln(\eta)| \|\phi\|_{2,1}\|\psi\|_{2,1}.
\end{equation}
Next, we use \eqref{eq-w18'} in \eqref{eqw-19}  to get the better upper bound
\begin{equation}
|\tfrac{d}{d\eta}\scalprod{\phi}{[R^{ P}_z I R^{ P}_z]\psi}|\ple |\ln(\eta)| \|\phi\|_{3,1}\|\psi\|_{3,1}.
\end{equation}
Integration from $\eta$ to $1$ yields $|\scalprod{\phi}{[R^{ P}_z I R^{ P}_z]\psi}|\ple \|\phi\|_{3,1}\|\psi\|_{3,1}$, which, combined with \eqref{ddelta}, implies \eqref{MK-eq19regularized}. This completes the proof of Theorem \ref{MK-Lem1}. \hfill $\qed$

\section{}
\label{FeshbachAppendix}

Let $Q$ be an orthogonal projection and let $Q^\perp=\one-Q$. Let $H$ be a densely defined, closed operator satisfying ${\rm Ran} Q\subset \dom(H)$. Let $\bar{H}^Q$ be the operator on ${\rm Ran} Q^\perp$, given by $Q^\perp H Q^\perp$. For $z\not \in \sigma(\bar{H}^Q)$ we set $R^Q_z=(\bar{H}^Q-z)^{-1}$ and we assume that $\| R^Q_zHQ\|<\infty$ and $\|QHR^Q_z\|<\infty$. Then we define
\begin{equation}
\label{mm202}
\F_z\equiv \F(H-z;Q)=Q \big( H - z - H Q^\perp  R^Q_z Q^\perp H\big) Q.
\end{equation}

\begin{thm}[Feshbach-theorem, \cite{BFS2}]
\label{MM-App1}
Let $z\not \in \sigma(\bar{H}^Q)$.
\begin{enumerate}
\item We have
\begin{align*}
z\in \sigma(H)&\Leftrightarrow 0\in \sigma(\F_z)\\
z\in \sigma_p(H)&\Leftrightarrow \{0\}\not=\ker\F_z
\end{align*}
If $z\not \in \sigma(H)$ we have for $R_z=(H-z)^{-1}$ that
\begin{gather*}
QR_zQ=\F_z^{-1},\qquad Q^{\perp}R_zQ= -R^Q_z \,Q^{\perp} H Q\,\F_z^{-1},\qquad QR_zQ^{\perp}= -\F_z^{-1} \,Q\,H\, Q^{\perp}\,R^Q_z,\\
Q^\perp R_z Q^\perp = R^Q_z + R^Q_z \,Q^\perp\, H\, Q \F_z^{-1}Q\, H\, Q^\perp  R^Q_z.
\end{gather*}
\item If $Q$ has finite rank, we have the additional characterization 
\begin{equation*}
z\in \sigma(H)\Leftrightarrow z\in \sigma_p(H) \Leftrightarrow \det(\F_z)=0.
\end{equation*}
\item Let $\mathcal{H}_E$ be the eigenspace of $H$ for $E\in \sigma_p(H)$. The
restricted projection $Q|_{\mathcal{H}_E}$ is a bijection from $\mathcal{H}_E$
to $\ker \F_E$. Its inverse is
\begin{equation*}
\phi \mapsto \phi\oplus -R^Q_z Q^\perp H Q \phi.
\end{equation*}
\end{enumerate}
\end{thm}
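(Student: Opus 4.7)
\medskip
\noindent
\textbf{Proof plan.} The approach is to treat $H-z$ as a $2\times 2$ block operator in the decomposition $\mathcal H = \mathrm{Ran}\, Q \oplus \mathrm{Ran}\, Q^\perp$ and to perform a block LDU (Schur) factorization in which the diagonal blocks are $\F_z$ and $\bar H^Q -z$. All three conclusions then follow from a single algebraic identity, once the domain issues are verified.

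First I would establish, on the natural domain, the factorization
\begin{equation*}
H-z \;=\; \begin{pmatrix} \one & QHQ^\perp R_z^Q \\ 0 & \one \end{pmatrix}\begin{pmatrix} \F_z & 0 \\ 0 & \bar H^Q - z \end{pmatrix}\begin{pmatrix} \one & 0 \\ R_z^Q Q^\perp H Q & \one \end{pmatrix},
\end{equation*}
by direct multiplication, using $\mathrm{Ran}\, Q\subset \dom(H)$ together with the hypotheses $\|R_z^Q HQ\|<\infty$ and $\|QHR_z^Q\|<\infty$ to ensure that the off-diagonal operators are bounded and compose as required. The outer triangular factors are bounded bijections of $\mathcal H$ (inverses obtained by flipping signs on the off-diagonal), and the middle factor is invertible iff $\F_z$ is, since $\bar H^Q - z$ is invertible by assumption. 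This immediately yields the equivalence $z\notin \sigma(H)\Leftrightarrow 0\notin\sigma(\F_z)$. The four resolvent block identities in part (1) are then read off by inverting the product blockwise and comparing against $R_z=(H-z)^{-1}$. The analogous equivalence for the point spectrum follows by the same argument: the outer triangular factors are bijections of $\mathcal H$, hence any kernel of $H-z$ is isomorphic to the kernel of the middle factor, which is entirely carried by $\mathrm{Ran}\, Q$ and equals $\ker \F_z$.

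For part (3), given $\phi\in \mathcal H_E$, split $\phi=\phi_1\oplus \phi_2$ with $\phi_1 = Q\phi$ and $\phi_2= Q^\perp \phi$. The $Q^\perp$-component of $(H-E)\phi=0$ reads $Q^\perp HQ\phi_1 + (\bar H^Q-E)\phi_2 = 0$, which gives $\phi_2 = -R_E^Q Q^\perp H Q\phi_1$; inserting this in the $Q$-component yields $\F_E\phi_1 = 0$. Thus $Q|_{\mathcal H_E}$ maps into $\ker \F_E$. The proposed inverse $\psi\mapsto \psi - R_E^Q Q^\perp HQ\psi$ is checked to land in $\mathcal H_E$ by direct substitution using the block identity above (the $Q^\perp$-row vanishes by construction, the $Q$-row by $\F_E\psi=0$); the two maps are mutual inverses, so this is the claimed bijection. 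Part (2) is immediate from part (1): when $Q$ has finite rank, $\F_z$ is a finite-dimensional operator on $\mathrm{Ran}\, Q$, so $0\in\sigma(\F_z)\Leftrightarrow 0\in \sigma_p(\F_z)\Leftrightarrow \det\F_z=0$, and the equivalences from part (1) then force $\sigma(H)$ and $\sigma_p(H)$ to coincide at such $z$.

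The main technical obstacle is careful bookkeeping of domains in the block factorization. I have to confirm that $R_z^Q$ maps $\mathrm{Ran}\, Q^\perp$ into $\dom(H)\cap\mathrm{Ran}\, Q^\perp$ so that the compositions $QHQ^\perp R_z^Q$ and $R_z^Q Q^\perp H Q$ extend to bounded operators on $\mathcal H$, and that every application of the middle block-diagonal factor in the LDU identity preserves $\dom(H)$. These ingredients are precisely what the boundedness hypotheses $\|R_z^Q HQ\|,\|QHR_z^Q\|<\infty$ supply; once they are in place, the remainder is essentially finite-dimensional-style block algebra.
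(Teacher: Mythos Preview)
Your approach is correct and is in fact precisely the route the paper sketches: immediately after stating the theorem the paper records the very block LDU factorization you write down (their display \eqref{mmblock}--\eqref{mmblock2}), though it defers the actual proof to \cite{BFS2} rather than carrying it out. Your derivation of parts (1)--(3) from that factorization is the standard one and matches what is implicit in the paper's block formulas.
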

The previous theorem is proven in \cite{BFS2}.  Point (1) of the theorem gives the block representation
\begin{eqnarray}
\label{mmblock}
R_z &=& 
\left(
\begin{array}{cc}
\F_z^{-1} & -\F_z^{-1} QH Q^\perp R_z^Q\\
-R_z^Q Q^\perp HQ \F_z^{-1} & R^Q_z +R^Q_z Q^\perp HQ\F_z^{-1} QHQ^\perp R_z^Q
\end{array}
\right)\\
&=&
\left(
\begin{array}{cc}
\one & 0\\
-R_z^Q Q^\perp HQ & \one
\end{array}
\right)
\left(
\begin{array}{cc}
\F_z^{-1} & 0\\
0 & R^Q_z 
\end{array}
\right)
\left(
\begin{array}{cc}
\one & -QH Q^\perp R_z^Q\\
0 & \one
\end{array}
\right).
\nonumber
\end{eqnarray}
Note that the three matrices on the last line are both invertible. Thus (recall \eqref{mm202})
\begin{equation}
\label{mmblock2}
H-z=\left(
\begin{array}{cc}
\one & QH Q^\perp R_z^Q\\
0 & \one
\end{array}
\right)
\left(
\begin{array}{cc}
\F(H-z;Q) & 0\\
0 & Q^\perp (H-z)Q^\perp
\end{array}
\right)
\left(
\begin{array}{cc}
\one & 0\\
R_z^Q Q^\perp HQ & \one
\end{array}
\right).
\nonumber
\end{equation}
{\em Remark.\ } If $Q'$ is another projection, satisfying $Q'Q=Q'=QQ'$, then
\begin{equation}
\label{mm204}
\F\big(\F(H-z;Q);Q'\big) = \F(H-z;Q').
\end{equation}
Indeed, from point (1) in the theorem, we know that 
$$
[\F(\F(H-z;Q);Q')]^{-1} = Q'[\F(H-z;Q)]^{-1}Q' = Q'Q R_z QQ' = Q' R_zQ' = [\F(H-z;Q')]^{-1},
$$
which implies \eqref{mm204}.

\begin{prop}[Weak Feshbach Theorem]
\label{Feshprop}
Assume that $H$ is self-adjoint,  $E\in \RR\setminus \sigma_p(\bar{H}^Q)$, and that
$\lim_{\epsilon\to 0_+}\F(H-E+\i\epsilon;Q)\equiv \F(H-E;Q)$ exists (as a weak limit). If $E$ is an eigenvalue of $H$ with eigenvector $\psi$, then $0$ is an eigenvalue of $\F(H-E;Q)$ with eigenvector $Q\psi$. 
\end{prop}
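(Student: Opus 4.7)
The plan is to approximate $E$ by $z_\epsilon = E - \i\epsilon$ with $\epsilon \to 0_+$. For $\epsilon > 0$ both $R_{z_\epsilon}$ and $R^Q_{z_\epsilon}$ are bounded, since $H$ and $\bar H^Q$ are self-adjoint, so the block formulas of Theorem \ref{MM-App1}(1) apply and $\F_{z_\epsilon}$ is invertible. The eigenvalue equation rewrites as $(H-z_\epsilon)\psi = \i\epsilon\psi$, hence $\psi = \i\epsilon R_{z_\epsilon}\psi$. Applying $Q$ on the left and substituting $QR_{z_\epsilon}Q = \F_{z_\epsilon}^{-1}$ and $QR_{z_\epsilon}Q^\perp = -\F_{z_\epsilon}^{-1}QHQ^\perp R^Q_{z_\epsilon}$, then multiplying through by $\F_{z_\epsilon}$, I get the key identity
\begin{equation*}
\F_{z_\epsilon} Q\psi = \i\epsilon\, Q\psi \;-\; \i\epsilon\, QHQ^\perp R^Q_{z_\epsilon} Q^\perp \psi.
\end{equation*}
By hypothesis the left side converges weakly to $\F(H-E;Q)Q\psi$, and the first term on the right tends to zero in norm, so the whole task is to show the last term tends to zero weakly.

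The crucial input is the norm convergence $\|\i\epsilon R^Q_{z_\epsilon}Q^\perp\psi\| \to 0$. I would obtain it from the spectral theorem for the self-adjoint $\bar H^Q$: this squared norm equals
\begin{equation*}
\int_{\mathbb R}\frac{\epsilon^2}{(\lambda-E)^2+\epsilon^2}\, d\mu_{Q^\perp\psi}(\lambda),
\end{equation*}
where $\mu_{Q^\perp\psi}$ is the spectral measure of $\bar H^Q$ at $Q^\perp\psi$. The integrand is bounded by $1$ and converges pointwise to $\one_{\{E\}}(\lambda)$, so dominated convergence gives the limit $\mu_{Q^\perp\psi}(\{E\})$, which vanishes because $E \notin \sigma_p(\bar H^Q)$. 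This is the one place where the point-spectrum hypothesis enters decisively, and it is the main technical step.

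With this in hand, I would test the identity against any $\phi$ in the dense domain of $Q^\perp HQ$, writing $\scalprod{\phi}{QHQ^\perp R^Q_{z_\epsilon}Q^\perp\psi} = \scalprod{Q^\perp HQ\,\phi}{R^Q_{z_\epsilon}Q^\perp\psi}$. Cauchy--Schwarz together with the bound above then forces the right-hand side to go to zero, so $\scalprod{\phi}{\F(H-E;Q)Q\psi} = 0$ on a dense set and hence $\F(H-E;Q)Q\psi = 0$. Finally, $Q\psi \neq 0$: otherwise $\psi$ would lie in ${\rm Ran}(Q^\perp)$ and satisfy $\bar H^Q\psi = E\psi$, once again contradicting $E \notin \sigma_p(\bar H^Q)$. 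This gives the claimed eigenvector $Q\psi$ of $\F(H-E;Q)$ at eigenvalue $0$.
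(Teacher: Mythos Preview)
Your proof is correct and follows essentially the same approach as the paper: both arrive at the identity $\F_{z_\epsilon}Q\psi = \i\epsilon\, Q\psi - \i\epsilon\, QHQ^\perp R^Q_{z_\epsilon}Q^\perp\psi$ and then pass to the limit using the spectral theorem for $\bar H^Q$ (the paper phrases this as $\i\epsilon R^Q_{E-\i\epsilon}Q^\perp\psi \to \one[\bar H^Q=E]Q^\perp\psi = 0$). The only cosmetic difference is that you obtain the identity by invoking the block resolvent formulas of Theorem~\ref{MM-App1}(1), whereas the paper derives it directly by projecting the eigenvalue equation with $Q$ and $Q^\perp$; note also that since ${\rm Ran}\,Q\subset\dom(H)$, your test vectors $\phi$ range over all of ${\rm Ran}\,Q$, not merely a dense subset.
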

\begin{proof}
Applying the projections $Q$ and $Q^\perp$ to the eigenvalue equation, we get for any $\epsilon>0$
\begin{align}\label{m-12}
Q(H-E+\imath \epsilon)Q\psi + QHQ^\perp \psi&= \imath \epsilon\,Q\psi\\ \nonumber
Q^\perp \,H Q\psi + Q^\perp(H-E+\imath \epsilon)Q^\perp \psi&= \imath \epsilon \,Q^\perp \psi.
\end{align}
Applying the resolvent $R^Q_{E-\i\epsilon}$ to the second equality in \eqref{m-12} gives
\begin{equation}\label{m-11}
Q^\perp\psi= - R^Q_{E-\imath \epsilon}\, Q^\perp\,H\, Q\psi+ \imath \epsilon R^Q_{E-\imath \epsilon}Q^\perp \psi.
\end{equation}
Using \eqref{m-11} in the first equation of \eqref{m-12}, then letting $\epsilon\rightarrow 0_+$ and taking into account that $\imath \epsilon R^Q_{E-\imath \epsilon}Q^\perp \psi\to \one [\bar H^Q=E]\,Q^\perp \psi=0$, gives $\F(H-E;Q)\psi=0$. (Note that $Q\psi\neq 0$ for otherwise, one easily obtains from the second equality in \eqref{m-12} that $Q^\perp\psi=0$ as well.)
\end{proof}

\bigskip
\bigskip

\noindent
{\bf Acknowledgement.\ } This work has been supported by an NSERC Discovery Grant and an NSERC Discovery Grant Accelerator.

\end{document}